\newcolumntype{C}[1]{>{\centering\let\newline\\\arraybackslash\hspace{0pt}}m{#1}}
\title{Spherical Conformal Parameterization of Genus-0 Point Clouds for Meshing}
\author{Gary Pui-Tung Choi, Kin Tat Ho and Lok Ming Lui}
\begin{document}

\maketitle

\begin{abstract}
Point cloud is the most fundamental representation of 3D geometric objects. Analyzing and processing point cloud surfaces is important in computer graphics and computer vision. However, most of the existing algorithms for surface analysis require connectivity information. Therefore, it is desirable to develop a mesh structure on point clouds. This task can be simplified with the aid of a parameterization. In particular, conformal parameterizations are advantageous in preserving the geometric information of the point cloud data. In this paper, we extend a state-of-the-art spherical conformal parameterization algorithm for genus-0 closed meshes to the case of point clouds, using an improved approximation of the Laplace-Beltrami operator on data points. Then, we propose an iterative scheme called the North-South reiteration for achieving a spherical conformal parameterization. A balancing scheme is introduced to enhance the distribution of the spherical parameterization. High quality triangulations and quadrangulations can then be built on the point clouds with the aid of the parameterizations. Also, the meshes generated are guaranteed to be genus-0 closed meshes. Moreover, using our proposed spherical conformal parameterization, multilevel representations of point clouds can be easily constructed. Experimental results demonstrate the effectiveness of our proposed framework.
\end{abstract}

\begin{keywords}
Mesh generation, Triangulation, Quadrangulation, Spherical conformal parameterization, Surface reconstruction, Point cloud, Multilevel representation
\end{keywords}


\pagestyle{myheadings}
\thispagestyle{plain}
\markboth{Choi, Ho and Lui}{Spherical Conformal Parameterization for Meshing}

\section{Introduction}
\label{intro}
Contemporary scanning technologies enable efficient acquisitions of 3D objects. Using modern 3D scanners, data points are sampled from the surfaces of 3D objects for further analyses and usages. Point clouds are widely applied in computer graphics, vision and many other engineering fields. However, the data points acquired by laser scanners are often complex and unorganized. Moreover, the absence of the connectivity information in point cloud data poses difficulties in understanding the underlying geometry of the 3D objects. This largely hinders the applications of the data. For instance, many applications in 3D printing \cite{Rengier10,Lipson13} and texture mapping \cite{Sander01,Levy01} are built upon mesh structures. With the rapid development of the computer industry, finding a high quality meshing framework for point cloud data is increasingly important.

One possible approach for mesh generation on point clouds is to parameterize a point cloud to a simpler domain with the corresponding genus, such as the unit sphere for genus-0 point clouds. Then, a triangulation or a quadrangulation can be created on the parameter domain instead of the original complicated point cloud. Finally, a mesh structure on the point cloud can be defined with respect to the structure on the parameter domain. The major difficulty of computing parameterizations of point-set surfaces is the extremely limited information they can provide. Most of the existing surface parameterization methods are developed on meshes only. In other words, besides the locations of the point data, a given connectivity is also required as an input. The connectivity information plays an important role in representing the surface structure as well as in approximating continuous operators to minimize certain distortions. As a result, most conventional mesh parameterization approaches fail to work on point clouds. Without the connectivity information, the underlying geometry of the point cloud data become more obscure. Hence, it is more challenging in developing parameterization schemes with good quality for point cloud data.

A good parameterization scheme of point cloud must satisfy certain criteria. In particular, it should retain the geometric information of a point cloud as complete as possible. In our case, one of the ultimate goals is to create a triangulation for a point cloud by finding a Delaunay triangulation on a simpler parameter domain. It is noteworthy that in general, a mesh structure with good quality on the parameter domain does not necessarily imply that the associated mesh structure of the original data points is satisfactory. In other words, meshing the parameter domain may provide meaningless results if the parameterization scheme is arbitrarily chosen. Note that the regularity of the mesh structures is related to the angle structure of the triangles and quadrilaterals. To ensure the regularity of the associated mesh structure on the point cloud, the parameterization should preserve the angle structure of the triangles and quadrilaterals on the parameter domain. This motivates us the use of conformal mappings.

For smooth surfaces, it is well known that the conformal parameterizations preserve angles and hence the local geometry of the surfaces. It is natural to consider the discrete analog of conformal parameterization for point cloud data. Since data points are sampled from real 3D surfaces, we can assume that every point cloud has an underlying geometry. Based on this important assumption, we consider finding conformal parameterizations of genus-0 point clouds. In \cite{Choi15a}, Choi {\it et al.} proposed a fast spherical conformal parameterization algorithm for genus-0 closed surfaces in two steps. In the first step, a Laplace equation is solved on a planar triangular domain and the inverse stereographic projection is applied to obtain an initial spherical parameterization. In the second step, quasi-conformal theories are applied to enhance the conformality of the spherical parameterization. The computation is linear and the conformality distortion of the parameterization is minimal. However, the algorithm is developed on triangular meshes only. In this work, we extend and improve the algorithm for point clouds with spherical topology.

The aforementioned algorithm in \cite{Choi15a} developed on meshes involves solving a Laplace equation. To extend the algorithm for point clouds, we propose a new weight function for enhancing the accuracy of the approximation of the Laplace-Beltrami (LB) operators on point clouds. Using our improved approximation, the LB operator in the mentioned algorithm can be accurately computed on point clouds. Also, we replace a key step of solving for a quasi-conformal map in the mentioned algorithm by an iterative scheme, called the {\it North-South reiteration}, for improving the conformality of the parameterizations. Furthermore, we introduce a balancing scheme for enhancing the distribution of the parameterization results. Experimental results demonstrate the effectiveness of our proposed parameterization algorithm for genus-0 point clouds. Our algorithm achieves global spherical parameterizations with minimal conformality distortions. Furthermore, with the aid of our parameterization schemes, we can easily generate high quality triangulations and quadrangulations on point clouds. The meshes generated are guaranteed to be genus-0 closed meshes. Moreover, multilevel representations of the point clouds can also be easily computed with the aid of our spherical parameterization scheme.

The rest of the paper is organized as follows. The contribution of our work is highlighted in Section \ref{contribution}. In Section \ref{previous}, we review the related previous works on point cloud parameterizations and approximations of differential operators on point clouds. In Section \ref{background}, we introduce the mathematical background of our work. In Section \ref{overview}, we review a spherical conformal parameterization scheme for triangular meshes, which is closely related to our proposed framework for point clouds. In Section \ref{mainsection}, we explain our proposed framework for spherical conformal parameterization and mesh generation of point clouds. In Section \ref{experiment}, we demonstrate the effectiveness of our proposed framework by numerous experiments. The paper is concluded in Section \ref{conclusion}.

\section{Contribution} \label{contribution}
In this work, we propose a framework for meshing using spherical conformal parameterizations of genus-0 point clouds. Our proposed method is advantageous in the following aspects:
\begin{enumerate}[(i)]
 \item We extend and improve the spherical conformal parameterization algorithm on meshes in \cite{Choi15a} for point clouds. An accurate approximation of the Laplace-Beltrami operator is achieved using the moving least square method \cite{Lange05,Liang12,Liang13} together with a new Gaussian-type weight function. A key step of the parameterization algorithm in \cite{Choi15a} for computing a quasi-conformal map is replaced by solving a Laplace equation on the complex plane, followed by an iterative scheme called the North-South (N-S) reiteration. Also, the point distribution of the parameterization is enhanced by a balancing scheme for point clouds.
 \item Our spherical parameterization method is efficient and robust to complex geometric structures. The algorithm completes within a few minutes and can handle highly convoluted point clouds.
 \item Unlike most of the existing approaches, our algorithm specifically minimizes the conformality distortion of the parameterizations. Since the local geometry is preserved under the global spherical conformal parameterizations, we can create an almost-Delaunay triangulation on a point cloud by computing a Delaunay triangulation of its spherical conformal parameterization. The resulting triangulation on the point cloud preserves the regularity of that on the parameterization.
 \item High quality quad meshes can also be generated using our spherical conformal parameterization scheme.
 \item Unlike the conventional approaches for meshing, our method is topology preserving. The meshes produced using our proposed framework are guaranteed to be genus-0 closed meshes. No post-processing is required.
 \item Our method is stable under geometrical and topological noises on the input point clouds.
 \item With the aid of our spherical conformal parameterization scheme, multilevel representations of genus-0 point clouds can be easily constructed.
\end{enumerate}

\begin{table}[t]
    \centering
    \begin{tabular}{ |C{40mm}|C{15mm}|C{17mm}|C{30mm}|C{22mm}| }
    \hline
    Methods & Topology & Parameter domain & Local/global parameterization? & Distortion to be minimized\\ \hline
    Meshless parameterization \cite{Floater00,Floater01} & Disk topology & Plane & Global & / \\ \hline
    Meshless parameterization for Spherical Topology \cite{Hormann03} & Genus-0 & Planes & Local & /  \\ \hline
    Spherical embedding \cite{Zwicker04} & Genus-0 & Sphere & Global & Stretch \\ \hline
    Discrete one-forms \cite{Tewari06} & Genus-1 & Planes & Local & / \\ \hline
    As-rigid-as-possible meshless parameterization \cite{Zhang12}  & Disk topology & Plane & Global & ARAP \\ \hline
    Meshless quadrangulation by global parameterization \cite{Li11} & Unrestricted & Plane & Global & Gradient and principal fields\\ \hline
    \end{tabular}
    \bigbreak
    \caption{Several previous works on meshing point clouds using parameterization.}
    \label{previouswork}
\end{table}

\section{Previous Works} \label{previous}

In this section, we describe some previous works closely related to our work.

Surface parameterization has been extensively studied by different research groups. For surveys on surface parameterization methods, please refer to \cite{Floater02,Floater05,Hormann07,Sheffer06}. In particular, conformal parameterizations have been well established on meshes. For the recent works, Lai {\it et al.} \cite{Lai14} proposed a folding-free spherical conformal mapping scheme by the harmonic energy minimization. In \cite{Aflalo13}, Aflalo {\it et al.} proved theoretical bounds of the conformal factor and proposed a method that minimizes the area distortion and avoids numerical errors of the conformal mapping. In \cite{Choi15a}, Choi {\it et al.} developed a linear algorithm for spherical conformal parameterizations of genus-0 closed meshes.

In the last few decades, numerous studies have been devoted to the parameterization of point cloud data. In \cite{Floater00,Floater01}, Floater and Reimers proposed the meshless parameterization method for unorganized point sets. The point sets are parameterized onto a planar domain by solving a sparse linear system. In \cite{Zwicker04}, Zwicker and Gotsmann presented a parameterization approach for a genus-0 point cloud using a $k$-nearest neighborhood graph of the point cloud, followed by a spherical embedding method for planar graphs. In \cite{Azariadis04,Azariadis05,Azariadis07}, Azariadis and Sapidis introduced the notion of dynamic base surfaces and suggested a parameterization scheme by orthogonally projecting a point cloud onto the dynamic base surface. Guo {\it et al.} \cite{Guo06} computed a global conformal parameterization of point-set surfaces, based on Riemann surface theory and Hodge theory. In \cite{Tewari06}, Tewari {\it et al.} proposed a doubly-periodic global parameterization of point cloud sampled from a closed surface of genus 1 to the plane, with the aid of discrete harmonic one-forms. Wang {\it et al.} \cite{Wang08} suggested a parameterization method for genus-0 cloud data. A point cloud is first mapped onto its circumscribed sphere, then the sphere is mapped onto an octahedron and finally unfolded to a 2D image. In \cite{Zhang12}, Zhang {\it et al.} presented an as-rigid-as-possible parameterization approach for point cloud data. A point cloud with disk topology is mapped onto the plane by a local flattening step and a rigid alignment. In \cite{Liang12}, Liang {\it et al.} constructed spherical conformal mappings of genus-0 point clouds by adapting the harmonic energy minimization algorithm in \cite{Lai14}. Meng {\it et al.} \cite{Meng13} proposed a neural network based method for point cloud parameterization. An adaptive sequential learning algorithm is applied to dynamically adjust the parameterization.

The use of parameterization of point cloud is widespread in computer science and engineering. One of the major applications of point cloud parameterization is mesh generation. Instead of a convoluted point cloud, mesh reconstruction is usually completed on a simpler parameter domain. In \cite{Floater00,Floater01}, Floater and Reimers applied their proposed parameterization scheme for meshing point clouds with disk topology. In \cite{Hormann03}, Hormann and Reimers extended the parameterization method in \cite{Floater01} for surface reconstruction of point clouds with spherical topology. In \cite{Zwicker04}, Zwicker and Gotsmann used their proposed parameterization method for mesh reconstruction of genus-0 point clouds. Tewari {\it et al.} \cite{Tewari06} performed surface reconstruction using their proposed doubly-periodic global parameterization. Li {\it et al.} \cite{Li11} proposed a meshless quadrangulation scheme by global parameterization. The input point cloud is cut to be with disk topology and parameterized onto the plane for meshing. Zhang {\it et al.} \cite{Zhang12} suggested a mesh reconstruction method of point cloud data by meshless denoising and their proposed parameterization scheme. Table \ref{previouswork} compares several previous works on meshing point clouds using parameterizations. The above previous works reflect the importance of parameterization in surface reconstruction of point cloud data.

Finding a conformal parameterization involves solving differential equations. In particular, for conformal parameterizations of point clouds, it is necessary to build a discrete analog of the differential operators on point clouds. Numerous works on approximating differential operators on point cloud have been reported. In \cite{Nayroles92}, Nayroles {\it et al.} described a diffuse approximation method for estimating the derivatives at a given set of points. In \cite{Belkin08}, Belkin and Niyogi established a theoretical foundation for the Laplace-Beltrami operator on point clouds. Belkin {\it et al.} \cite{Belkin09} proposed the PCD Laplace operator for approximating the LB operator using an integral approximation. The moving least square (MLS) method \cite{Shepard68,Lancaster81} is widely used for the approximation. A number of algorithms for the approximation of derivatives are developed based on the MLS method \cite{Liu97,Levin98,Nealen04,Breitkopf05}. In \cite{Lange05}, Lange and Polthier proposed a point set analogue of the Laplace-Beltrami and shape operator using the MLS method. In \cite{Liang12,Liang13}, Liang {\it et al.} approximated the LB operator on point clouds by the MLS method with a special weighting function. In \cite{Lai13}, Lai {\it et al.} presented a local mesh approach for solving PDEs on point clouds. A local mesh structure is constructed at each point using local principal component analysis (PCA). Macdonald {\it et al.} \cite{Macdonald14} computed reaction-diffusion processes on point clouds. In \cite{Lozes14}, Lozes {\it et al.} proposed a method to solve PDEs on point clouds for image processing using partial difference operators on weighted graphs.

\section{Mathematical background} \label{background}

In this section, we introduce some basic mathematical concepts closely related to our work. For more details, readers are referred to \cite{Schoen94,Schoen97,Jost11}.

\subsection{Conformal maps}
An altas of a manifold is said to be \emph{conformal} if all of its transition maps are biholomorphic. A conformal structure is the maximal conformal altas, and a surface with a conformal structure is called a \emph{Riemann surface}. Suppose $\mathcal{M}$, $\mathcal{N}$ are two Riemann surfaces with local coordinate systems $r_1(x^1,x^2)$ and $r_2(x^1, x^2)$, where $r_1,r_2:\mathbb{R}^2 \to \mathbb{R}^3$ are vector-valued functions. The first fundamental forms of $\mathcal{M}$ and $\mathcal{N}$ are respectively defined by
\begin{equation}
ds_{\mathcal{M}}^2 = \sum_{i,j} g_{ij} dx^i dx^j \ \ \text{ and } \ \ ds_{\mathcal{N}}^2 = \sum_{i,j} \tilde{g}_{ij} dx^i dx^j,
\end{equation}
where $\displaystyle g_{ij} = \left\langle \frac{\partial r_1}{\partial x^i}, \frac{\partial r_1}{\partial x^j} \right\rangle$, $\displaystyle \tilde{g}_{ij} =\left\langle \frac{\partial r_2}{\partial x^i}, \frac{\partial r_2}{\partial x^j}\right\rangle$. Consider $f:\mathcal{M} \to \mathcal{N}$. In local coordinate systems, $f$ can be regarded as $f: \mathbb{R}^2 \to \mathbb{R}^2$, with $f(x^1,x^2)  =(f^1(x^1,x^2), f^2(x^1,x^2))$. The pull-back metric $f^*ds_{\mathcal{N}}^2$ defined on $\mathcal{M}$, induced by $f$ and $ds_{\mathcal{N}}^2$, is the metric
\begin{equation}
 f^*ds_{\mathcal{N}}^2 = \sum_{m,n} \left(\sum_{i,j} \tilde{g}_{ij} (f(x^1,x^2)) \frac{\partial f^m}{\partial x^i}\frac{\partial f^n}{\partial x^j}\right) dx^m dx^n.
\end{equation}
$f$ is said to be \emph{conformal} if there exists a positive scalar function $\lambda(x^1,x^2)$, called the {\it conformal factor}, such that $f^*ds_{\mathcal{N}}^2 = \lambda ds_{\mathcal{M}}^2$. An immediate consequence of the above is that every conformal map preserves angles and hence the local geometry of the surface.

\subsection{Harmonic maps}
By the uniformization theorem, every genus-0 closed surface is conformally equivalent to $\mathbb{S}^2$. Hence, it is natural to consider mappings between a genus-0 closed surface and the unit sphere. The {\it Dirichlet energy} for a map $f: {\mathcal{M}} \to \mathbb{S}^2$ is defined as
\begin{equation}\label{eqt:harmonic}
E(f) = \int_{\mathcal{M}} |\nabla f|^2 dv_{\mathcal{M}}.
\end{equation}
In the space of mappings, the critical points of $E(f)$ are called {\it harmonic mappings}. For genus-0 closed surfaces, conformal maps are equivalent to harmonic maps \cite{Jost11}. Hence, the problem of finding a conformal map $f: {\mathcal{M}} \to \mathbb{S}^2$ is equivalent to an energy minimization problem.

\subsection{Quasi-conformal maps}
Quasi-conformal maps are a generalization of conformal maps. Mathematically, $f: \mathbb{C} \to \mathbb{C}$ is a quasi-conformal map if it satisfies the Beltrami equation:
\begin{equation}\label{eqt:beltrami}
\frac{\partial f}{\partial \overline{z}} = \mu(z) \frac{\partial f}{\partial z}
\end{equation}
for some complex-valued function $\mu$ satisfying $||\mu||_{\infty}< 1$ and $\frac{\partial f}{\partial z}$ is non-vanishing almost everywhere. Here, the complex partial derivatives are defined by
\begin{equation}
 \frac{\partial f}{\partial z} := \frac{1}{2} \left(\frac{\partial f}{\partial x}  - i \frac{\partial f}{\partial y} \right)
 \ \ \text{ and } \ \
 \frac{\partial f}{\partial \overline{z}} := \frac{1}{2} \left(\frac{\partial f}{\partial x}  + i \frac{\partial f}{\partial y} \right).
\end{equation}

$\mu$ is called the \emph{Beltrami coefficient} of the quasi-conformal map $f$. Note that the quasi-conformal map $f$ is conformal around a small neighborhood of $p$ if and only if $\mu(p) = 0$, as Equation (\ref{eqt:beltrami}) becomes the Cauchy-Riemann equation in this situation.

Suppose $f:\Omega_1\to \Omega_2$ and $g:\Omega_2\to\Omega_3$ are quasi-conformal maps with the Beltrami coefficients $\mu_f$ and $\mu_g$ respectively. Then, the Beltrami coefficient of the composition map $g\circ f :\Omega_1\to \Omega_3$ is explicitly given by
\begin{equation}\label{eqt:composition_formula}
\mu_{g\circ f} = \frac{\mu_f + \frac{\overline{f_z}}{f_z} (\mu_g\circ f)}{1+\frac{\overline{f_z}}{f_z} \overline{\mu_f}(\mu_g\circ f)}.
\end{equation}

Quasi-conformal maps are also defined between two Riemann surfaces $M$ and $N$. A \emph{Beltrami differential} $\mu(z)\frac{\overline{dz}}{dz}$ on $M$ is an assignment to each chart $(U_{\alpha}, \phi_{\alpha})$ of an $L^{\infty}$ complex-valued function $\mu_{\alpha}$ defined on the local parameter $z_{\alpha}$ such that $\mu_{\alpha}(z_{\alpha}) \frac{ \overline{dz_\alpha}}{dz_\alpha} = \mu_{\beta}(z_{\beta}) \frac{ \overline{dz_{\beta}}}{dz_{\beta}}$ on the domain also covered by another chart $(U_\beta, \psi_\beta)$, where
$\frac{d z_{\beta}}{d z_{\alpha}} = \frac{d}{d z_{\alpha}} \phi_{\alpha \beta}$ and $\phi_{\alpha \beta} = \phi_{\beta} \circ \phi_{\alpha}^{-1}$. An orientation preserving diffeomorphism $f: M \to N$ is called {\it quasi-conformal} associated with $\mu(z)\frac{d\overline{z}}{dz}$ if for any chart $(U_{\alpha}, \phi_{\alpha})$ on $M$ and any chart $(U_\beta, \psi_\beta)$ on $N$, the mapping $f_{\alpha \beta} := \psi_{\beta} \circ f \circ f_{\alpha}^{-1}$ is quasi-conformal associated with $\mu_{\alpha} \frac{ \overline{dz_{\alpha}}}{dz_{\alpha}}$. Readers are referred to \cite{Gardiner00} for more details of quasi-conformal maps.

\subsection{Stereographic projection}
In our work, we frequently make use of the stereographic projection. Mathematically, the {\em stereographic projection} is a conformal map $P_N: \mathbb{S}^2 \to \overline{\mathbb{C}}$ with
\begin{equation}
 P_N(x,y,z) = \frac{x}{1-z} + i \frac{y}{1-z}.
\end{equation}
The {\em inverse stereographic projection} is a conformal map $P_N^{-1}: \overline{\mathbb{C}} \to \mathbb{S}^2$ with
\begin{equation}
 P_N^{-1}(x+iy) = \left(\frac{2x}{1+x^2+y^2}, \frac{2y}{1+x^2+y^2}, \frac{-1+x^2+y^2}{1+x^2+y^2}\right).
\end{equation}
Similarly, we define the {\em south-pole stereographic projection} $P_S: \mathbb{S}^2 \to \overline{\mathbb{C}}$ by
\begin{equation}
 P_S(x,y,z) = \frac{x}{1+z} + i \frac{y}{1+z}.
\end{equation}
The {\em inverse south-pole stereographic projection} is the map $P_S^{-1}: \overline{\mathbb{C}} \to \mathbb{S}^2$ with
\begin{equation}
 P_S^{-1}(x+iy) = \left(\frac{2x}{1+x^2+y^2}, \frac{2y}{1+x^2+y^2}, \frac{1-x^2-y^2}{1+x^2+y^2}\right).
\end{equation}

\subsection{Point cloud and local system}

A {\em point cloud} $P=\{z_1,z_2, \dots, z_n\} \subset \mathbb{R}^3$ is a set of sample points representing a Riemann surface $\mathcal{M}$. Because of the absence of the connectivity information, we construct a local coordinate system for $P$ on each point $z_s$ and approximate the derivatives. To achieve this, We define an atlas $(U_s,\phi_s)$ for each point $z_s$, where $U_s$ is an open cover and $\phi_s$ is the associated local coordinate function. $U_s$ is formed using the collection of all neighboring points of $z_s$, denoted by $\mathcal{N}(z_s)$. Specifically, we apply the {\em $k$-Nearest-Neighbors ($k$-NN)} algorithm to define the neighborhood. The {\em $k$-nearest neighborhood} $\mathcal{N}^{k}(z_s)$ of $z_s$ is a set with the $k$ distinct elements in $P$ (including $z_s$) closest from $z_s$ under the Euclidean 2-norm. In this work, we apply the KD-tree implementation by Lin \cite{Lin} for the computation. We denote $\mathcal{N}^k(z_s) = \{z_s^1,z_s^2, \dots, z_s^k\}$ with $z_s^1=z_s$. Then, one common approach for constructing a local coordinate system is to define the normal vector as the $z$-axis, which is more convenient for further computation. There are various methods to obtain the tangent planes and the normal vectors for point clouds, such as the principal component analysis (PCA) method \cite{Hoppe92}. Using the PCA method for $z_s$, we obtain three vectors $\{e_s^1,e_s^2,e_s^3\}$ which form an orthonormal basis of $\mathbb{R}^3$.

Then, we project $\mathcal{N}^k(z_s)$ to the plane spanned by $\{e_s^1,e_s^2\}$ by $\hat{z}_s^i = z_s^i - \langle z_s^i-z_s,e_s^3 \rangle e_s^3, \ i = 1,2, \dots, k$. Now we have the projection $\hat{\mathcal{N}^k}(z_s)=\{\hat{z}_s^1,\hat{z}_s^2, \dots, \hat{z}_s^k \}$ and the local coordinates $\{(x_s^1,y_s^1),(x_s^2,y_s^2), \dots, (x_s^k,y_s^k)\}$, where $x_s^i = \langle z_s^i-z_s,e_s^1\rangle$ and $y_s^i = \langle z_s^i-z_s,e_s^2\rangle$ for $i = 1,2, \dots, k$. Therefore, we can define $\phi_s : \mathcal{N}_s \to \mathbb{R}^2$ by $\phi_s(z_s^i) = (x_s^i, y_s^i)$. Also, the neighborhood $\mathcal{N}(z_s)$ can be regarded as a graph of its projection $\hat{\mathcal{N}}(z_s)$, that is, $z_s^i = z_s + x_s^i e_s^1 + y_s^i e_s^2 + f_s(x_s^i,y_s^i) e_s^3$.

\section{An overview of the fast spherical conformal parameterization algorithm for triangular meshes} \label{overview}
In this section, we briefly describe the approach in \cite{Choi15a} for computing a spherical conformal parameterization of a genus-0 closed triangular mesh $M$. This approach motivates our proposed parameterization scheme for genus-0 point clouds.

To compute a conformal mapping $f: M \to \mathbb{S}^2$, it suffices to solve Equation (\ref{eqt:harmonic}). This can be achieved by solving the Laplace equation $\Delta^T f = 0$ subject to $\|f\| = 1$, where $\Delta^T f$ is the tangential component of $\Delta f$ on the tangent plane of $\mathbb{S}^2$. This tangential approach was applied by Oberknapp and Polthier in \cite{Oberknapp97}. Note that this problem is nonlinear because of the constraint $\|f\| = 1$. In \cite{Angenent99,Haker00}, Angenent {\it et al.} linearize this problem by solving the equation on the complex plane:
\begin{equation} \label{eqt:laplace}
 \Delta f = 0
\end{equation}
given three boundary constraints $f(a_i) = b_i$, where $a_i, b_i \in \mathbb{C}, i = 1,2,3$ such that the triangle $[a_1,a_2,a_3]$ and the triangle $[b_1,b_2,b_3]$ are with the same angle structures. Note that $\Delta^T f = \Delta f = 0$ since the target domain is now $\mathbb{C}$. As the nonlinear constraint $\|f\|=1$ is removed, the above problem becomes linear and can be solved using the cotangent formula \cite{Pinkall93}.

After solving Equation (\ref{eqt:laplace}), the inverse stereographic projection $P_N^{-1}$ is applied for obtaining a spherical parameterization. However, unlike in the continuous case, the spherical parameterization in the discrete case is with large conformality distortion at the north pole of the sphere due to the discretization and the approximation errors. Hence, Choi {\it et al.} \cite{Choi15a} proposed to apply the south-pole stereographic projection $P_S$ to map the sphere to a planar domain $R \subset \mathbb{C}$. Note that the region with large distortion is the innermost region of $R$ while the outermost region of $R$ is with negligible distortion. Denote the above steps by a map $g: M \to R$. To correct the distortion of $g$, Choi {\it et al.} made use of the quasi-conformal theory.

Let $\mu_{g^{-1}}$ be the Beltrami coefficient of the map $g^{-1}$. Fixing the outermost region on $R$, Choi {\it et al.} \cite{Choi15a} composed the map $g$ with a quasi-conformal map $h: R \to \mathbb{S}^2$ with the associated Beltrami coefficient $\mu_h = \mu_{g^{-1}}$. Let $h = u + iv$ and $\mu_h = \rho + i \tau$. Specifically, by considering the Beltrami Equation (\ref{eqt:beltrami}), each pair of the partial derivatives $v_x, v_y$ and $u_x, u_y$ can be expressed as linear combinations of the other \cite{Lam13},
\begin{equation} \label{eqt:linear}
 \begin{split}
  -v_y &= \alpha_1 u_x + \alpha_2 u_y; \\
  v_x &= \alpha_2 u_x + \alpha_3 u_y,
 \end{split}
  \ \ \text{ and } \ \
 \begin{split}
  -u_y &= \alpha_1 v_x + \alpha_2 v_y; \\
  u_x &= \alpha_2 v_x + \alpha_3 v_y,
 \end{split}
\end{equation}
where $\alpha_1 = \frac{(\rho-1)^2+\tau^2}{1-\rho^2-\tau^2}; \alpha_2 = -\frac{2\tau}{1-\rho^2-\tau^2}; \alpha_3 = \frac{(1+\rho)^2 + \tau^2}{1-\rho^2-\tau^2}$. Since $\nabla \cdot \left( \begin{array}{c}
                            -v_y \\ v_x
                           \end{array} \right) = 0$ and $\nabla \cdot \left( \begin{array}{c}
                            -u_y \\ u_x
                           \end{array} \right) = 0$ , the map $h$ can be constructed by solving the following equations
\begin{equation}\label{eqt:BeltramiPDE}
\nabla \cdot \left(A \left(\begin{array}{c}
u_x\\
u_y \end{array}\right) \right) = 0\ \ \mathrm{and}\ \ \nabla \cdot \left(A \left(\begin{array}{c}
v_x\\
v_y \end{array}\right) \right) = 0
\end{equation}
where
$\displaystyle A = \left( \begin{array}{cc} \alpha_1 & \alpha_2\\
\alpha_2 & \alpha_3 \end{array}\right)$. In the discrete case, the above elliptic PDEs (\ref{eqt:BeltramiPDE}) can be discretized into sparse symmetric positive definite linear systems as described in \cite{Lui13,Choi15a}. In \cite{Jones13}, Jones and Mahadevan derived the system (\ref{eqt:linear}) from the conjugate Beltrami equation $\frac{\partial h}{\partial \overline{z}} = \nu(z) \overline{\left(\frac{\partial h}{\partial z}\right)}$ and proposed an alternative approach for solving the system. Specifically, the authors considered minimizing the following functional
\begin{equation}
 \frac{1}{2} \int_R \left(\alpha_1 |\nabla u|^2 + 2 \alpha_2 \nabla u \cdot \nabla v + \alpha_3 |\nabla v|^2 \right) dS,
\end{equation}
using a Euler-Lagrange variational approach. Despite the different implementations, both of the two abovementioned methods effectively solve for a quasi-conformal map on triangulated meshes. Then, by Equation (\ref{eqt:composition_formula}), the composition map $h \circ g : M \to \mathbb{S}^2$ is with the Beltrami coefficient $\mu_{h \circ g} = 0$ and hence $h \circ g$ is conformal. Readers are referred to \cite{Choi15a} for more details.

Note that a key step above is the computation of the quasi-conformal map $h$ for improving the conformality, which is guaranteed by the composition formula (\ref{eqt:composition_formula}). However, the Beltrami coefficients in the above algorithm are approximated on the triangular faces of a mesh. Hence, the above algorithm cannot be directly applied for point clouds. Moreover, even if we can define the discrete Beltrami coefficients on point clouds, Equation (\ref{eqt:composition_formula}) may not hold anymore. Therefore, we need to replace this key step by a new method suitable for point clouds.

\section{Meshing genus-0 point clouds using spherical conformal parameterization} \label{mainsection}
In this section, we discuss our proposed framework for meshing genus-0 point clouds. The main steps involved include solving a series of Laplace equations on the complex plane for the spherical conformal parameterization of a genus-0 point cloud, and creating a mesh structure with the aid of the global parameterization.

\subsection{Approximation of the Laplace-Beltrami operator}
\begin{table}[t]
\centering
\begin{tabular}{|l|l|}
\hline
Weight & Formula of $w(d)$\\
\hline
Constant weight & $\displaystyle w(d)=1$\\
\hline
Exponential weight & $\displaystyle w(d)=\exp \left(-\frac{d^2}{h^2} \right)$\\
\hline
Inverse of squared distance weight & $\displaystyle w(d)=\frac{1}{d^2 + \epsilon^2}$\\
\hline
Wendland weight \cite{Wendland95,Wendland01} & $\displaystyle w(d)=\left(1-\frac{d}{D}\right)^4\left(\frac{4d}{D}+1\right)$\\
\hline
Special weight \cite{Liang12} & $\displaystyle w(d)=\left\{\begin{array}{ll} 1 & \text{if }d=0\\ \frac{1}{k} & \text{if }d\neq 0\\ \end{array} \right.$ \\
\hline
\end{tabular}
\bigbreak
\caption{Some common weighting functions for the MLS approximation.}
\label{table:weight}
\end{table}

In this subsection, we explain our approximation scheme for the Laplace-Beltrami operator in the Laplace Equation (\ref{eqt:laplace}) on a point cloud $P$ by the moving least-square method. The moving least-square method is widely used for approximation \cite{Liu97,Levin98,Nealen04,Breitkopf05,Lange05,Liang12,Liang13}. In particular, Liang {\it et al.} \cite{Liang12,Liang13} approximated the LB operator on point clouds using the MLS method with a special weight function. Our approximation scheme is built upon the method in \cite{Lange05,Liang12,Liang13}. In this work, we propose a new weight function to achieve a more accurate approximation of the LB operator.

First, we discuss our approximation method for the derivatives on the point cloud $P = \{z_1, z_2, \dots, z_n\}$. To simplify the discussion, we only consider the approximation on the patch $\mathcal{N}(z_s)$ of a point $z_s \in P$. Recall that $\mathcal{N}(z_s)$ can be regarded as a graph of its projection $\hat{\mathcal{N}}(z_s)$, that is, $z_s^i = z_s + x_s^i e_s^1 + y_s^i e_s^2 + f_s(x_s^i,y_s^i) e_s^3$. Denote the derivatives of $f_s$ along the $e_s^1$-direction and the $e_s^2$-direction by $f_{sx}$ and $f_{sy}$ respectively. We select a set of basic functions $\{f_s^1,f_s^2, \dots, f_s^m\}$ as a basis and write $f_s(x,y) \approx\sum_{i=1}^m c_i f_s^i(x,y)$, where $\{c_i\}_{i=1}^m$ are some coefficients to be determined. In our work, we use $\{1, x, y, x^2, xy, y^2\}$ as the basis of the space of all polynomials with second order or below, which means $m = 6$. We add a remark here that $m=6$ is an appropriate choice for our approximation. Since second derivatives are considered in approximating the LB operator, polynomials with at least second order are needed. On the other hand, if we fit a polynomial with third order ($m=10$) or higher, it will be too sensitive to noises and the approximation gets worse. Therefore, $m=6$ is a suitable dimension for our approximation.

In the approximation, we aim to minimize
\begin{equation} \label{eqt:mls}
\sum^{n}_{i=1} w_{i} \left( \sum^{m}_{j=1} c_{j} f_s^{j} (x_{i},y_{i}) - f_s(x_{i},y_{i}) \right)^{2}
\end{equation}
where $w_i = w(\|z_i - z_s\|)$ for some weighting function $w:\mathbb{R} \to \mathbb{R}$. The weight function $w$ significantly affects the accuracy and robustness of the approximation. Hence, one must carefully choose a suitable weight function. Table \ref{table:weight} lists some common weighting functions.

Note that the information provided by the data points near the center point $z_s$ should be more reliable than that of the data points distant from $z_s$. The closer the data points are to $z_i$, the more reliable they are. Hence, it is natural to consider a smooth weight function which concentrates at $z_s$. This motivates us to use of a weight function of the Gaussian type. As a remark, in \cite{Belkin08,Belkin09}, Belkin {\it et al.} used a Gaussian weight function in the form of $\exp\left(-\|x-y\|^2/4t\right)$ for integral approximation. In our MLS approximation, we propose another Gaussian-type weight function:
\begin{equation}
\left\{\begin{array}{ll}
\displaystyle w_s = w(0) = 1 &  \\
\displaystyle w_{i} = w(\|z_i - z_s\|) = \frac{1}{k} \exp \left(-\frac{\sqrt{k}}{h^2} \|z_i - z_s\|^2 \right) & \text{ for all } i \neq s,\\
\end{array} \right.
\end{equation}
where $h$ is the maximum distance from $z_s$ in $\mathcal{N}^k(z_s)$. Numerical experiments are demonstrated in Section \ref{experiment} to support our proposed weight function with the specific factor $\sqrt{k}/h^2$ inside the exponent. It can be observed that our proposed weight results in more accurate approximations of the LB operator on point clouds.

With the proposed weight function, we now solve the minimization problem (\ref{eqt:mls}). Denote $f^{j}_{s,i} = f_s^{j}(x_{i},y_{i})$ and $f_{s,i} = f_s(x_{i},y_{i})$.

Let $\vec{A}=
\left(\begin{matrix}
f^{1}_{s,1} & f^{2}_{s,1} & \cdots & f^{m}_{s,1} \\
f^{1}_{s,2} & f^{2}_{s,2} & \cdots & f^{m}_{s,2} \\
\vdots    & \vdots    & \ddots & \vdots    \\
f^{1}_{s,n} & f^{2}_{s,n} & \cdots    & f^{m}_{s,n}
\end{matrix} \right)$,
$\vec{D}=
\left(\begin{matrix}
w_1 & 0 & \cdots & 0 \\
0 & w_2 & \cdots & 0 \\
\vdots    & \vdots    & \ddots & \vdots    \\
0 & 0 & \cdots    & w_n
\end{matrix} \right)$,
$\vec{c}=
\left(\begin{matrix}
c_1\\
c_2\\
\vdots\\
c_m
\end{matrix} \right)$,
and $\vec{b}=
\left(\begin{matrix}
f_{s,1}\\
f_{s,2}\\
\vdots\\
f_{s,n}
\end{matrix} \right)$. The minimization problem in (\ref{eqt:mls}) can be written as follows:
\begin{equation}
 \min_{c\in \mathbb{R}^n} \left\langle \vec{D} (\vec{Ac} - \vec{b}), \vec{Ac} - \vec{b} \right\rangle.
\end{equation}
We can solve it using the least-square method, namely solving
\begin{equation}
\vec{A}^{T}\vec{D}\vec{A}\vec{c} = \vec{A}^{T}\vec{D}\vec{b}.
\end{equation}

Next, for any function $u$ defined on the neighborhood $\mathcal{N}(z)$, we can approximate it by a combination of $\{f_s^1,f_s^2, \dots, f_s^m\}$:
\begin{equation}
u=f_s(x,y)\approx\sum_{i=1}^m \hat{c}_i f_s^i(x,y).
\end{equation}

Similarly, the coefficients $\hat{c}_i$ can be approximated. Let $\vec{A}=
\left(\begin{matrix}
f^{1}_{s,1} & f^{2}_{s,1} & \cdots & f^{m}_{s,1} \\
f^{1}_{s,2} & f^{2}_{s,2} & \cdots & f^{m}_{s,2} \\
\vdots    & \vdots    & \ddots & \vdots    \\
f^{1}_{s,n} & f^{2}_{s,n} & \cdots & f^{m}_{s,n}
\end{matrix} \right)$,
$\vec{D}=
\left(\begin{matrix}
w_1 & 0 & \cdots & 0 \\
0 & w_2 & \cdots & 0 \\
\vdots & \vdots & \ddots & \vdots \\
0 & 0 & \cdots    & w_n
\end{matrix} \right)$,
$\vec{\hat{c}}=
\left(\begin{matrix}
\hat{c}_1\\
\hat{c}_2\\
\vdots\\
\hat{c}_m
\end{matrix} \right)$, and
$\vec{u}=
\left(\begin{matrix}
u_1\\
u_2\\
\vdots\\
u_n
\end{matrix} \right)$.
We can find the coefficients $\hat{c}_i$ by solving the following least-square problem
\begin{equation}
\vec{A}^{T}\vec{D}\vec{A}\vec{\hat{c}} = \vec{A}^{T}\vec{D}\vec{u}.
\end{equation}

Since we know the explicit formula of the derivatives of each $f_s^i$, we can compute the approximated derivatives of $u$, such as
\begin{equation}\label{ddx}
\frac{\partial u}{\partial x}=\overset{m}{\underset{i=1}{\sum}}\vec{c}_i \frac{\partial f_s^i}{\partial x}  \ \ \text{ and } \ \ \frac{\partial u}{\partial y}=\overset{m}{\underset{i=1}{\sum}}\vec{c}_i \frac{\partial f_s^i}{\partial y}.
\end{equation}

Now, we are ready to introduce the construction of the LB operator of a smooth function $u$ on $\mathcal{N}(z_s)$. For any smooth real-valued function $u$ on the $\mathcal{N}(z)$, the LB operator of $u$ is given by
\begin{equation}\label{laplacian}
\Delta u(z) = \frac{1}{W} \sum^{2}_{i,j=1} \partial_{i} (g^{ij} W \partial_{j} (u(z)) ),
\end{equation}
where $z$ is a point in $\mathcal{N}(z)$, $(g_{ij})$ is the metric of the surface at $z$, $W = \sqrt{det(g_{ij})}$, and $(g^{ij}) = (g_{ij})^{-1}$.

Since $z_s^i=(x_s^i,y_s^i,f_s(x_s^i,y_s^i))$ and $\mathcal{N}(z_s)$ is a graph of $\hat{\mathcal{N}}(z_s)$, we have
\begin{equation}
(g_{ij}) = \left(\begin{matrix}
1+(f_s)^{2}_{x} & (f_s)_{x} (f_s)_{y} \\
(f_s)_{x} (f_s)_{y} & 1+(f_s)^{2}_{y}
\end{matrix} \right) \ \ \text{ and } \ \
(g^{ij}) = \frac{1}{W^{2}} \left(\begin{matrix}
1+(f_s)^{2}_{y} & -(f_s)_{x} (f_s)_{y} \\
-(f_s)_{x} (f_s)_{y} & 1+(f_s)^{2}_{x}
\end{matrix} \right),
\end{equation}
where $W = \sqrt{1+(f_s)^{2}_{x}+(f_s)^{2}_{y}}$.

We use Equation (\ref{ddx}) to calculate the first order partial derivatives of $f_s$. Then, we proceed to compute $\Delta u(z_s)$. Since we have a closed form of $\Delta u$ and the LB operator is a second order differential operator, by differentiating Equation (\ref{laplacian}), we get
\begin{equation}\label{LBoperator}
\Delta u(z_s) = \alpha_{1}\frac{\partial u}{\partial x}(z_s) + \alpha_{2}\frac{\partial u}{\partial y}(z_s) + \alpha_{3}\frac{\partial^{2} u}{\partial x^{2}}(z_s) + \alpha_{4}\frac{\partial^{2} u}{\partial x\partial y}(z_s) + \alpha_{5}\frac{\partial^{2} u}{\partial y^{2}}(z_s)
\end{equation}
where $\alpha_{1}, \alpha_{2}, \alpha_{3}, \alpha_{4}, \alpha_{5}$ are coefficients which depend on partial derivatives of $f_s$. This completes our approximation scheme for the LB operator on point clouds. With this approximation, we are now ready to describe our proposed spherical conformal parameterization algorithm for genus-0 point clouds.

\subsection{Spherical conformal parameterization of genus-0 point clouds}
In this subsection, we introduce our proposed method for the spherical conformal parameterizations of genus-0 point clouds.

Given a point cloud $P$ sampled from a genus-0 closed surface $\mathcal{M}$, our goal is to find a conformal map $\tilde{f}: P \to \mathbb{S}^2$ which effectively resembles the conformal map $f: \mathcal{M} \to \mathbb{S}^2$. By the previous section, we can approximate the LB operator $\Delta$ on $P$. Denote the approximated LB operator on the point cloud by $\Delta_{PC}$. The approximation allows us to solve the Laplace equation (\ref{eqt:laplace}) on point clouds for a map $\phi: P \to \mathbb{C}$. More specifically, we solve the following equation
\begin{equation} \label{eqt:laplacepc}
\Delta_{PC} \phi = 0
\end{equation}
subject to the constraints $\phi(a_i) = b_i$ for $i = 1,2,3$, where $a_i, b_i \in \mathbb{C}$. The choice of the three boundary points $a_1, a_2, a_3$ affects the conformality of the map $\phi$. In the case of triangular meshes, $a_1,a_2,a_3$ are chosen to be the three vertices of the most regular triangle among all triangles on the input mesh \cite{Choi15a}. Here, the regularity of a triangle $[a_1,a_2,a_3]$ is defined by
\begin{equation}
\text{Regularity}[a_1,a_2,a_3] = \left|\alpha - \frac{\pi}{3}\right| + \left|\beta - \frac{\pi}{3}\right| + \left|\gamma - \frac{\pi}{3}\right|,
\end{equation}
where $\alpha, \beta$ and $\gamma$ are the three angles in the triangle $[a_1,a_2,a_3]$. However, in the case of point clouds, we do not have the required connectivity information. Hence, we choose the three points $a_1,a_2,a_3$ in a different way.

Recall that in approximating the LB operator, it is necessary to find the $k$ nearest neighboring data points $z_s^1, z_s^2, \dots, z_s^k$ for each point $z_s$ on the point cloud $P$. We consider forming a triple using $z_s$ and two other neighboring points $z_s^i$ and $z_s^j$, where $i \neq j$. Different combinations of $i$ and $j$ result in different triples $[z_s,z_s^i,z_s^j]$. Then, we propose to choose the three boundary points $a_1, a_2, a_3$ in the constraint of Equation (\ref{eqt:laplacepc}) by considering
\begin{equation}\label{eqt:regularity}
\min_{s,i,j} \text{ Regularity}[z_s,z_s^i,z_s^j]
\end{equation}
among all combinations of $s$, $i$ and $j$.

After solving Equation (\ref{eqt:laplacepc}) with our proposed boundary constraints, we apply the inverse stereographic projection $P_N^{-1}$ on $\phi(P)$ to obtain a spherical point cloud. Recall that the conformality distortion around the north pole is large due to the approximation error in the stereographic projection. Note that the key step in the method in \cite{Choi15a} for correcting the distortion via a composition of quasi-conformal maps does not work for the case of point clouds. Now, we propose a new method to correct the conformality distortion by solely using the LB operator.

We begin with the south-pole stereographic projection $P_S$ to project the spherical point cloud back onto the complex plane. Under the projection, the North pole of the sphere, which corresponds to the outermost region of $\phi(P) \subset \mathbb{C}$, is mapped to the innermost region on the complex plane. It follows that the outermost region is now with very low distortion while the innermost region is with large distortion. We use the outermost low-distortion data points as the boundary constraints and solve the Laplace equation $\psi: (P_S \circ P_N^{-1} \circ \phi)(P) \to \mathbb{C}$ again:
\begin{equation} \label{eqt:laplacepc2}
\Delta_{PC} \psi = 0
\end{equation}
subject to the boundary constraints $\psi(x) = x$ for all data points $x$ in the outermost low-distortion region. The low-distortion boundary constraints provide us with a more accurate result in the inner part of the planar region. Finally, we apply the inverse south-pole stereographic projection $P_S^{-1}$ and obtain a composition map
\begin{equation}
\widetilde{f} = P_S^{-1} \circ \psi \circ P_S \circ P_N^{-1} \circ \phi.
\end{equation}
This step effectively replaces the step in the mesh parameterization algorithm in \cite{Choi15a} which involves computing a quasi-conformal map.

Altogether, by solving Equation (\ref{eqt:laplacepc}) and Equation (\ref{eqt:laplacepc2}) and using a number of projections, we can obtain a conformal map $\widetilde{f}: P \to \mathbb{S}^2$. Note that the method in \cite{Choi15a} is based on certain manipulations of Beltrami coefficients and quasi-conformal maps. In contrast, our new method only involves solving Laplace equations. The equivalence between the two approaches can be explained as follows.

In the first step, the conformality distortion of the spherical parameterization is due to the error in the stereographic projection. Then in the approach in \cite{Choi15a}, the entire initial parameterization result is used in Equation (\ref{eqt:BeltramiPDE}) for computing a quasi-conformal map in order to cancel the distortion. The method is theoretically guaranteed by the composition formula (\ref{eqt:composition_formula}) of quasi-conformal maps. In contrast, in our new approach, we only make use of the most accurate part in the initial parameterization result. More explicitly, we use the southern-most regions as the boundary constraints and compute the remaining part of the spherical parameterization again, with the aid of the LB operator. The replacement of the south-pole step in \cite{Choi15a} by our new south-pole step can be justified by the following theorem.

\begin{theorem}
Let $(S_1,\sigma|dz|^2)$ and $(S_2,\rho|dw|^2)$ be two Riemann surfaces, and $\mu$ is a prescribed Beltrami differential on $S_1$. Then, the map solved by Equation (\ref{eqt:BeltramiPDE}) is a harmonic map between $(S_1,|dz + \mu d\bar{z}|^2)$ and $(S_2,\rho|dw|^2)$. Consequently, solving the Laplace equation (\ref{eqt:laplacepc2}) is equivalent to solving Equation (\ref{eqt:BeltramiPDE}).
\end{theorem}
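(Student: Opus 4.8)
The plan is to establish the theorem in two stages: first, identify the variational problem whose Euler--Lagrange equations are precisely the system~(\ref{eqt:BeltramiPDE}); second, reinterpret that variational problem as the Dirichlet energy of a map between the surface $(S_1,|dz+\mu\,d\bar z|^2)$ and $(S_2,\rho|dw|^2)$, so that its critical points are exactly the harmonic maps between these Riemann surfaces. The reduction of~(\ref{eqt:laplacepc2}) to~(\ref{eqt:BeltramiPDE}) then follows by specializing $\mu$ to the Beltrami differential induced by the earlier composition of projections, in which case the target metric $\rho|dw|^2$ is (conformally) the flat metric and the harmonic map equation becomes the plain Laplace equation $\Delta_{PC}\psi = 0$.

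For the first stage I would start from the functional
\begin{equation}
\mathcal{E}(u,v) = \frac{1}{2}\int_R \left(\alpha_1 |\nabla u|^2 + 2\alpha_2\,\nabla u\cdot\nabla v + \alpha_3|\nabla v|^2\right)\,dS,
\end{equation}
already mentioned in the excerpt in connection with the work of Jones and Mahadevan, where $\alpha_1,\alpha_2,\alpha_3$ are the functions of $\mu = \rho+i\tau$ given after~(\ref{eqt:linear}). A direct first-variation computation shows that the Euler--Lagrange equations of $\mathcal{E}$ are exactly
$\nabla\cdot(A(\nabla u)^T)=0$ and $\nabla\cdot(A(\nabla v)^T)=0$ with $A=\begin{pmatrix}\alpha_1&\alpha_2\\\alpha_2&\alpha_3\end{pmatrix}$, i.e.\ the system~(\ref{eqt:BeltramiPDE}); this is a routine integration by parts that I would carry out but not belabor. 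The substantive point is to recognize that, with $h=u+iv$, the integrand $\alpha_1|\nabla u|^2 + 2\alpha_2\nabla u\cdot\nabla v + \alpha_3|\nabla v|^2$ is — up to a conformal factor that integrates out — the energy density $|\nabla h|^2$ computed with respect to the Beltrami metric $|dz+\mu\,d\bar z|^2$ on the domure. Concretely, one introduces the quasi-conformal change of coordinates $w = \Phi(z)$ solving the Beltrami equation $\Phi_{\bar z} = \mu\,\Phi_z$; pulling back the flat metric on the $w$-plane yields precisely $|dz+\mu\,d\bar z|^2$ (up to scale), and a change-of-variables check confirms that the matrix $A$ is exactly the coefficient matrix of the Dirichlet form of $(S_1,|dz+\mu\,d\bar z|^2)$ expressed in the $z$-coordinate. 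Therefore $\mathcal{E}(u,v)$ is the Dirichlet energy $\int_{S_1}|\nabla h|^2\,dv_{|dz+\mu d\bar z|^2}$ of the map $h:(S_1,|dz+\mu\,d\bar z|^2)\to (S_2,\rho|dw|^2)$ — note that the Dirichlet energy is conformally invariant in the target, so the factor $\rho$ does not enter the energy density, only the admissible class of maps. Its critical points are by definition harmonic maps, which gives the first sentence of the theorem.

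For the second stage, I would observe that in the setting of~(\ref{eqt:laplacepc2}) the relevant Beltrami differential $\mu$ is the one carried by the map $P_S\circ P_N^{-1}\circ\phi$, so the Beltrami metric $|dz+\mu\,d\bar z|^2$ on the intermediate planar domain is exactly the metric pulled back from $\mathcal{M}$ through that map; and the target is the complex plane with (the conformal class of) the flat metric. A harmonic map into a flat target is just a componentwise harmonic function, so the harmonic map equation collapses to $\Delta_{PC}\psi=0$ — which is~(\ref{eqt:laplacepc2}) — while the same map is, by the first stage, a solution of~(\ref{eqt:BeltramiPDE}) with this $\mu$. Hence the two formulations coincide, establishing the ``consequently'' clause.

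I expect the main obstacle to be the second stage bookkeeping: carefully matching the abstract Beltrami metric $|dz+\mu\,d\bar z|^2$ with the metric actually induced on the intermediate domain by the composition $P_S\circ P_N^{-1}\circ\phi$, and checking that the ``outermost low-distortion boundary constraints'' $\psi(x)=x$ correspond, under the dictionary, to the boundary data that make the harmonic-map problem well posed (the same Dirichlet data used in~(\ref{eqt:BeltramiPDE}) after fixing the outermost region). The identification of $\mathcal{E}$ with a Dirichlet energy is conceptually the heart of the argument but is essentially the classical fact that solving a Beltrami equation amounts to uniformizing the Beltrami metric; once that is in place, the conformal invariance of the target handles the factor $\rho$ for free, and the rest is the first-variation computation together with the coordinate-change verification of the matrix $A$.
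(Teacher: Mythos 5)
Your overall strategy --- recognize the elliptic system (\ref{eqt:BeltramiPDE}) as the Euler--Lagrange system of an energy, then identify that energy with the Dirichlet energy of a map out of $(S_1,|dz+\mu d\bar{z}|^2)$ --- is in the same spirit as the paper's argument, but the paper executes it differently and your version has two concrete problems. The paper simply writes down the two $\rho$-weighted functionals $E_{harm}(h)=\int_{S_1}\rho(h(\zeta))(|h_{\zeta}|^2+|h_{\bar{\zeta}}|^2)\,dx\,dy$ (the harmonic energy in the uniformizing coordinate $\zeta$ of the distorted metric) and $E_{QC}(f)=\int_{S_1}\rho(f(z))|f_{\bar{z}}-\mu f_z|^2\,dx\,dy$ (a least-squares Beltrami energy whose critical points solve (\ref{eqt:BeltramiPDE})), and cites \cite{Lui15} for the key fact that these two functionals have the same set of critical points; the ``consequently'' clause then follows much as you describe.

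The first problem is your claim that ``the Dirichlet energy is conformally invariant in the target, so the factor $\rho$ does not enter.'' This is backwards: in two dimensions the harmonic map energy is conformally invariant in the \emph{domain}, not the target. For nonconstant $\rho$ the harmonic map equation into $(S_2,\rho|dw|^2)$ acquires a term involving $\partial_w\log\rho$, and the set of harmonic maps genuinely depends on $\rho$ within its conformal class; this is precisely why the paper keeps the weight $\rho$ in both $E_{harm}$ and $E_{QC}$ and leans on \cite{Lui15} to match their critical points. Your argument therefore only covers the flat-target case $\rho\equiv 1$ --- enough for the application to (\ref{eqt:laplacepc2}), but not for the theorem as stated. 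The second problem is the choice of functional: the Euler--Lagrange equations of the Jones--Mahadevan functional $\frac{1}{2}\int_R(\alpha_1|\nabla u|^2+2\alpha_2\nabla u\cdot\nabla v+\alpha_3|\nabla v|^2)\,dS$ are the \emph{coupled} system $\nabla\cdot(\alpha_1\nabla u+\alpha_2\nabla v)=0$, $\nabla\cdot(\alpha_2\nabla u+\alpha_3\nabla v)=0$, which is not the decoupled system (\ref{eqt:BeltramiPDE}). The functional you actually want is $\frac{1}{2}\int_R\left(\langle A\nabla u,\nabla u\rangle+\langle A\nabla v,\nabla v\rangle\right)dS$; for that one your coordinate computation does close up, since the metric $g$ induced by $|dz+\mu d\bar{z}|^2$ satisfies $\det g=(1-|\mu|^2)^2$ and $\sqrt{\det g}\,g^{-1}=A$, so this quadratic form is exactly the flat-target Dirichlet density in the distorted metric and its critical points are exactly the solutions of (\ref{eqt:BeltramiPDE}). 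With that substitution, and with the general-$\rho$ gap either closed or conceded to \cite{Lui15}, your stage-two reduction to (\ref{eqt:laplacepc2}) agrees with the paper's.
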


\begin{proof}
Let $\zeta$ be the coordinates of $S_1$ with respect to the distorted metric $|dz + \mu d\bar{z}|^2$. Then, the harmonic map between $(S_1,|dz + \mu d\bar{z}|^2)$ and $(S_2,\rho|dw|^2)$ is a critical point of the following energy
\begin{equation}
E_{harm}(h) = \int_{S_1} \rho(h(\zeta))(|h_{\zeta}|^2 + |h_{\bar{\zeta}}|^2) dx dy.
\end{equation}
On the other hand, by definition of the Beltrami equation (\ref{eqt:beltrami}), the solution to Equation (\ref{eqt:BeltramiPDE}) is the critical point of the following energy functional
\begin{equation}
E_{QC}(f) = \int_{S_1} \rho(f(z))(|f_{\bar{z}} - \mu f_z|^2) dx dy.
\end{equation}
It is shown in \cite{Lui15} that the above two energy functionals have the same set of critical points. Hence, solving the generalized Laplace equation (\ref{eqt:BeltramiPDE}) for a quasi-conformal map is equivalent to solving the Laplace equation (\ref{eqt:laplacepc2}) for a harmonic map under the distorted metric. Then, the conformality of our approach is again guaranteed by the composition formula (\ref{eqt:composition_formula}) of quasi-conformal maps.
\end{proof}

Therefore, in the continuous case, under suitable boundary conditions in Equation (\ref{eqt:BeltramiPDE}) and Equation (\ref{eqt:laplacepc2}), both of the two methods are theoretically guaranteed for producing a conformal map.

However, in the discrete case, the two methods perform differently. For the case of triangular meshes, the Beltrami coefficients can be accurately computed and the composition formula (\ref{eqt:composition_formula}) of quasi-conformal maps is accurate under the discretization. In this situation, the method in \cite{Choi15a} can be effectively applied. Yet, for the case of point clouds, we only have an approximation of the LB operator but not the Beltrami coefficients, and there is no guarantee about the composition formula (\ref{eqt:composition_formula}) of quasi-conformal maps. Consequently, it is more suitable to use our proposed method as it only involves solving the Laplace Equation (\ref{eqt:laplacepc2}). However, since the accuracy of our proposed method depends on the boundary constraints in solving the Laplace Equation (\ref{eqt:laplacepc2}), the boundary constraints obtained from the initial parameterization result may contain small error and hence slightly affect the result in solving Equation (\ref{eqt:laplacepc2}). Therefore, it is desirable to perform some more iterations for obtaining a more accurate result.

It is noteworthy that in the parameterization algorithm in \cite{Choi15a} for triangular meshes, no further steps are required after the second step. However, because of the abovementioned issue about the boundary constraints in the Laplace Equation (\ref{eqt:laplacepc2}), further iterations are necessary for enhancing the parameterization result. We call them the {\it North-South (N-S) reiterations}. In each N-S reiteration, two Laplace equations are solved again after the north-pole stereographic projection and the south-pole stereographic projection respectively. For solving each Laplace equation, we fix the outermost $r\%$ points on $\mathbb{C}$ to ensure the existence of the solution.

More specifically, in each N-S reiteration, we first project the previous spherical parameterization result onto the complex plane using the north-pole stereographic projection. Next, we compute a harmonic map $\widetilde{\phi}: (P_N \circ \widetilde{f})(P) \to \mathbb{C}$ by solving the Laplace equation
\begin{equation}\label{eqt:laplacepc3}
\Delta_{PC} \widetilde{\phi} = 0
\end{equation}
with the boundary constraints $\widetilde{\phi}(x) = x$ for the outermost $r\%$ of the data points on $\mathbb{C}$. After obtaining $\widetilde{\phi}$, the inverse north-pole stereographic projection is again applied, followed by the south-pole stereographic projection. Then, we compute another harmonic map $\widetilde{\psi}: (P_S \circ P_N^{-1} \circ \widetilde{\phi} \circ P_N \circ \widetilde{f})(P) \to \mathbb{C}$ by solving the Laplace equation
\begin{equation} \label{eqt:laplacepc4}
\Delta_{PC} \widetilde{\psi} = 0
\end{equation}
with the boundary constraints $\widetilde{\psi}(x) = x$ for the outermost $r\%$ of the data points on $\mathbb{C}$. We then define the updated spherical parameterization by the composition map
\begin{equation}
 P_S^{-1} \circ \widetilde{\psi} \circ P_S \circ P_N^{-1} \circ \widetilde{\phi} \circ P_N \circ \widetilde{f}.
\end{equation}
We check whether the above updated parameterization result is close to the previous parameterization result $\widetilde{f}$. If yes, then the parameterization is stable and we complete the algorithm. If no, we apply another N-S reiteration on the updated parameterization point by repeating the procedures and so on. In practice, we choose $r = 10$. Our proposed spherical conformal parameterization scheme is outlined in Algorithm \ref{alg:pc_spherical}.

To explain the motivation of our proposed N-S reiteration scheme, we define the \emph{N-S Dirichlet energy} by
\begin{equation}
\tilde{E}(f) = \frac{1}{2} \left(E(P_N(f)) + E(P_S(f))\right),
\end{equation}
where $E(f)$ is the Dirichlet energy. It follows that the minimum of $\tilde{E}$ is attained if and only if $E(P_N(f))$ and $E(P_S(f))$ are minimized, which implies that $E(f)$ is minimized and $f$ is conformal. Therefore, to find a conformal $f$, we can consider minimizing the N-S Dirichlet energy $\tilde{E}(f)$. More specifically, we aim to minimize both $E(P_N(f))$ and $E(P_S(f))$. Note that these two energies are respectively minimized by solving the Laplace equations (\ref{eqt:laplacepc3}) and (\ref{eqt:laplacepc4}) in our proposed N-S reiteration. Introducing the N-S reiteration for minimizing the energies is advantageous for two reasons. Firstly, it linearizes the computation as we only need to solve linear systems on $\mathbb{C}$. Secondly, it avoids the error induced by the stereographic projection as we consider both the north-pole step and the south-pole step in each reiteration.

Note that theoretically we only need to take away the two points of infinity, and at least 3 points on $\mathbb{C}$ are required to be fixed to guarantee the existence of the solution of the two Laplace equations. However, in terms of the numerical computations, the large matrix equations may be ill-posed if only 3 points are fixed as boundary constraints. Therefore, we consider fixing the outermost $r\%$ of data points in solving Equations (\ref{eqt:laplacepc3}) and (\ref{eqt:laplacepc4}). Nevertheless, fixing these extra points may not affect the accuracy of the solution too much. This can be explained with the aid of the Beltrami holomorphic flow \cite{Gardiner00} as follows.

\begin{theorem}[Beltrami holomorphic flow on $\overline{\mathbb{C}}$ \cite{Gardiner00}]
There is a one-to-one correspondence between the set of quasi-conformal diffeomorphisms of $\overline{\mathbb{C}}$ that fix the points $0, 1 $ and $\infty$ and the set of smooth complex-valued functions $\mu$ on $\overline{\mathbb{C}}$ with $\|\mu\|_{\infty} = k <1$. Furthermore, the solution $f^{\mu}$ to the Beltrami equation (\ref{eqt:beltrami}) depends holomorphically on $\mu$. Let $\{\mu(t)\}$ be a family of Beltrami coefficients depending on a real or complex parameter $t$. Suppose also that $\mu(t)$ can be written in the form
\begin{equation}
\mu(t)(z) = \mu(z) + t \nu(z) + t \epsilon(t)(z)
\end{equation}
for $z \in \mathbb{C}$, with suitable $\mu$ in the unit ball of $C^{\infty}(\mathbb{C}), \nu, \epsilon(t) \in L^{\infty}(\mathbb{C})$ such that $\|\epsilon(t) \|_{\infty} \to 0$ as $t \to 0$. Then, for all $w \in \mathbb{C}$,
\begin{equation}
f^{\mu(t)}(w) = f^{\mu}(w) + t V(f^{\mu}, \nu)(w) + o(|t|)
\end{equation}
locally uniformly on $\mathbb{C}$ as $t\to 0$, where
\begin{equation} \label{eqt:bhf}
V(f^{\mu}, \nu)(w) = - \frac{f^{\mu}(w)(f^{\mu}(w)-1)}{\pi} \int_{\mathbb{C}} \frac{\nu(z)((f^{\mu})_z (z))^2}{f^{\mu}(z)(f^{\mu}(z)-1)(f^{\mu}(z)-f^{\mu}(w))} dx dy.
\end{equation}
\end{theorem}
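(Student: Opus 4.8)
The statement is the classical first-variation formula for the normalized solution of the Beltrami equation, due to Bojarski--Ahlfors--Bers (see \cite{Gardiner00}); the plan is to recover it from the Measurable Riemann Mapping Theorem, the $L^{p}$-theory of the Beurling transform near $p=2$, and the explicit solution of a linear $\bar{\partial}$-equation carrying the $0,1,\infty$ normalization. \textbf{Step 1 (the bijection).} First I would apply the Measurable Riemann Mapping Theorem: for any measurable $\mu$ on $\overline{\mathbb{C}}$ with $\|\mu\|_{\infty}=k<1$ the Beltrami equation (\ref{eqt:beltrami}) admits a quasi-conformal homeomorphic solution, unique once one imposes $f^{\mu}(0)=0$, $f^{\mu}(1)=1$, $f^{\mu}(\infty)=\infty$, since two normalized solutions differ by a M\"obius map fixing three points and hence coincide. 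Conversely a quasi-conformal self-diffeomorphism of $\overline{\mathbb{C}}$ fixing $0,1,\infty$ has Beltrami coefficient $\mu=f_{\bar{z}}/f_{z}$ with $\|\mu\|_{\infty}<1$, and $\mu$ is smooth exactly when $f$ is; this establishes the one-to-one correspondence.

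\textbf{Step 2 (holomorphic dependence).} Passing to the principal (planar) solution of (\ref{eqt:beltrami}), its $\bar{z}$-derivative $\phi$ satisfies $\phi=\mu S\phi+\mu$, where $S$ is the Beurling transform. Because $\|S\|_{L^{p}\to L^{p}}\to 1$ as $p\to 2^{+}$, one can fix $p>2$ with $k\,\|S\|_{L^{p}}<1$, so $I-\mu S$ is invertible on $L^{p}$ and $\phi=\sum_{n\ge 0}(\mu S)^{n}\mu$, a series converging in $L^{p}$ uniformly on the closed ball $\{\|\mu\|_{\infty}\le k\}$. Composing with the Cauchy transform exhibits $\mu\mapsto f^{\mu}(w)$ as a locally uniform limit of polynomials in $\mu$, hence holomorphic on the unit ball, locally uniformly in $w$.

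\textbf{Step 3 (the first variation).} Holomorphy supplies a bounded complex-linear directional derivative; writing $\mu(t)=\mu+t(\nu+\epsilon(t))$ with $\|\epsilon(t)\|_{\infty}\to 0$ and using continuity of this derivative in the direction gives $f^{\mu(t)}(w)=f^{\mu}(w)+t\,V(f^{\mu},\nu)(w)+o(|t|)$ locally uniformly, where $V(f^{\mu},\nu)=\partial_{t}f^{\mu(t)}|_{t=0}$. To identify $V$, differentiate $f^{\mu(t)}_{\bar{z}}=\mu(t)\,f^{\mu(t)}_{z}$ at $t=0$: $\dot{g}:=V(f^{\mu},\nu)$ solves $\dot{g}_{\bar{z}}-\mu\,\dot{g}_{z}=\nu\,f^{\mu}_{z}$ on $\mathbb{C}$ with $\dot{g}(0)=\dot{g}(1)=0$ and the growth condition at $\infty$ inherited from the normalization. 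Changing to the variable $\omega=f^{\mu}(z)$ turns this into a pure $\bar{\partial}$-equation for $\dot{g}\circ(f^{\mu})^{-1}$ with right-hand side $[\nu\,(f^{\mu}_{z})^{2}/J_{f^{\mu}}]\circ(f^{\mu})^{-1}$, which is solved by the normalized Cauchy kernel for $\overline{\mathbb{C}}$ fixing $0,1,\infty$, namely $\omega(\omega-1)/[\zeta(\zeta-1)(\zeta-\omega)]$. Substituting back $\zeta=f^{\mu}(z)$ in the integral, the Jacobian $J_{f^{\mu}}$ cancels and leaves the factor $(f^{\mu}_{z})^{2}$, and evaluating at $\omega=f^{\mu}(w)$ produces precisely (\ref{eqt:bhf}).

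\textbf{The main obstacle.} The analytic heart is Step 2: establishing the sub-unit bound $k\,\|S\|_{L^{p}}<1$ for some $p>2$ and the ensuing uniform convergence of the Neumann series on the closed ball, which is what upgrades differentiability of $\mu\mapsto f^{\mu}$ to genuine holomorphy and legitimizes differentiating term by term. The kernel identity in Step 3 is then a careful but routine computation once the normalized Green's function of $\overline{\mathbb{C}}$ fixing $0,1,\infty$ has been written down. As this is the classical Bojarski--Ahlfors--Bers theorem, in practice I would simply cite \cite{Gardiner00} and use the outline above only as the conceptual roadmap.
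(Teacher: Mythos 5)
This theorem is quoted verbatim from the cited reference \cite{Gardiner00}; the paper supplies no proof of its own, so there is nothing internal to compare against. Your outline is the standard Bojarski--Ahlfors--Bers argument that the cited source itself uses --- Measurable Riemann Mapping Theorem for the normalized bijection, the Neumann series $\phi=\sum_{n\ge 0}(\mu S)^{n}\mu$ with $k\,\|S\|_{L^{p}}<1$ for $p$ slightly above $2$ to get holomorphic dependence, and the linearized $\bar{\partial}$-equation solved by the Cauchy kernel normalized at $0,1,\infty$ to identify $V(f^{\mu},\nu)$ --- and it is correct; your closing remark that one would in practice simply cite \cite{Gardiner00} is exactly what the authors do.
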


In our case, as the conformality distortion of the outermost region is negligible, $\nu$ is compactly supported around origin. Hence, it can be deduced from Equation (\ref{eqt:bhf}) in the above theorem that the data points located farther away from the origin will be associated with a smaller flow $V$, as the denominator in the integral becomes larger. Therefore, in each iteration, the outermost points will remain almost unchanged, while the innermost points (which have the largest conformality distortion) will be adjusted and improved. In other words, fixing more outermost points for the numerical stability does not affect the solution much. Numerical experiments are presented in Section \ref{experiment} to verify the convergence of our N-S reiteration scheme. Intuitively, the boundary constraints in Equations (\ref{eqt:laplacepc3}) and (\ref{eqt:laplacepc4}) are adjusted to the positions associated with a conformal map by the iterations. They are observed to eventually stabilize and hence we obtain the desired conformal map by solving Equations (\ref{eqt:laplacepc3}) and (\ref{eqt:laplacepc4}) with these boundary constraints.

Finally, we make an important remark about our proposed spherical conformal parameterization algorithm for genus-0 point clouds. In addition to genus-0 point clouds, our proposed algorithm also efficiently works on genus-0 triangular meshes. Note that for triangular meshes, the LB operator can be easily constructed by computing the cotangent weights on the given mesh structures. Also, solving Laplace equations on the complex plane requires only linear time. Hence, our proposed algorithm can serve as an alternative approach for computing spherical conformal parameterizations of genus-0 closed triangular meshes in linear time.

\begin{algorithm}[h]
\KwIn{A genus-0 point cloud $P$.}
\KwOut{A spherical conformal parameterization $f: P \to \mathbb{S}^2$.}
\BlankLine
Approximate the LB operator on $P$ and denote it by $\Delta_{PC}$\;
Find the most regular triple of points by solving problem (\ref{eqt:regularity})\;
Obtain a map $\phi: P \to \overline{\mathbb{C}}$ by solving the Laplace equation (\ref{eqt:laplacepc})\;
Apply the inverse stereographic projection $P_N^{-1} : \overline{\mathbb{C}} \to \mathbb{S}^2$ on $\phi(P)$\;
Apply the south-pole stereographic projection $P_S :\mathbb{S}^2 \to \overline{\mathbb{C}}$ on $(P_N^{-1} \circ \phi)(P)$\;
Solve the Laplace equation (\ref{eqt:laplacepc2}) for $\psi: (P_S \circ P_N^{-1} \circ \phi)(P) \to \mathbb{C}$\;
Apply the inverse south-pole stereographic projection $P_S^{-1}$ and denote the overall composition of the maps by $f = P_S^{-1} \circ \psi \circ P_S \circ P_N^{-1} \circ \phi$\;
\Repeat{$\text{mean}(\|f(p_i) - \widetilde{f}(p_i)\|^2) < \epsilon$}{
Update $\widetilde{f}$ by $f$\;
Solve the Laplace equation (\ref{eqt:laplacepc3}) for $\widetilde{\phi}: (P_N \circ \widetilde{f})(P) \to \mathbb{C}$\;
Solve the Laplace equation (\ref{eqt:laplacepc4}) for $\widetilde{\psi}: (P_S \circ P_N^{-1} \circ \widetilde{\phi} \circ P_N \circ \widetilde{f})(P) \to \mathbb{C}$\;
Update $f$ by $P_S^{-1} \circ \widetilde{\psi} \circ P_S \circ P_N^{-1} \circ \widetilde{\phi} \circ P_N \circ \widetilde{f}$\;
}
\caption{Our proposed spherical conformal parameterization algorithm.}
\label{alg:pc_spherical}
\end{algorithm}

\subsection{Improving the distribution of the spherical parameterization}
It is obvious that spherical conformal parameterizations are unique only up to M\"obius transformations. Although the conformality does not change under the M\"obius transformations, the distribution of the points on the sphere does. The distribution is crucial for meshing. Hence, it is desirable to obtain an even distribution of the points on the sphere.

In the spherical conformal parameterization algorithm for triangular meshes \cite{Choi15a}, Choi {\it et al.} proved the following theorem:

\begin{theorem}[See \cite{Choi15a}, P.75] \label{invariance}
Let $T_1$ and $T_2$ be two triangles of $\mathbb{C}$. The product of the perimeters of $T_1$ and $P_S(P_N^{-1}(T_2))$ is invariant under arbitrary scaling of $T_1$ and $T_2$.
\end{theorem}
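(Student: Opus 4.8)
\medskip
\noindent\emph{Proof strategy.} The plan is to reduce everything to an explicit formula for the composite map $P_S \circ P_N^{-1} : \overline{\mathbb{C}} \to \overline{\mathbb{C}}$, after which the invariance becomes a matter of elementary bookkeeping. First I would substitute the stated formula for $P_N^{-1}(x+iy)$ into $P_S$. Writing $z = x+iy$ and $|z|^2 = x^2+y^2$, the third coordinate of $P_N^{-1}(z)$ is $Z = \frac{|z|^2-1}{1+|z|^2}$, so that $1+Z = \frac{2|z|^2}{1+|z|^2}$; dividing the first two coordinates of $P_N^{-1}(z)$ by $1+Z$ gives $P_S(P_N^{-1}(z)) = \frac{x}{|z|^2} + i\,\frac{y}{|z|^2} = \frac{z}{|z|^2} = \frac{1}{\bar z}$. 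Hence $P_S \circ P_N^{-1}$ is exactly the inversion $z \mapsto 1/\bar z$ in the unit circle. This short identity is the crux of the argument; everything else is routine.

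Next I would compute the perimeter of $P_S(P_N^{-1}(T_2))$ in terms of the vertices of $T_2$. If $T_2$ has vertices $b_1,b_2,b_3 \in \mathbb{C}\setminus\{0\}$, then the edge of $P_S(P_N^{-1}(T_2))$ joining $1/\bar b_i$ and $1/\bar b_j$ has length $\left|\frac{1}{\bar b_i} - \frac{1}{\bar b_j}\right| = \frac{|b_i-b_j|}{|b_i|\,|b_j|}$, so the perimeter equals $\sum_{\{i,j\}} \frac{|b_i-b_j|}{|b_i|\,|b_j|}$, the sum running over the three edges. Replacing $T_2$ by the scaled triangle $\lambda T_2$, i.e.\ $b_i \mapsto \lambda b_i$ for a fixed $\lambda>0$, multiplies each summand by $\frac{\lambda}{\lambda^2} = \frac{1}{\lambda}$, so that $\mathrm{perim}\big(P_S(P_N^{-1}(\lambda T_2))\big) = \frac{1}{\lambda}\,\mathrm{perim}\big(P_S(P_N^{-1}(T_2))\big)$.

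Finally I would combine this with the trivial fact $\mathrm{perim}(\lambda T_1) = \lambda\,\mathrm{perim}(T_1)$. Since in the statement $T_1$ and $T_2$ are rescaled together by a common factor $\lambda$ (the rescaling being induced on the planar domain that carries both triangles), the product of the two perimeters transforms as $\big(\lambda\,\mathrm{perim}(T_1)\big)\big(\frac{1}{\lambda}\,\mathrm{perim}(P_S(P_N^{-1}(T_2)))\big) = \mathrm{perim}(T_1)\,\mathrm{perim}(P_S(P_N^{-1}(T_2)))$, which is independent of $\lambda$. This proves the claimed invariance.

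I do not anticipate a genuine obstacle. The only points requiring care are the computation of $P_S\circ P_N^{-1}$ and the exclusion of degenerate configurations in which a vertex of $T_2$ lands at $0$ or $\infty$ (a pole of one of the projections), which cannot occur for a fixed nondegenerate triangle lying away from the origin. Once the inversion formula $P_S\circ P_N^{-1}(z) = 1/\bar z$ is established, each remaining step is a one-line verification.
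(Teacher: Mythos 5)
Your proof is correct and follows essentially the same route the paper uses (for the point-set generalization, Theorem~\ref{invariance2}): write $P_S\circ P_N^{-1}$ explicitly as the inversion in the unit circle, observe that it converts a scaling by $\lambda$ of the vertices into a scaling by $1/\lambda$ of the pairwise distances, and cancel this against the factor $\lambda$ coming from the perimeter of $T_1$. (Minor remark: you obtain $P_S(P_N^{-1}(z))=1/\bar z$ while the paper's displayed computation gives $-\overline{z}/|z|^2=-1/z$; the difference is a sign slip in the paper's first coordinate and is immaterial, since only moduli enter the argument.)
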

\\
With this theorem, Choi {\it et al.} achieved an even distribution of a spherical parameterization mesh by applying the stereographic projection on the sphere, and then considering the outermost triangle $T$ and the innermost triangle $t$ on the complex plane. They scaled the planar domain by a factor so that $T$ and $t$ are with the same perimeters on the sphere, under the inverse stereographic projection.

In our case, the above idea does not work as we do not have any information about the connectivity of the point clouds. However, we can extend Theorem \ref{invariance} for point clouds by considering two sets of points. The extension is as follows:

\begin{theorem}\label{invariance2}
Let $\{u_i\}_{i=0}^m$ and $\{v_j\}_{j=0}^n$ be two sets of points on $\mathbb{C}$. Then
\[
\begin{split}
&\left(\sum_{i=1}^m \|\lambda u_i-\lambda u_0\|\right) \left(\sum_{j=1}^n \left\|P_S(P_N^{-1}(\lambda v_j))-P_S(P_N^{-1}(\lambda v_0))\right\| \right) \\
= &\left(\sum_{i=1}^m \|u_i-u_0\|\right) \left(\sum_{j=1}^n \left\|P_S(P_N^{-1}(v_j))-P_S(P_N^{-1}(v_0))\right\| \right)\\
\end{split}
\]
for any scaling factor $\lambda \neq 0$. In other words, the product is an invariance under arbitrary scaling.\\
\end{theorem}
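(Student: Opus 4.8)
The plan is to reduce the whole statement to a single explicit computation, namely the closed form of the composition $P_S \circ P_N^{-1} : \overline{\mathbb{C}} \to \overline{\mathbb{C}}$. First I would substitute the formula for $P_N^{-1}(x+iy)$ into the formula for $P_S$. Writing $r^2 = x^2 + y^2$, the third coordinate of $P_N^{-1}(x+iy)$ is $Z = \frac{r^2-1}{1+r^2}$, so that $1 + Z = \frac{2r^2}{1+r^2}$; dividing the first two coordinates of $P_N^{-1}(x+iy)$ by $1+Z$ then gives
\[
P_S(P_N^{-1}(z)) \;=\; \frac{x}{r^2} + i\,\frac{y}{r^2} \;=\; \frac{z}{|z|^2} \;=\; \frac{1}{\bar z},
\]
valid for every $z = x + iy \neq 0$ (with the poles $0$ and $\infty$ interchanged on the Riemann sphere). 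This is the only genuine calculation in the argument, and it is a routine simplification of the denominators.

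Given this formula, the required homogeneity is immediate. For any nonzero scalar $\lambda$ and any point $v \in \mathbb{C}\setminus\{0\}$,
\[
P_S\bigl(P_N^{-1}(\lambda v)\bigr) \;=\; \frac{1}{\overline{\lambda v}} \;=\; \frac{1}{\bar\lambda}\cdot\frac{1}{\bar v} \;=\; \frac{1}{\bar\lambda}\, P_S\bigl(P_N^{-1}(v)\bigr),
\]
so for two such points $v_j$ and $v_0$ one obtains
\[
\bigl\| P_S(P_N^{-1}(\lambda v_j)) - P_S(P_N^{-1}(\lambda v_0)) \bigr\| \;=\; \frac{1}{|\lambda|}\,\bigl\| P_S(P_N^{-1}(v_j)) - P_S(P_N^{-1}(v_0)) \bigr\|.
\]
Summing over $j = 1, \dots, n$, the second factor on the left-hand side of the claimed identity equals $\frac{1}{|\lambda|}$ times the corresponding factor on the right-hand side.

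For the first factor I would simply invoke homogeneity of the Euclidean norm: $\sum_{i=1}^m \|\lambda u_i - \lambda u_0\| = |\lambda| \sum_{i=1}^m \|u_i - u_0\|$. Multiplying the two factors, the $|\lambda|$ produced by the first factor cancels the $1/|\lambda|$ produced by the second, which is exactly the asserted invariance; the argument works verbatim for complex $\lambda$ since only $|\lambda|$ enters. I do not expect a real obstacle here. The only point requiring care is the behaviour at the poles — a data point at the origin is sent to $\infty$ by $P_S \circ P_N^{-1}$ — but the intended meshing application involves only finite data points bounded away from the origin, so it suffices to state the identity for $v_0, \dots, v_n \in \mathbb{C}\setminus\{0\}$, and the proof above covers that case directly.
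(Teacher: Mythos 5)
Your proposal is correct and follows essentially the same route as the paper: compute the closed form of $P_S \circ P_N^{-1}$ as an inversion, note that it scales distances by $1/|\lambda|$ while the identity scales them by $|\lambda|$, and multiply. (Your simplification $P_S(P_N^{-1}(z)) = 1/\bar z$ actually matches the paper's stated definitions more faithfully than the paper's own displayed formula, which carries a spurious sign on the real part; the sign is immaterial to the invariance, and your remark about excluding the origin is a sensible added precaution.)
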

\begin{proof}
 We prove the theorem using the approach in \cite{Choi15a}. Note that for any $z = x+iy$, we have
\begin{equation}
 \begin{split}
  P_S(P_N^{-1}(z)) &= P_S(P_N^{-1}(x+iy))\\
 &= \frac{-\frac{2x}{1+x^2+y^2}}{1+\frac{-1+x^2+y^2}{1+x^2+y^2}} + i\frac{\frac{2y}{1+x^2+y^2}}{1+\frac{-1+x^2+y^2}{1+x^2+y^2}}\\
 &= \frac{-x}{x^2+y^2}+ i\frac{y}{x^2+y^2} = \frac{-Re(z)}{|z|^2}+ i\frac{Im(z)}{|z|^2}.\\
 \end{split}
\end{equation}

Hence, for any scaling factor $\lambda \neq 0$, we have
\begin{equation}
 \begin{split}
  &\left(\sum_{i=1}^m \|\lambda u_i-\lambda u_0\|\right) \left(\sum_{j=1}^n \left\|P_S(P_N^{-1}(\lambda v_j))-P_S(P_N^{-1}(\lambda v_0))\right\|\right) \\
=& \left(\sum_{i=1}^m \|\lambda u_i-\lambda u_0\|\right) \left(\sum_{j=1}^n \left\|\frac{-Re(\lambda v_j)}{|\lambda v_j|^2}+ i\frac{Im(\lambda v_j)}{|\lambda v_j|^2}-\frac{-Re(\lambda v_0)}{|\lambda v_0|^2}+ i\frac{Im(\lambda v_0)}{|\lambda v_0|^2}\right\| \right) \\
=& \left(\lambda \sum_{i=1}^m \| u_i- u_0\|\right) \left(\frac{\lambda}{\lambda^2} \sum_{j=1}^n \left\|\frac{-Re(v_j)}{|v_j|^2}+ i\frac{Im(v_j)}{|v_j|^2}-\frac{-Re(v_0)}{|v_0|^2}+ i\frac{Im(v_0)}{|v_0|^2}\right\| \right) \\
= &\left(\sum_{i=1}^m \|u_i-u_0\|\right) \left(\sum_{j=1}^n \left\|P_S(P_N^{-1}(v_j))-P_S(P_N^{-1}(v_0))\right\| \right).\\
\end{split}
\end{equation}

Therefore, the product is an invariance.
\end{proof}
\\
To apply this theorem for obtaining an even distribution of our spherical parameterization result, we propose to use the average distance between the poles on the unit sphere and their $k$-NN neighborhoods. More specifically, suppose $v_N$ and $v_S$ are the northernmost point and the southernmost point on the spherical parameterization result $f(P)$ obtained by Algorithm \ref{alg:pc_spherical} respectively. By the north-pole stereographic projection $P_N$, $v_N$ is mapped to the point $x_N$ on the complex plane. On the other hand, by the south-pole stereographic projection $P_N$, $v_S$ is mapped to the point $x_S$ on the complex plane. Denote the average distances of $x_N$ and $x_S$ to their $k$-NN neighborhood on their corresponding planar domain by $d_N$ and $d_S$ respectively. $d_N$ and $d_S$ are explicitly given by
\begin{equation}
 d_p = \text{mean}(\{|P_N(f(z)) - x_N|: z \in \mathcal{N}^{k}(f^{-1}(v_N))\})
\end{equation}
and
\begin{equation}
 d_s = \text{mean}(\{|P_S(f(z)) - x_S|: z \in \mathcal{N}^{k}(f^{-1}(v_S))\}).
\end{equation}
Then, we scale the whole planar domain $(P_N \circ f)(P)$ by a scaling factor
\begin{equation}
 \lambda = \frac{\sqrt{d_p \times d_s}}{d_p}.
\end{equation}
Now, denote the two updated average distances by $\tilde{d}_p$ and $\tilde{d}_s$. It follows that
\begin{equation}
 \tilde{d}_p = \lambda d_p = \frac{\sqrt{d_p \times d_s}}{d_p} \times d_p = \sqrt{d_p \times d_s}.
\end{equation}
Also, by Theorem \ref{invariance2}, we have
\begin{equation}
\tilde{d}_p \times \tilde{d}_s= d_p \times d_s.
\end{equation}
Therefore,
\begin{equation}
\tilde{d}_s = d_p \times d_s \times \frac{1}{\tilde{d}_p} = \sqrt{d_p \times d_s}.
\end{equation}
In other words, the two updated average distances $d_p$ and $d_s$ defined on the new spherical parameterization result $P_N^{-1}(\lambda(P_N(f(P))))$ are equal. This indicates that the distribution of the points at the two poles is balanced. Hence, Algorithm \ref{alg:pc_spherical} together with the described balancing scheme provide us with a a spherical conformal parameterization with an even distribution. Our balancing scheme is summarized in Algorithm \ref{alg:balance}.

\begin{algorithm}[h]
\KwIn{A spherical conformal parameterization $f:P \to \mathbb{S}^2$.}
\KwOut{A spherical conformal parameterization with improved distribution.}
\BlankLine
Apply the north-pole stereographic projection $P_N$ on $f(P)$\;
Denote the northernmost and the southernmost points of $f(P)$ by $v_N$ and $v_S$. Multiply all points in $P_N(f(P))$ by a scaling factor $\lambda = \frac{\sqrt{d_p \times d_s}}{d_p}$, where $d_p = \text{mean}(\{|P_N(f(z)) - x_N|: z \in \mathcal{N}^{k}(f^{-1}(v_N))\})$ and $d_s = \text{mean}(\{|P_S(f(z)) - x_S|: z \in \mathcal{N}^{k}(f^{-1}(v_S))\})$\;
Apply the inverse north-pole stereographic projection $P_N^{-1}$ on $\lambda(P_N(f(P)))$\;
\caption{Our proposed balancing scheme for better distribution.}
\label{alg:balance}
\end{algorithm}

\subsection{Meshing using spherical conformal parameterization}
In this subsection, we present our meshing framework for genus-0 point clouds. Directly triangulating a point cloud is difficult because of its complicated geometry. However, with the aid of the spherical conformal parameterization of point clouds, the difficulty is significantly alleviated. Instead of triangulating a point cloud, we triangulate the unit sphere obtained by our spherical conformal parameterization algorithm. Algorithms for triangulating a spherical point cloud are well-established. In particular, the spherical Delaunay triangulation method, which computes a Delaunay triangulation on the unit sphere, is the most suitable method for our purpose.

Delaunay triangulations are widely used in computer graphics because of their good triangle quality. More specifically, Delaunay triangulations are advantageous as they maximize the minimum angle in every triangle in the triangulations and hence avoid skinny triangles. With this important property, the triangulations generated by this method are more regular than the common triangulation methods.

By applying the spherical Delaunay triangulation method on the spherical conformal parameterization of a genus-0 point cloud, we obtain a nice triangulation on the spherical point cloud. Since the points on the original point cloud and those obtained by the spherical conformal parameterization have a 1-1 correspondence, the triangulation on the spherical point cloud naturally induces a triangulation on the original point cloud. It is noteworthy that since the parameterization is conformal, the angles of the new triangulation on the original point cloud are well preserved. In other words, the regularity of the triangulation defined on the original point cloud closely resembles that of the spherical Delaunay triangulation. Moreover, the meshing result is guaranteed to be a genus-0 closed triangular mesh because of the spherical Delaunay method. This completes our goal of meshing a genus-0 point cloud. Our meshing framework is described in Algorithm \ref{alg:meshing}.

\begin{algorithm}[h]
\KwIn{A genus-0 point cloud $P$.}
\KwOut{A triangular mesh $M = (P,T)$ where $T$ is a triangulation of $P$.}
\BlankLine
Apply Algorithm \ref{alg:pc_spherical} and Algorithm \ref{alg:balance} to obtain a spherical conformal parameterization $f: P \to \mathbb{S}^2$\;
Compute a triangulation $T$ on $f(P)$ using the spherical Delaunay algorithm\;
Use $T$ to form a triangular mesh $M = (P,T)$\;
\caption{Our proposed meshing framework for genus-0 point clouds.}
\label{alg:meshing}
\end{algorithm}

Before ending this section, we make an important remark about a possible extension of our proposed framework. In fact, our proposed parameterization and meshing scheme can be possibly extended for point clouds with disk topology. In this case, we can first extend the double covering technique \cite{Gu03,Choi15c} to turn a point cloud with disk topology into a genus-0 point cloud. More specifically, given a point cloud $P$ of a simply-connected open surface $\mathcal{M}$, we turn $P$ into a point cloud $\widetilde{P}$ with spherical topology and approximate the derivatives on it by the following steps.

\begin{enumerate}[Step 1: ]
 \item Approximate the derivatives on $P$ using the $k$-NN algorithm and the moving least square method.
 \item Duplicate $P$ and denote the copy of it by $P'$.
 \item Define the derivatives on $P'$ using the results in Step 1, with reversed orientations.
 \item Identify the boundary points of $P$, $P'$ and obtain a genus-0 point cloud $\widetilde{P}$.
 \item Create the LB operator for $\widetilde{P}$ using the derivatives on $P$ and $P'$.
\end{enumerate}

Then, we can apply the abovementioned spherical conformal parameterization algorithm on $\widetilde{P}$ to obtain a spherical point cloud. After that, by applying the stereographic projection on the southern hemisphere, we obtain a planar conformal parameterization of $P$. Finally, we can easily compute a Delaunay triangulation on the planar parameter domain. Since both the parameterization algorithm and the stereographic projection produce conformal results, this triangulation on the planar domain accurately induces a regular mesh structure on $P$. This completes the task of meshing a point cloud with disk topology.

\section{Experimental results} \label{experiment}
In this section, we demonstrate the effectiveness of our proposed framework for meshing genus-0 point clouds using spherical conformal parameterization. In the following, we assess the performance of our proposed framework in different aspects. The datasets used in the experiments are freely adapted from the AIM@SHAPE Shape Repository \cite{aim@shape}, the Stanford 3D Scanning Repository \cite{stanford} and the RGB-D Scenes Dataset v.2 \cite{Lai14b}. The mentioned algorithms are implemented in MATLAB. The sparse linear systems for the Laplace equations are solved using the built-in backslash operator (\textbackslash) in MATLAB. All experiments are performed on a PC with an Intel(R) Core(TM) i5-3470 CPU @3.20 GHz processor and 8.00 GB RAM.

\subsection{The performance of our approximation of the Laplace-Beltrami operator}
\begin{figure}[t]
\centering
\includegraphics[width=0.33\textwidth]{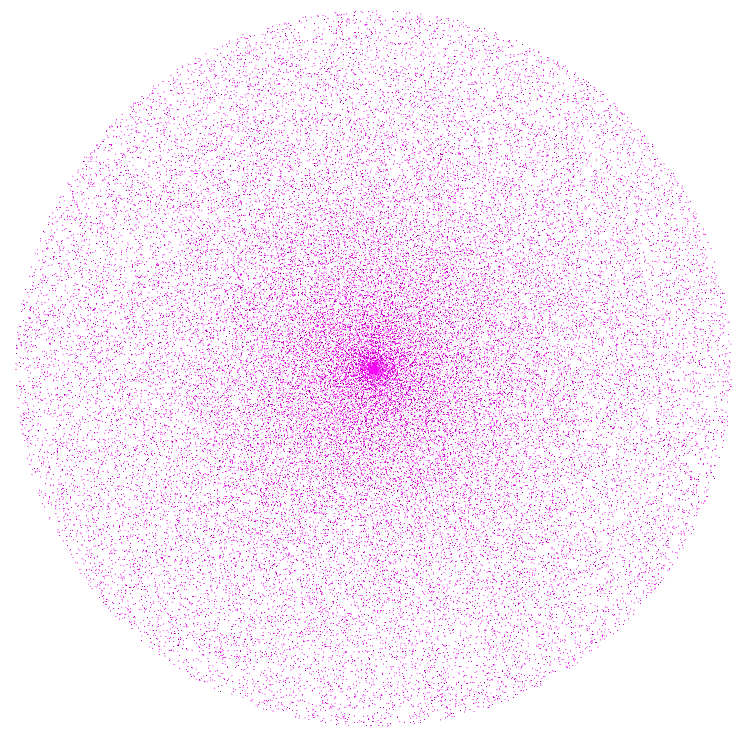}
\includegraphics[width=0.45\textwidth]{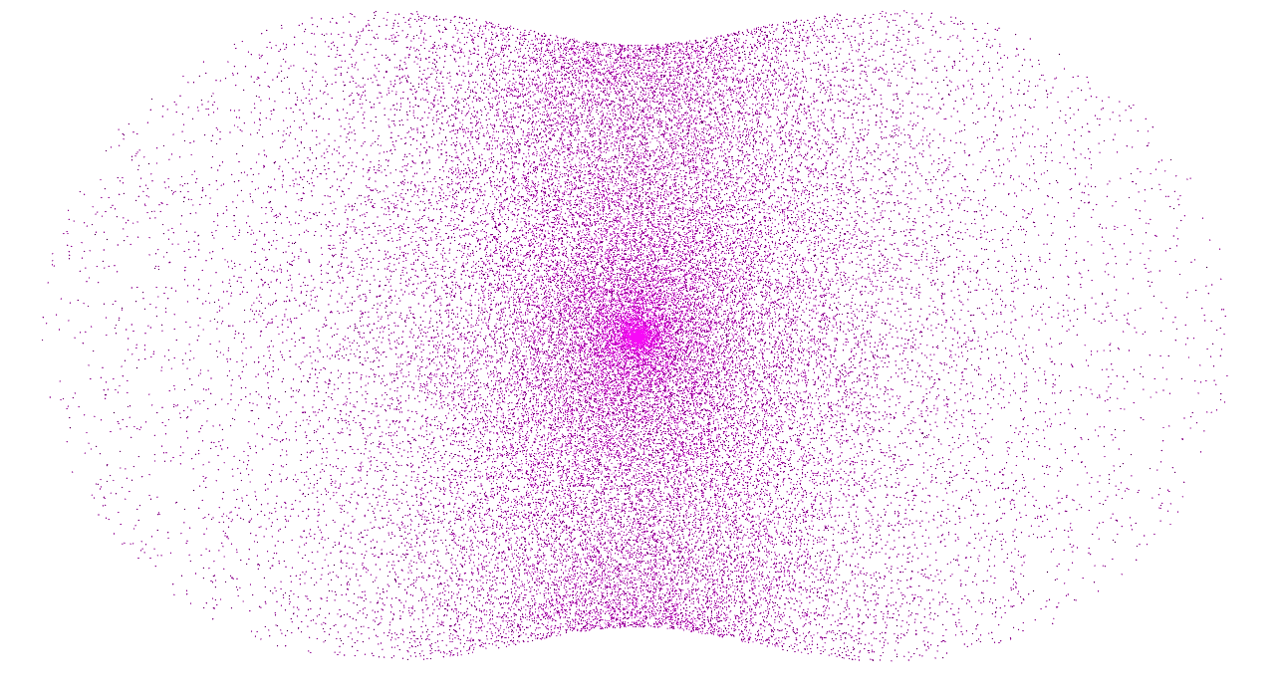}
\includegraphics[width=0.33\textwidth]{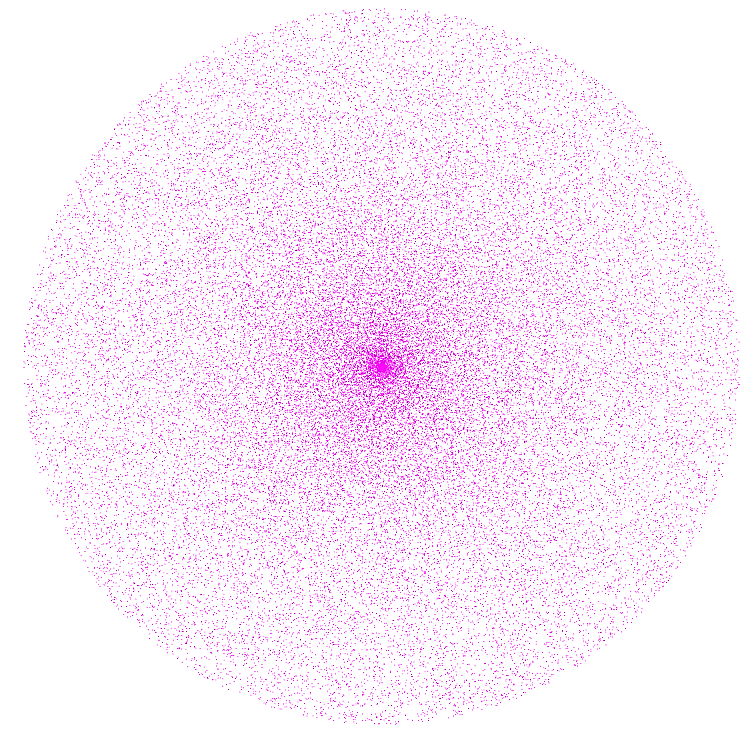}
\includegraphics[width=0.45\textwidth]{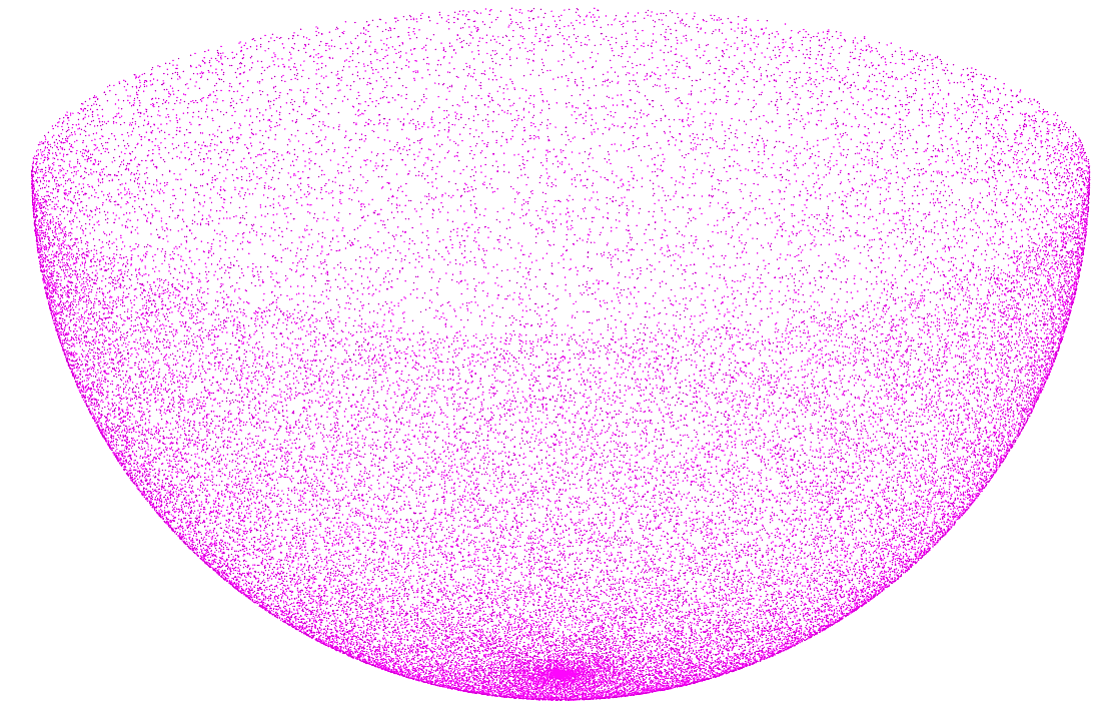}
\caption{Two experiments for assessing the approximation accuracies of the Laplace-Beltrami operator. In each experiment, we generate a point cloud on the unit disk and transform it using a conformal map with an explicit formula. We then approximate the LB operator on the transformed point cloud and solve the Laplace equation back onto the unit disk. Top: the first experiment. Bottom: the second experiment.}
\label{fig:example}
\end{figure}

\begin{table}
\centering
\begin{tabular}{|l|C{25mm}|C{25mm}|}
\hline
Method & maximum position error & average position error \\ \hline
Local mesh method \cite{Lai13} & 1.3427 & 0.0179 \\ \hline
MLS with the Wendland weight in \cite{Wendland95,Wendland01} & 3.3074 & 0.1696\\ \hline
MLS with the Gaussian weight in \cite{Belkin08,Belkin09} & 0.5697 & 0.0114\\ \hline
MLS with the special weight in \cite{Liang12} & 0.0427 & 0.0006\\ \hline
MLS with our proposed weight & 0.0245 & 0.0004\\ \hline
\end{tabular}
\\\bigskip
\begin{tabular}{|l|C{25mm}|C{25mm}|}
\hline
Method & maximum position error & average position error \\ \hline
Local mesh method \cite{Lai13} & 1.5148 & 0.0271 \\ \hline
MLS with the Wendland weight in \cite{Wendland95,Wendland01} &  2.0082  & 0.0803 \\ \hline
MLS with the Gaussian weight in \cite{Belkin08,Belkin09} & 1.5460 & 0.0925\\ \hline
MLS with the special weight in \cite{Liang12} & 0.0110  &  0.0001\\ \hline
MLS with our proposed weight &  0.0103  & 0.0002 \\ \hline
\end{tabular}
\bigbreak
\caption{The approximation error in the two experiments. Top: the first experiment. Bottom: the second experiment.}
\label{table:example}
\end{table}

In this work, we apply the MLS method with a new weight function for approximating the Laplace-Beltrami operator. It is natural to ask whether our proposed weight function produces better results. It is also necessary to compare other approximation approaches such as the local mesh method to justify our choice. In this subsection, we compare the numerical accuracy of the local mesh method and the MLS method with several weighting functions for approximating the LB operator on point clouds. More specifically, we compare the performance of the following methods:
\begin{enumerate}
\item The local mesh method \cite{Lai13},
\item The MLS method with the Wendland weight in \cite{Wendland95,Wendland01},
\item The MLS method with the Gaussian weight in \cite{Belkin08,Belkin09},
\item The MLS method with the special weight in \cite{Liang12}, and
\item The MLS method with our proposed weight function.
\end{enumerate}

Experiments are carried out for assessing the numerical accuracies of the abovementioned approaches. Figure \ref{fig:example} shows the setups in two of the experiments. In each experiment, we first generate a point cloud on the unit disk. This serves as the ground truth in our analysis. Then, we transform the point cloud using a conformal map with an explicit formula. We apply the mentioned approximation schemes for approximating the LB operator on the transformed point cloud. Then, we solve the Laplace equation with the circular boundary constraints on the original unit disk. Theoretically, the result obtained by the disk harmonic map should be exactly the same as the original point cloud, as the transformation is given by a conformal map with an explicit formula. In other words, the ideal position error between the disk harmonic map and the original point cloud should be 0. By measuring the maximum and average position error between the pairs of points, we can evaluate the accuracy of the aforementioned approximation schemes for approximating the LB operator.

Table \ref{table:example} illustrates the approximation error of different approaches in the two experiments. It is noteworthy that in both experiments, the MLS method with our proposed weight function produces approximations which are much more accurate those produced by the local mesh method and the MLS method with the Wendland weight \cite{Wendland95,Wendland01} and the Gaussian weight in \cite{Belkin08,Belkin09}. With similar and negligible average position errors, our proposed scheme reduces the maximum position errors by about 25\% on average when compared with the MLS method with the special weight \cite{Liang12}. The comparisons reflect the advantage of our proposed method for approximating the LB operator.

\subsection{Performance of our proposed spherical conformal parameterization}
\begin{figure}[t]
 \centering
 \includegraphics[width=0.31\textwidth]{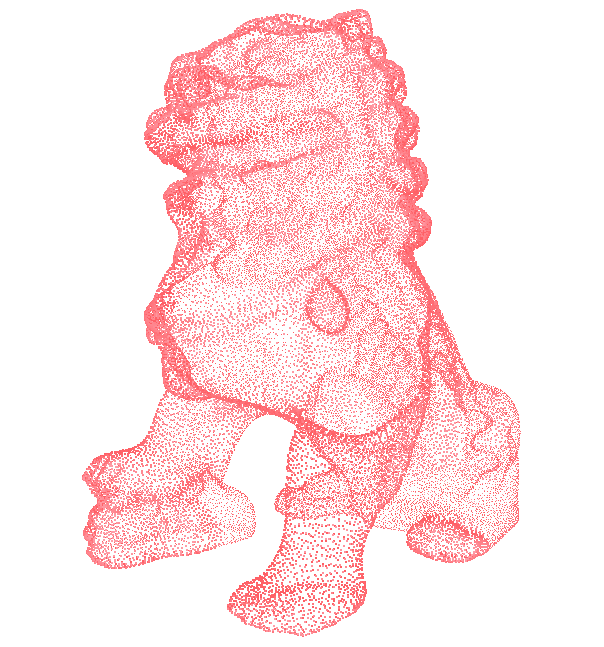}
 \includegraphics[width=0.35\textwidth]{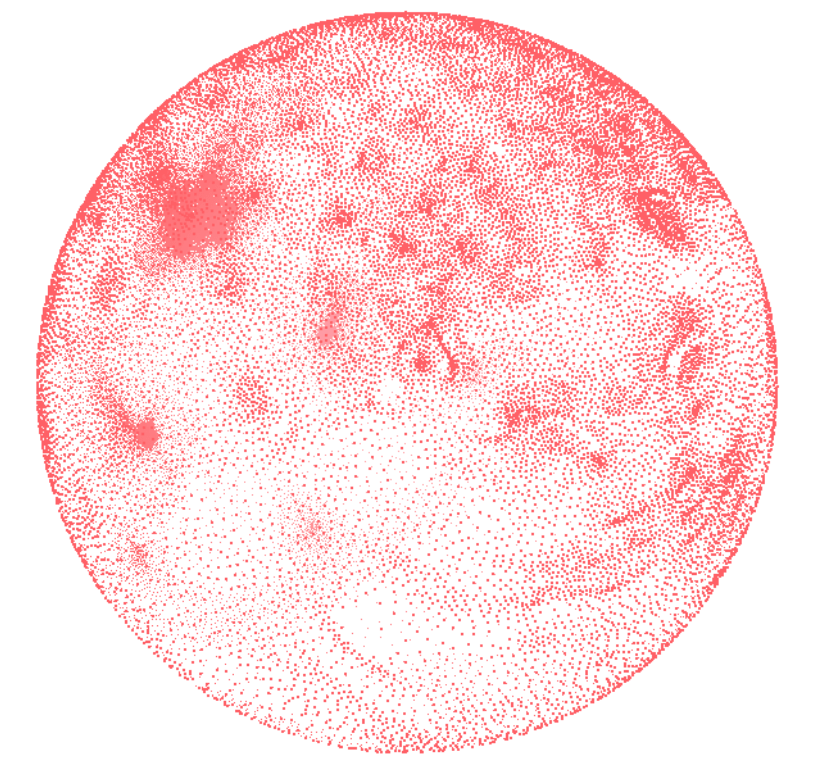}
 \includegraphics[width=0.26\textwidth]{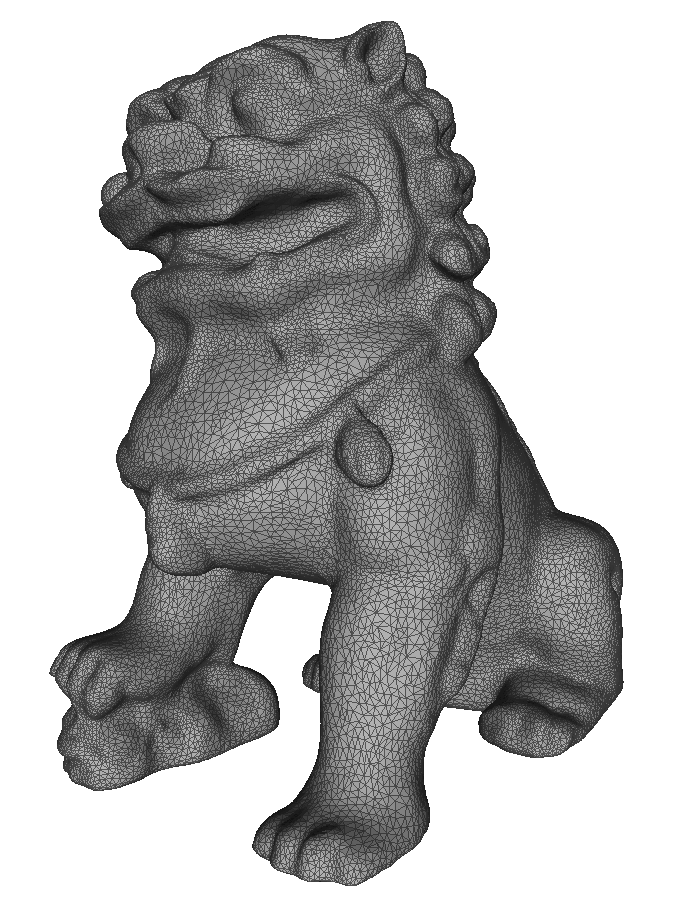}
 \newline
 \includegraphics[width=0.32\textwidth]{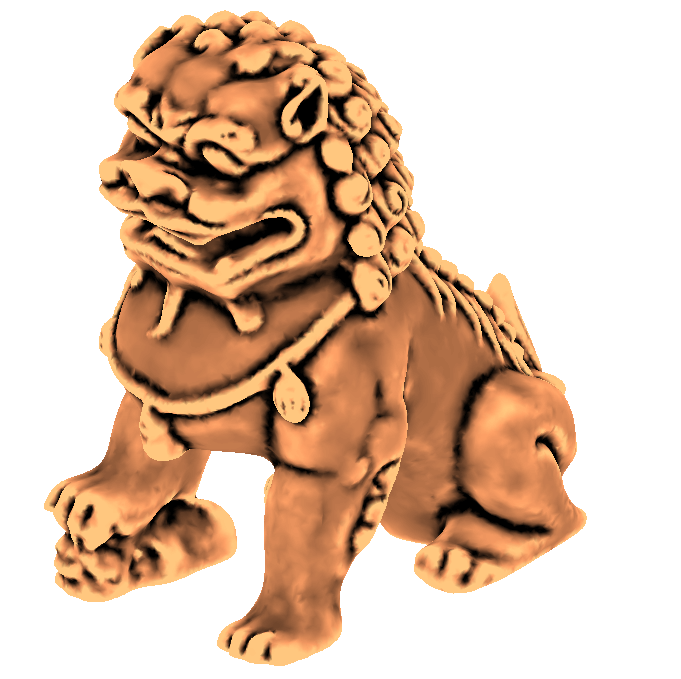}
 \includegraphics[width=0.32\textwidth]{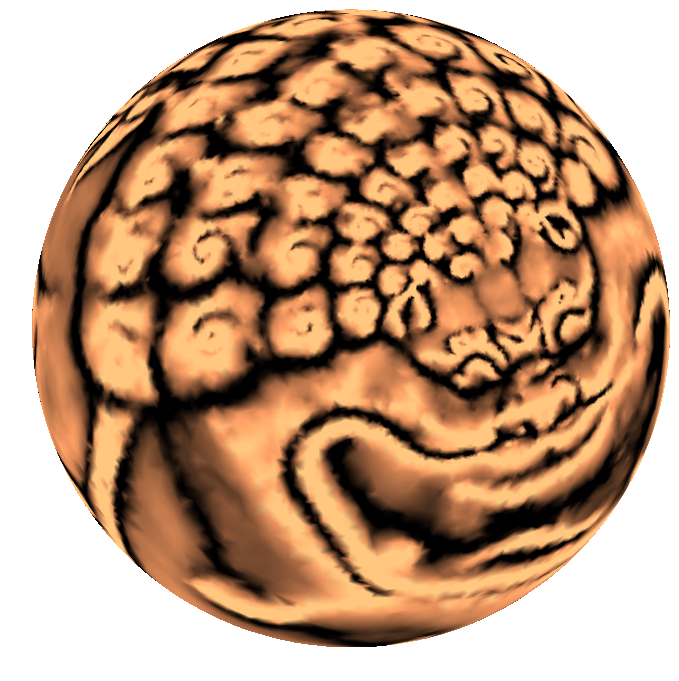}
 \includegraphics[width=0.32\textwidth]{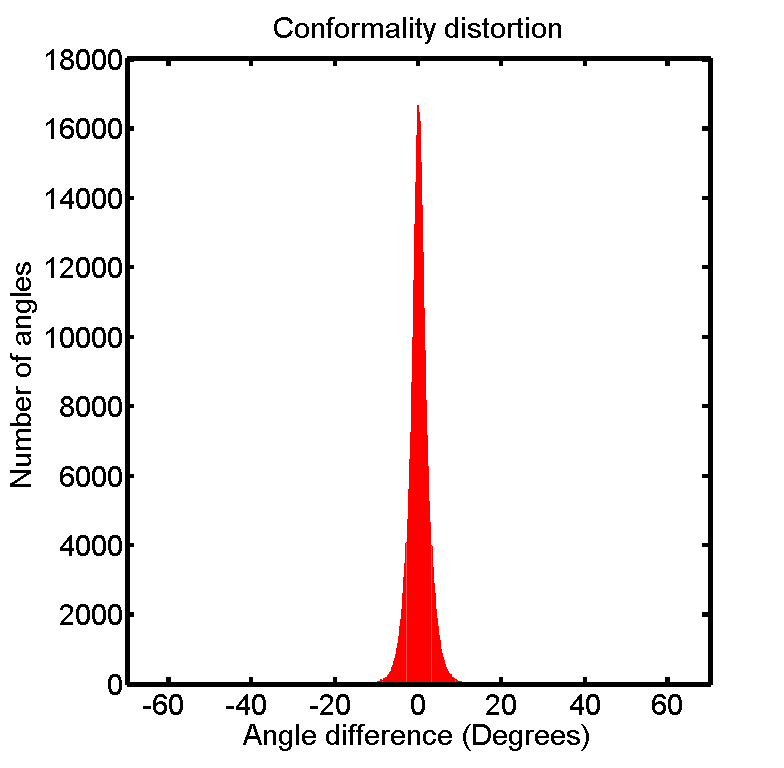}
 \caption{Parameterizing a lion point cloud. Top left: A lion point cloud. Top middle: The spherical conformal parameterization of the lion point cloud. Top right:Middle left: A triangulation created by our method. Bottom left and middle: The triangulated point cloud and the spherical parameterization result colored with the approximated mean curvature at each vertex. Bottom right: The conformality distortion of the parameterization based on the triangulation. }
 \label{fig:lion}
\end{figure}

\begin{figure}[t]
 \centering
 \includegraphics[width=0.32\textwidth]{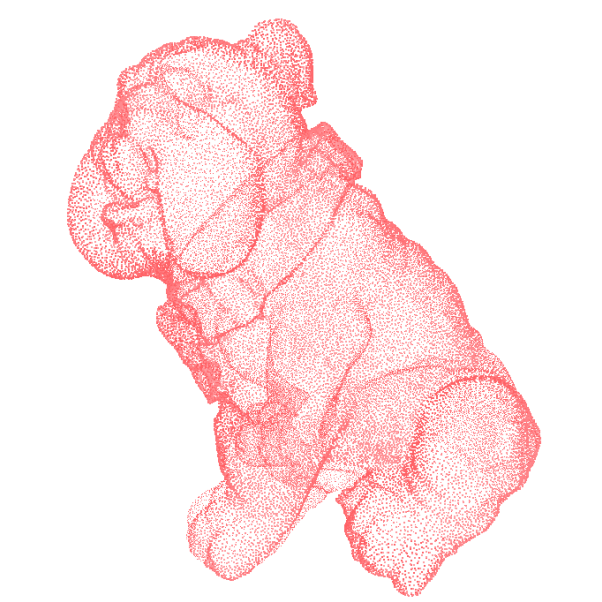}
 \includegraphics[width=0.32\textwidth]{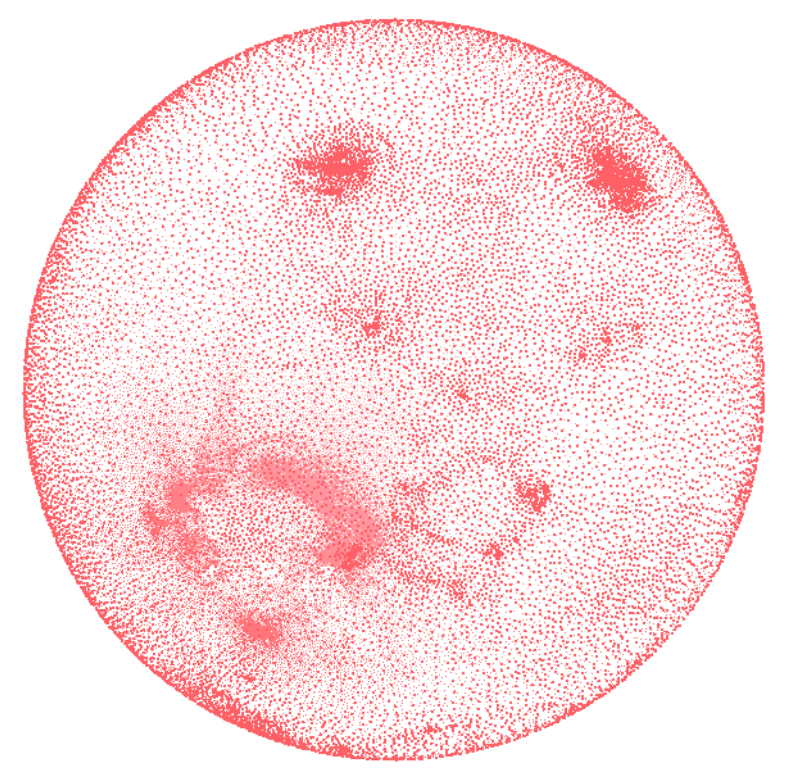}
 \includegraphics[width=0.29\textwidth]{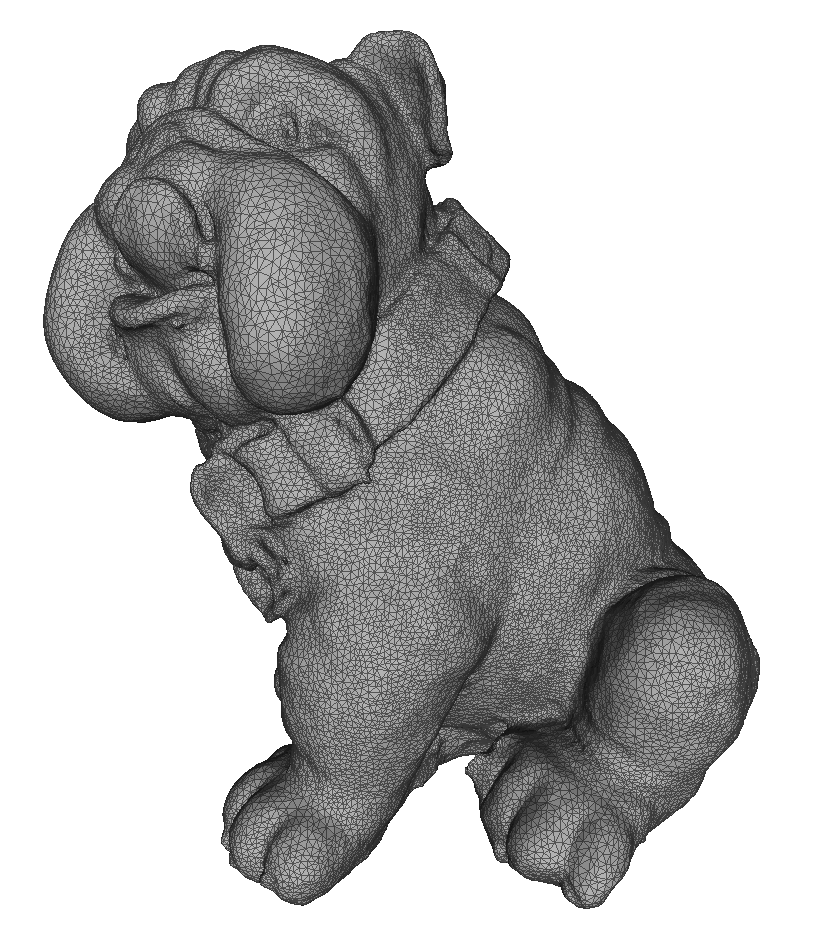}
 \newline
 \includegraphics[width=0.31\textwidth]{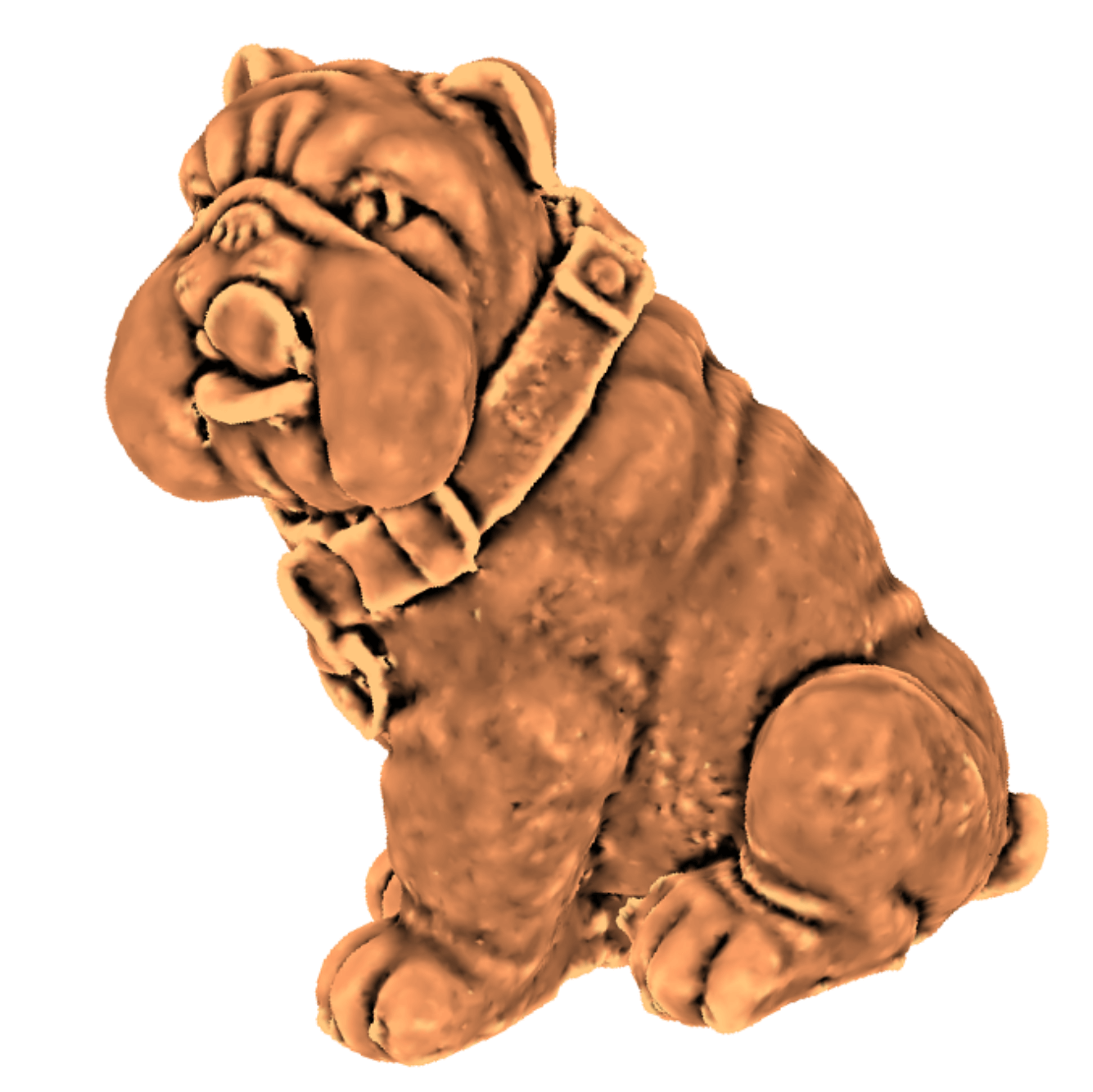}
 \includegraphics[width=0.31\textwidth]{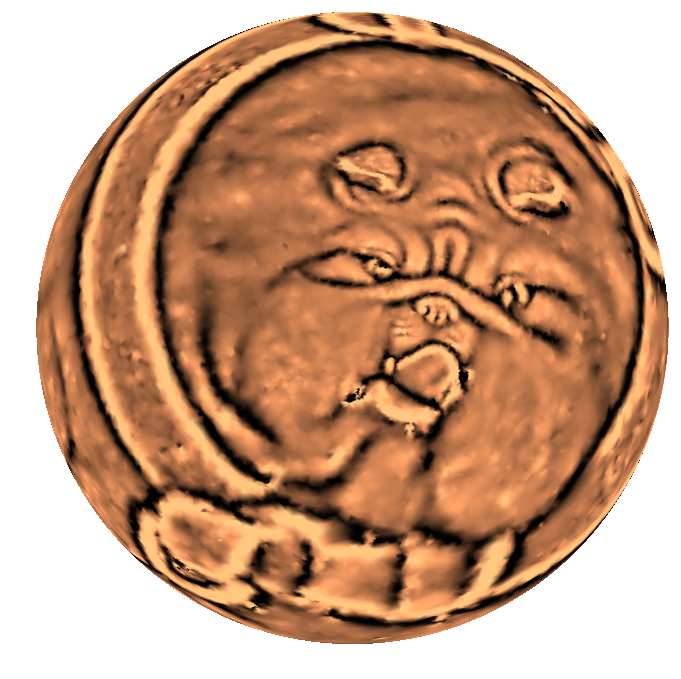}
 \includegraphics[width=0.33\textwidth]{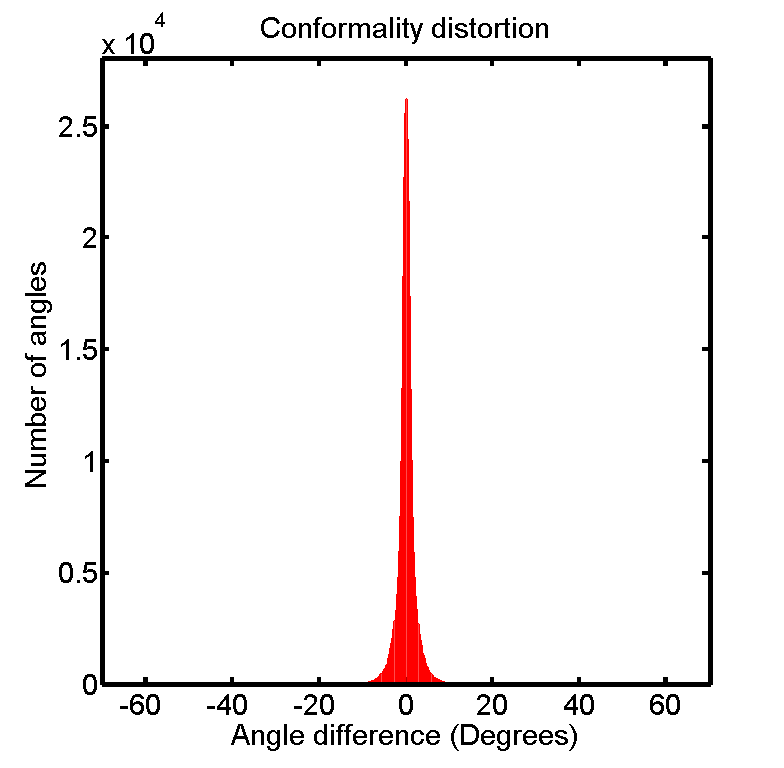}
 \caption{Parameterizing a bulldog point cloud. Top left: A bulldog point cloud. Top middle: The spherical conformal parameterization of the bulldog point cloud. Top right: A triangulation created by our method. Bottom left and middle: The triangulated point cloud and the spherical parameterization result colored with the approximated mean curvature at each vertex. Bottom right: The conformality distortion of the parameterization based on the triangulation. }
 \label{fig:bulldog}
\end{figure}

\begin{figure}[t]
 \centering
 \includegraphics[width=0.35\textwidth]{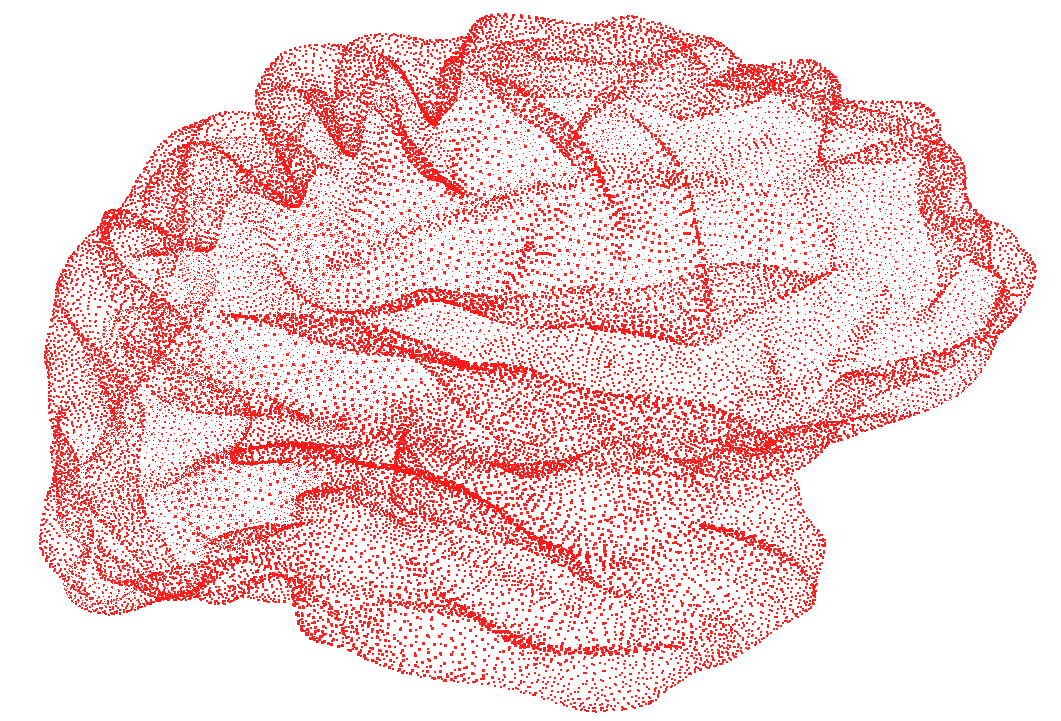}
 \includegraphics[width=0.28\textwidth]{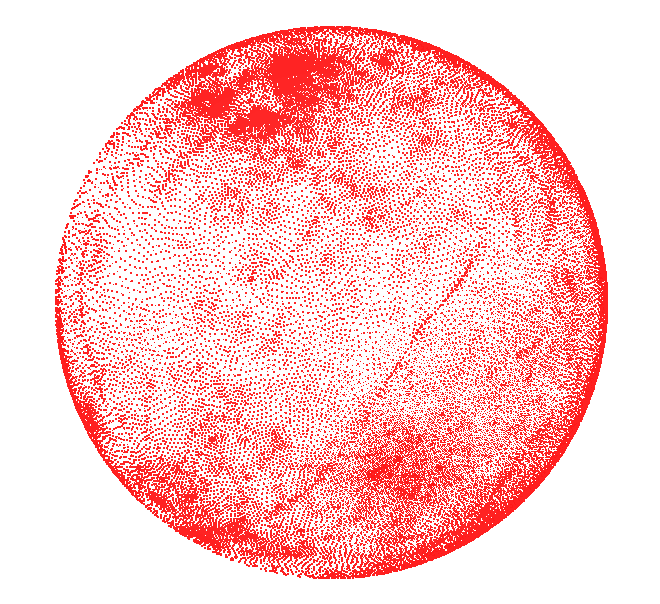}
 \includegraphics[width=0.35\textwidth]{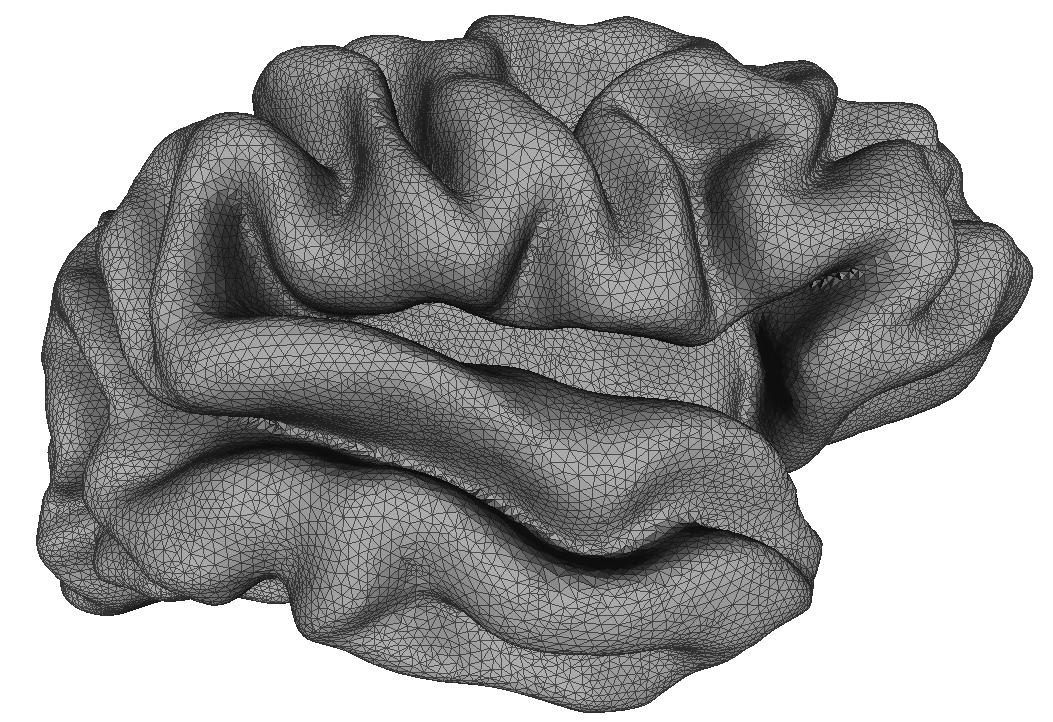}
 \newline
 \includegraphics[width=0.32\textwidth]{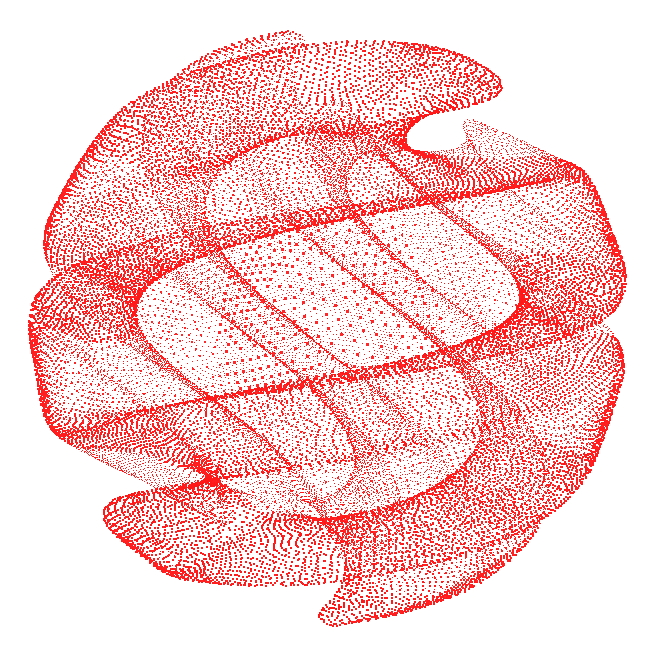}
 \includegraphics[width=0.3\textwidth]{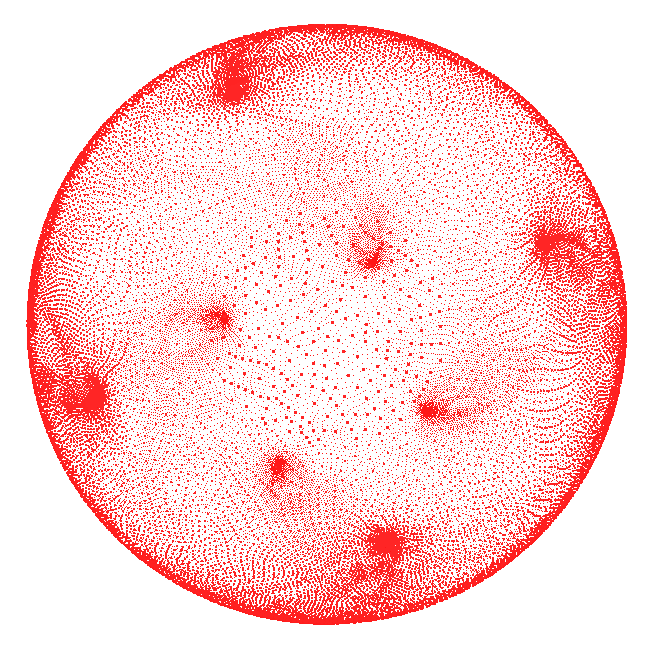}
 \includegraphics[width=0.32\textwidth]{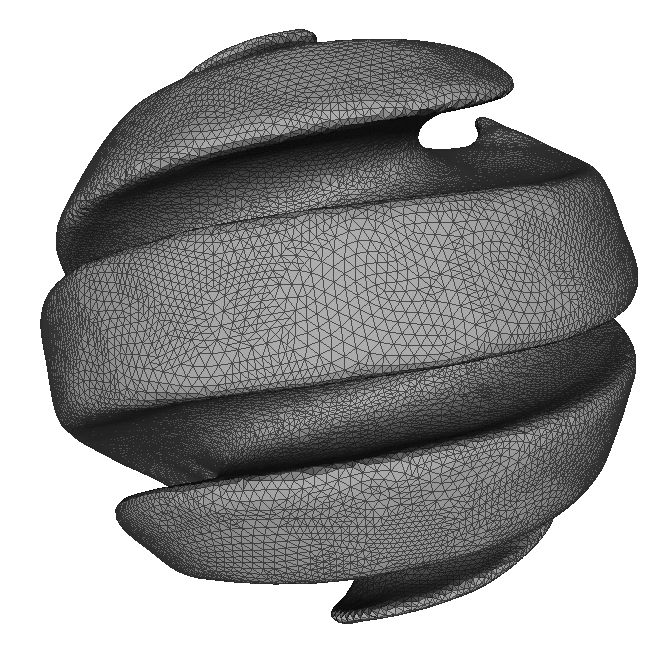}
 \caption{Parameterizing two convoluted point clouds of a human brain and a spiral. Left: The input point clouds. Middle: The spherical conformal parameterizations obtained by our proposed algorithm. Right: The triangulations created by our method.}
 \label{fig:convoluted}
\end{figure}

After demonstrating the advantage of our approximation scheme for the LB operator, we investigate the performance of our proposed spherical conformal parameterization algorithm for genus-0 point clouds. Figure \ref{fig:lion} and Figure \ref{fig:bulldog} show the results of parameterizing a lion point cloud and a bulldog point cloud using our proposed parameterization method. Two more convoluted examples are shown in Figure \ref{fig:convoluted}. The experiments demonstrate the effectiveness of our proposed algorithm for convoluted point cloud data.

\begin{figure}[t]
 \centering
\includegraphics[width=0.32\textwidth]{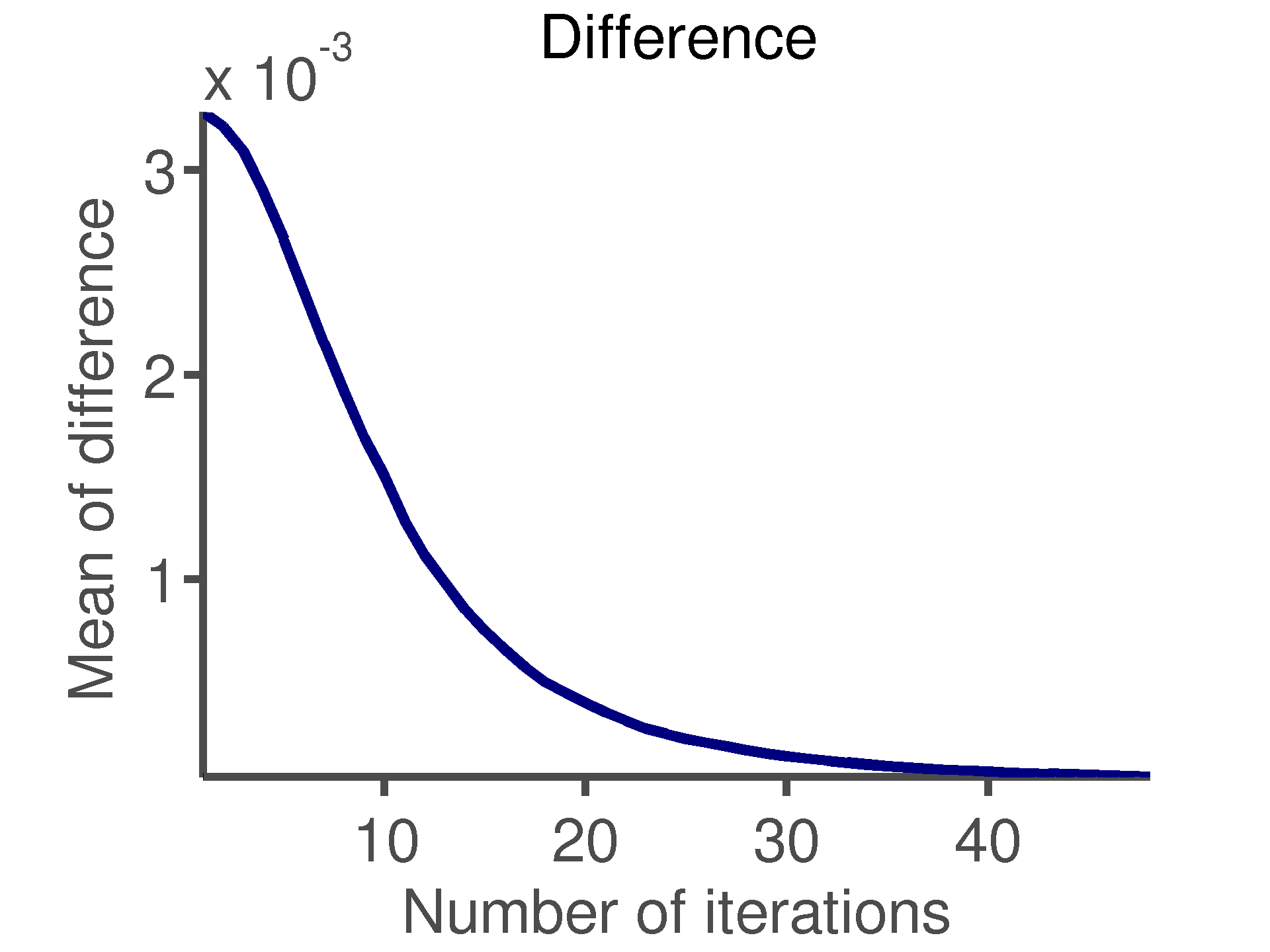}
 \includegraphics[width=0.32\textwidth]{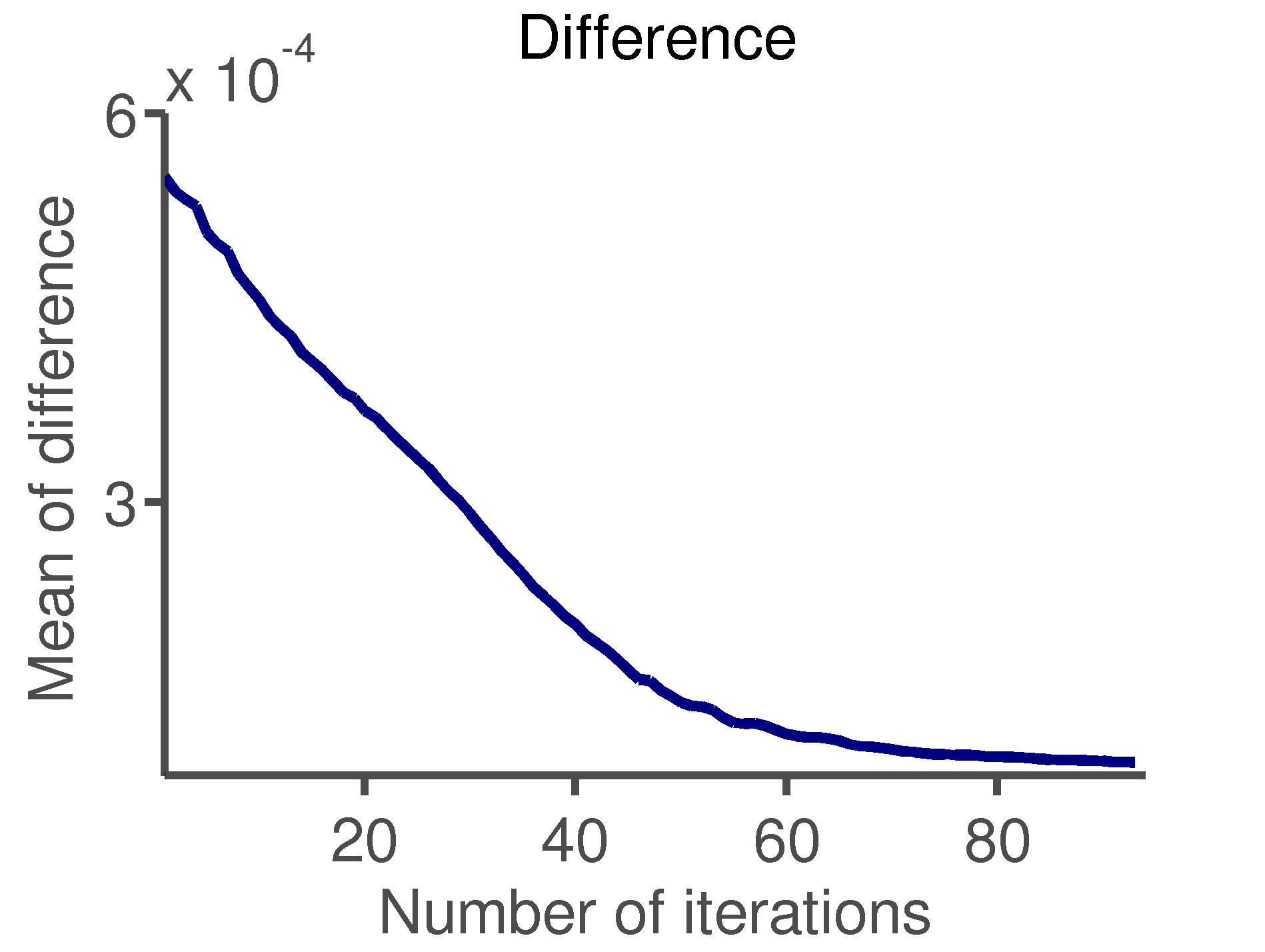}
 \includegraphics[width=0.32\textwidth]{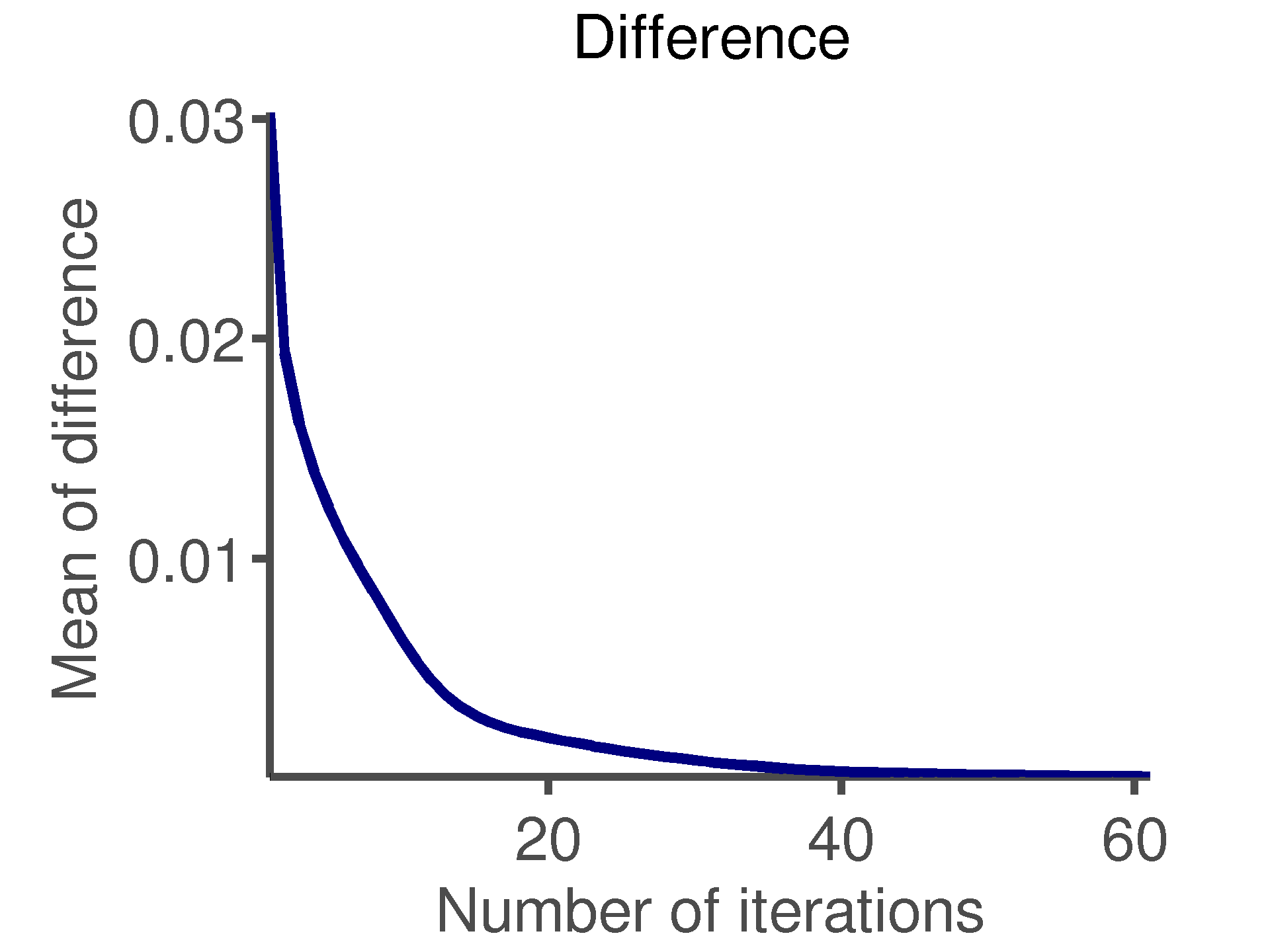}
 \caption{Several plots of the difference $\text{mean}(\|f(p_i) - \widetilde{f}(p_i)\|^2)$ with the number of iterations in our North-South reiteration scheme. Left: Cereal box. Middle: Hippocampus. Right: Bulldog. The plots demonstrate the convergence of our proposed algorithm.}
 \label{fig:energy}
\end{figure}

\begin{table}
    \begin{center}
    \begin{tabular}{ |l|C{12mm}|c|C{20mm}|C{20mm}|C{20mm}| }
    \hline
     Point clouds & No. of points & Performance & Our proposed method & Spherical embedding \cite{Zwicker04} & Global conformal map \cite{Liang12} \\ \hline

		  &	  & Time (s) 		& 8.1768  & 23.7988 & 16.7058	\\
      Soda Can    & 6838  & Mean($|\delta|$)	& 0.5902  & 4.0431  & 2.3352	\\
		  &	  & SD($|\delta|$)	& 0.8007  & 5.2731  & 1.9803	\\ \hline
		
		  &	  & Time (s) 		& 13.0919 & 37.4124 & 18.7151	\\
      Hippocampus & 10242 & Mean($|\delta|$)	& 1.2855  & 14.3072 & 1.3062	\\
		  &	  & SD($|\delta|$)	& 1.4701  & 19.6461 & 1.5100	\\ \hline
			
		  &	  & Time (s) 		& 30.7785 & 87.0887 & 40.4214	\\
      Max Planck  & 21530 & Mean($|\delta|$)	& 0.7326  & 8.6058  & 1.0792	\\
		  &	  & SD($|\delta|$)	& 1.0803  & 14.0857 & 1.5756	\\ \hline
			
		  &	  & Time (s) 		& 50.7390 & 132.1765& 		\\
      Cereal Box  & 33061 & Mean($|\delta|$)	& 0.6523  & 12.3573 & Fail	\\
		  &	  & SD($|\delta|$)	& 0.9165  & 14.0440 & 		\\ \hline
		
		  &	  & Time (s) 		& 114.7057& 291.8312& 122.8818	\\
      Spiral 	  & 48271 & Mean($|\delta|$)	& 0.8580  & 16.4704 & 0.9658	\\
		  &	  & SD($|\delta|$)	& 1.3280  & 22.5073 & 1.3135	\\ \hline
			
		  &	  & Time (s) 		& 115.5802& 198.2285& 126.7621	\\
      Brain 	  & 48487 & Mean($|\delta|$)	& 1.4266  & 35.2629 & 2.2495	\\
		  &	  & SD($|\delta|$)	& 2.9093  & 35.7986 & 2.7430	\\ \hline
			
		  &	  & Time (s) 		& 88.9297 & 206.9920& 113.4447	\\
      Bulldog	  & 49797 & Mean($|\delta|$)	& 1.5432  & 16.2010 & 1.8700	\\
		  &	  & SD($|\delta|$)	& 2.9183  & 21.1544 & 3.1891	\\ \hline
			
		  &	  & Time (s) 		& 95.8935 & 212.5685& 136.3296	\\
      Chinese Lion& 50002 & Mean($|\delta|$)	& 1.8474  & 19.1579 & 2.4907	\\
		  &	  & SD($|\delta|$)	& 1.9286  & 22.7259 & 2.6207	\\ \hline
			
		  &	  & Time (s) 		& 198.6064& 360.7178& 227.0290	\\
      Bimba 	  & 74764 & Mean($|\delta|$)	& 0.6227  & 18.0340 & 0.6379	\\
		  &	  & SD($|\delta|$)	& 0.8129  & 20.6272 & 0.7975	\\ \hline
			
		  &	  & Time (s) 		& 427.7658& 731.8661& 560.6077	\\
      Igea 	  & 134345& Mean($|\delta|$)	& 0.7076  & 5.0853  & 3.8293	\\
		  &	  & SD($|\delta|$)	& 1.4273  & 8.2623  & 2.9703	\\ \hline
			
		  &	  & Time (s) 		& 676.4106& 995.7537& 		\\
      Armadillo   & 172974& Mean($|\delta|$)	& 1.4167  & 23.2354 & Fail	\\
		  &	  & SD($|\delta|$)	& 1.6855  & 23.9892 & 		\\ \hline
			
		  &	  & Time (s) 		&1305.9013&1484.7682& 1642.9208	\\
      Lion Vase   & 256094& Mean($|\delta|$)	& 2.0920  & 17.8501 & 3.6696 	\\
		  &	  & SD($|\delta|$)	& 4.1052  & 21.9588 & 5.8502	\\ \hline

    \end{tabular}
    \end{center}
    \bigbreak
    \caption{Performances of three spherical conformal parameterization methods for genus-0 point clouds. To quantitatively evaluate the conformality of the parameterization, we build a mesh structure on the spherical parameterization using the spherical Delaunay method and then create an induced mesh structure on the original point cloud. The conformality distortion (denoted by $\delta$) of the parameterization is assessed using the angle difference (in degrees) between an angle on a meshed point cloud and the mapped angle on the meshed spherical parameterization result.}
    \label{table:performance}
\end{table}

Moreover, with the aid of the spherical conformal parameterization, we can create a Delaunay triangulation on the spherical parameterization result by the spherical Delaunay algorithm and define an induced triangulation on the input point cloud. Using the mesh structures, we can measure the angle differences of the two meshed point clouds and hence effectively evaluate the conformality of our parameterization scheme. In particular, we define the conformality distortion of the parameterization by the angular distortion between the two meshes. The angle difference provides an accurate measurement of the conformality distortion of the parameterizations. It can be easily observed in Figure \ref{fig:lion} and Figure \ref{fig:bulldog} that the histograms of the angle differences highly concentrate at 0. Besides, for better visualizations of the spherical conformal parameterization results, we color the surfaces by the approximated mean curvature on the source surfaces. It can be observed from the colored figures that the local geometries of the point clouds are well preserved under our proposed spherical conformal parameterization algorithm. Figure \ref{fig:energy} shows the difference plots of several experiments. The plots demonstrate the convergence of our North-South reiteration scheme.

We compare our proposed spherical conformal parameterization method for genus-0 point clouds with the spherical embedding method proposed by Zwicker and Gotsman \cite{Zwicker04} and the global conformal map \cite{Liang12}. In our experiment, $k = 25$ nearest neighbors of every point are used for approximating the LB operator in Algorithm \ref{alg:pc_spherical}. The stopping threshold for the N-S reiteration is set to be $\epsilon = 0.0001$. Table \ref{table:performance} summarizes the computational time and the conformality distortion of the three schemes. It is noteworthy that our proposed method produces spherical conformal parameterizations with better conformality. The better conformality obtained by our method is attributed to the south-pole step in our algorithm, which is conceptually equivalent to a composition of quasi-conformal maps. With this specific step, our method can hence further reduce the conformality distortion of the parameterization. Moreover, our method is more efficient than the algorithms in \cite{Zwicker04} and \cite{Liang12}. The above results indicate that our parameterization algorithm preserves the local geometry of the point clouds very well.

\subsection{Performance of our meshing scheme}
\begin{figure}[t]
 \centering
 \includegraphics[width=0.33\textwidth]{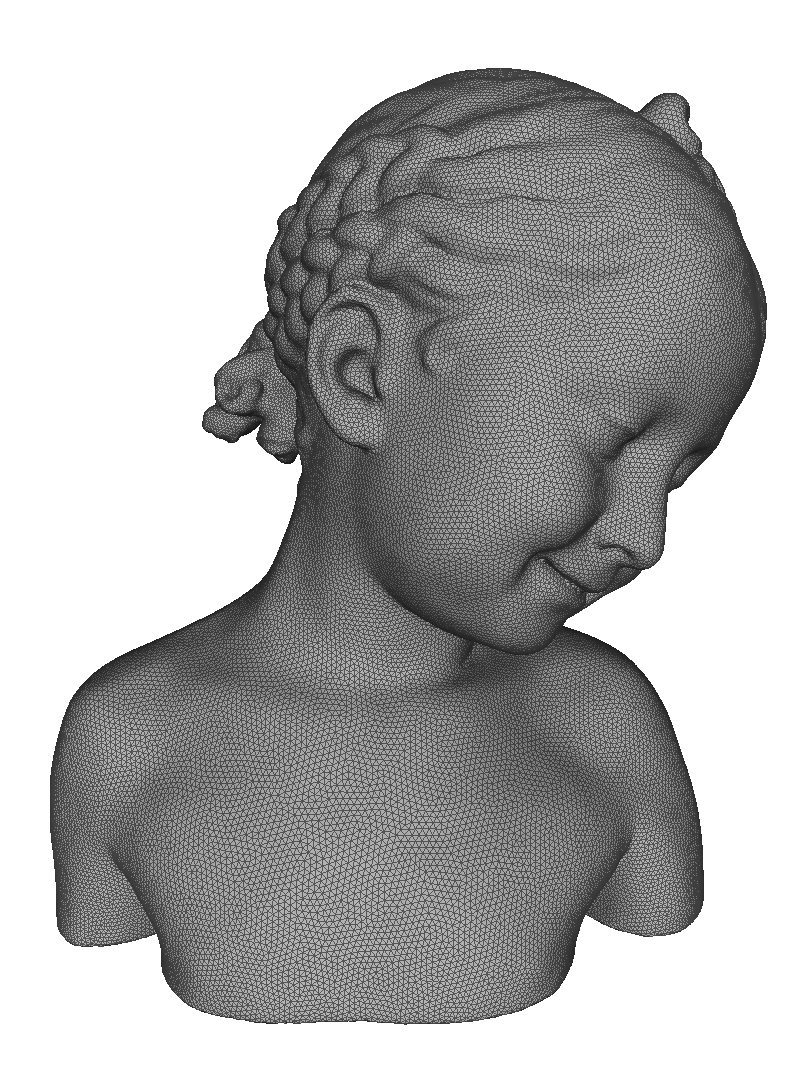}
 \includegraphics[width=0.31\textwidth]{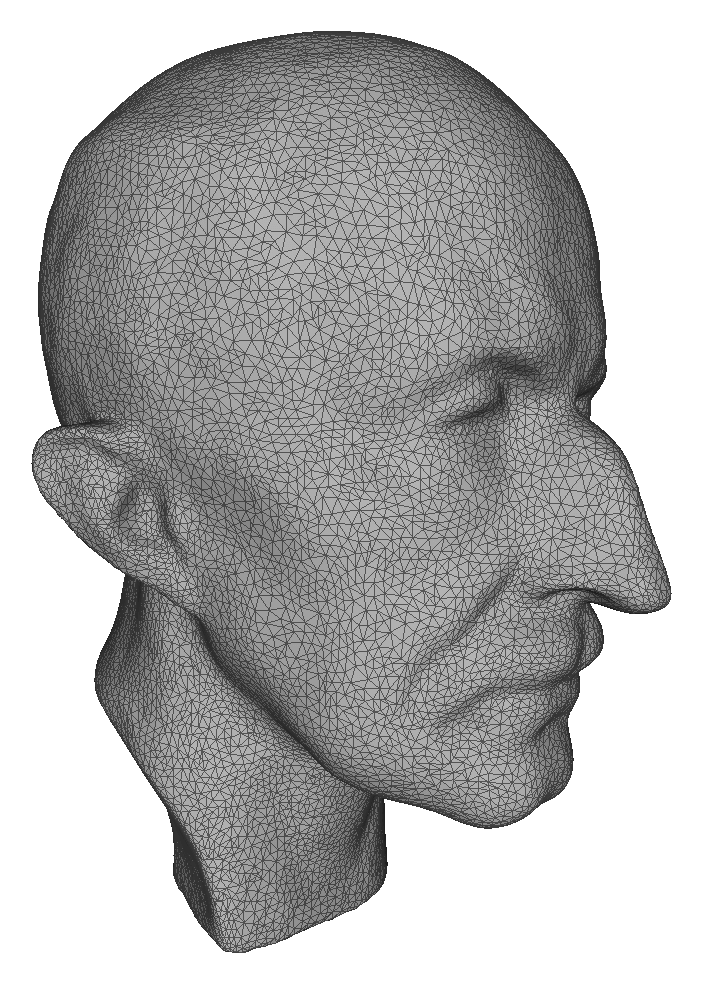}
 \includegraphics[width=0.33\textwidth]{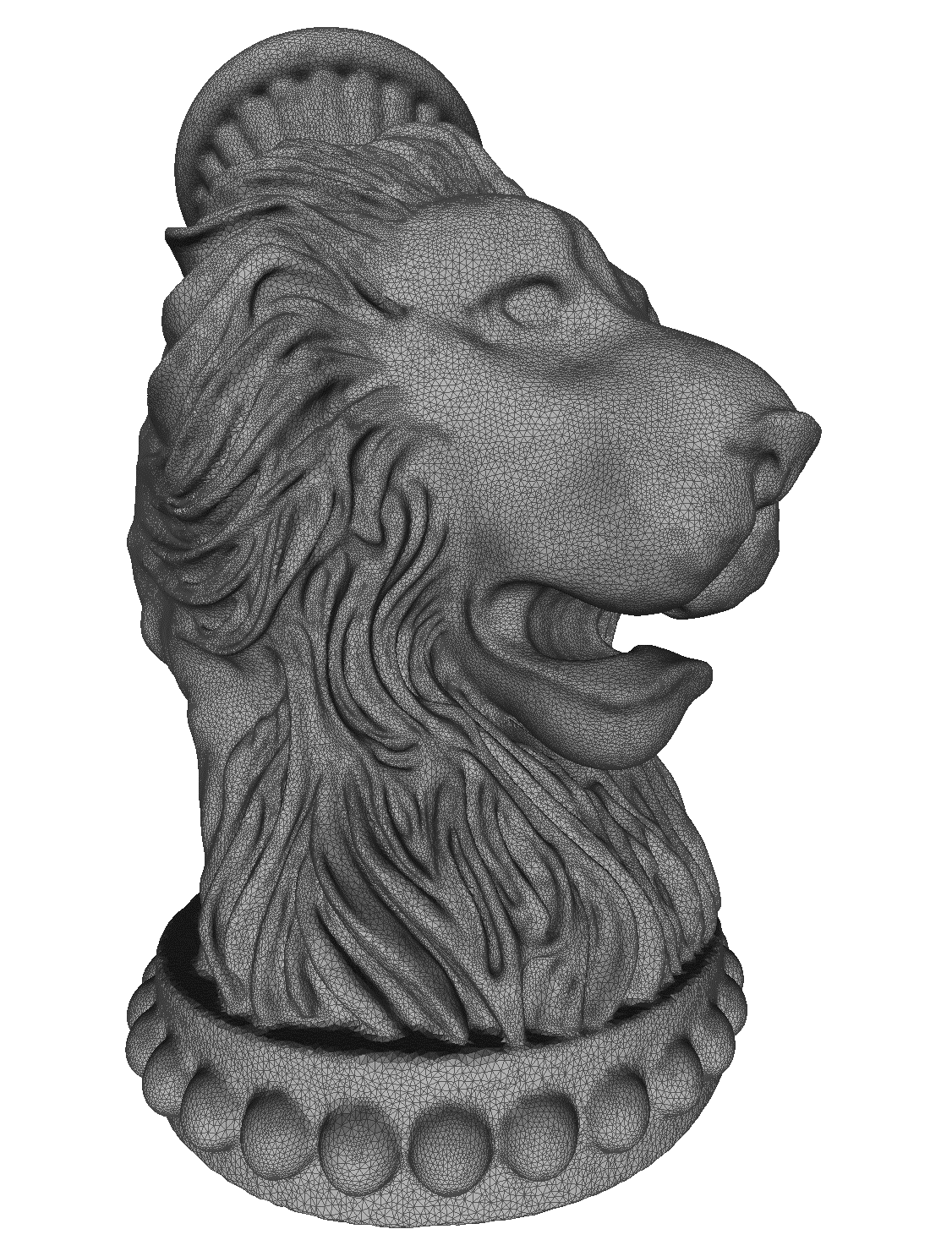}
 \includegraphics[height=3.2cm]{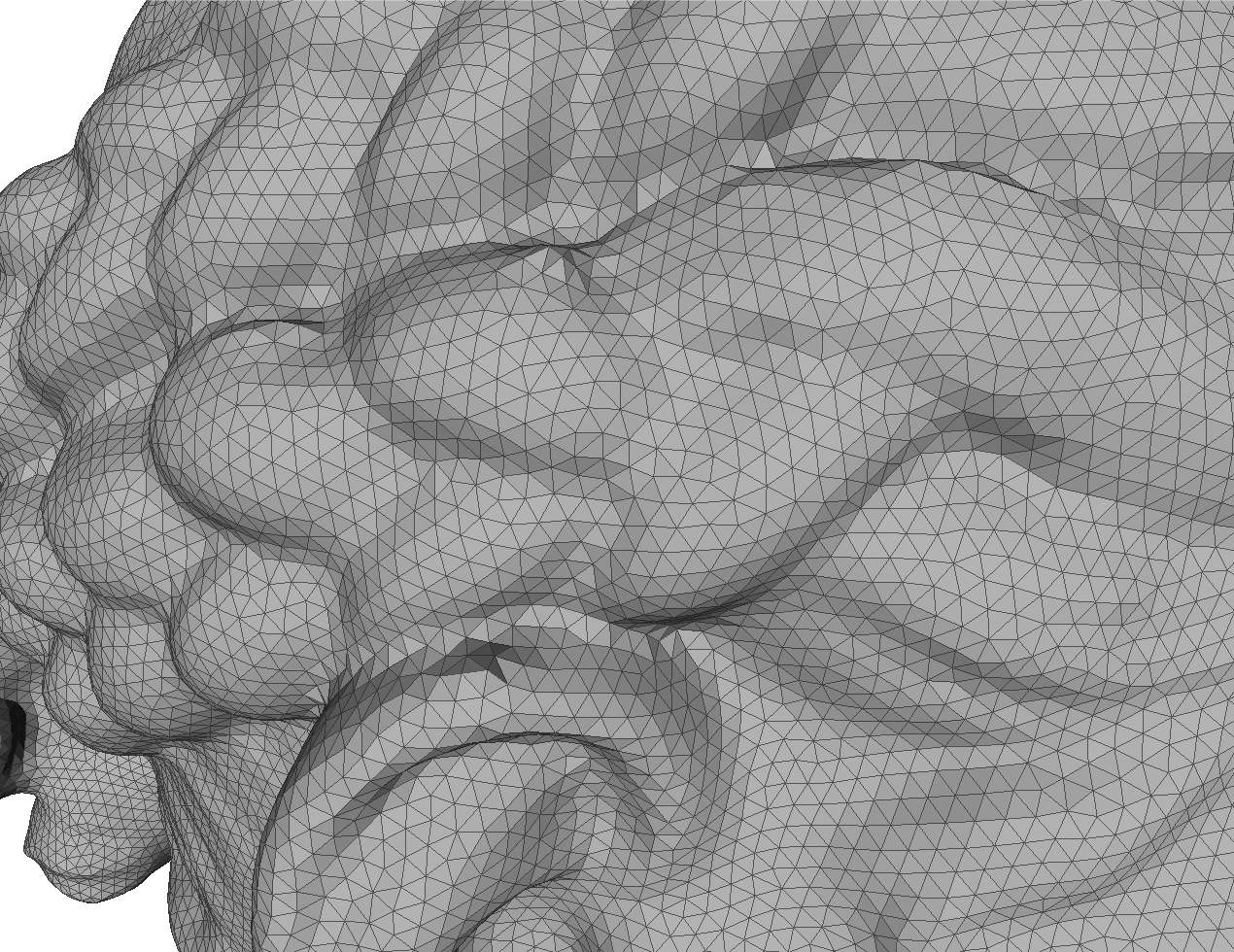}
 \includegraphics[height=3.2cm]{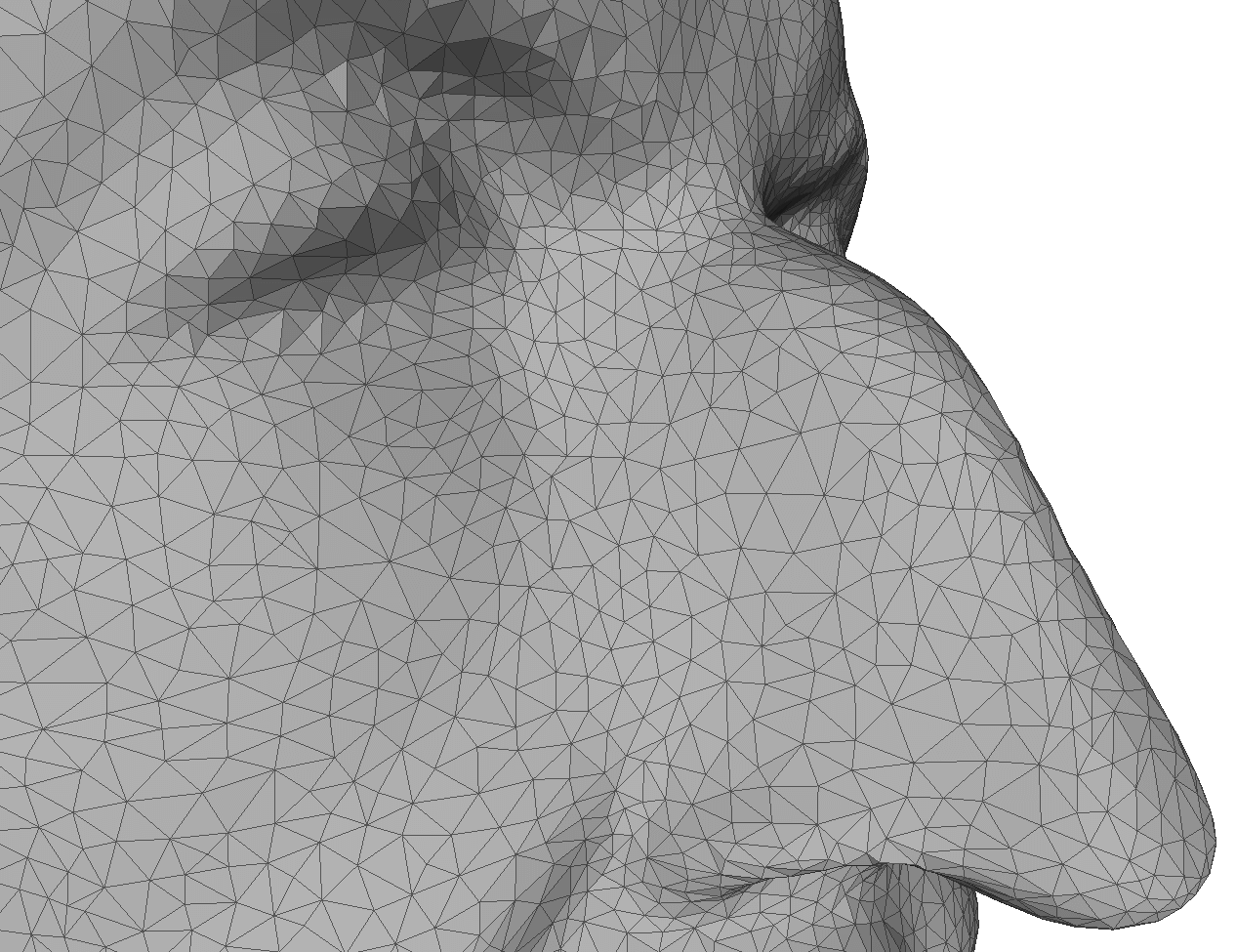}
 \includegraphics[height=3.2cm]{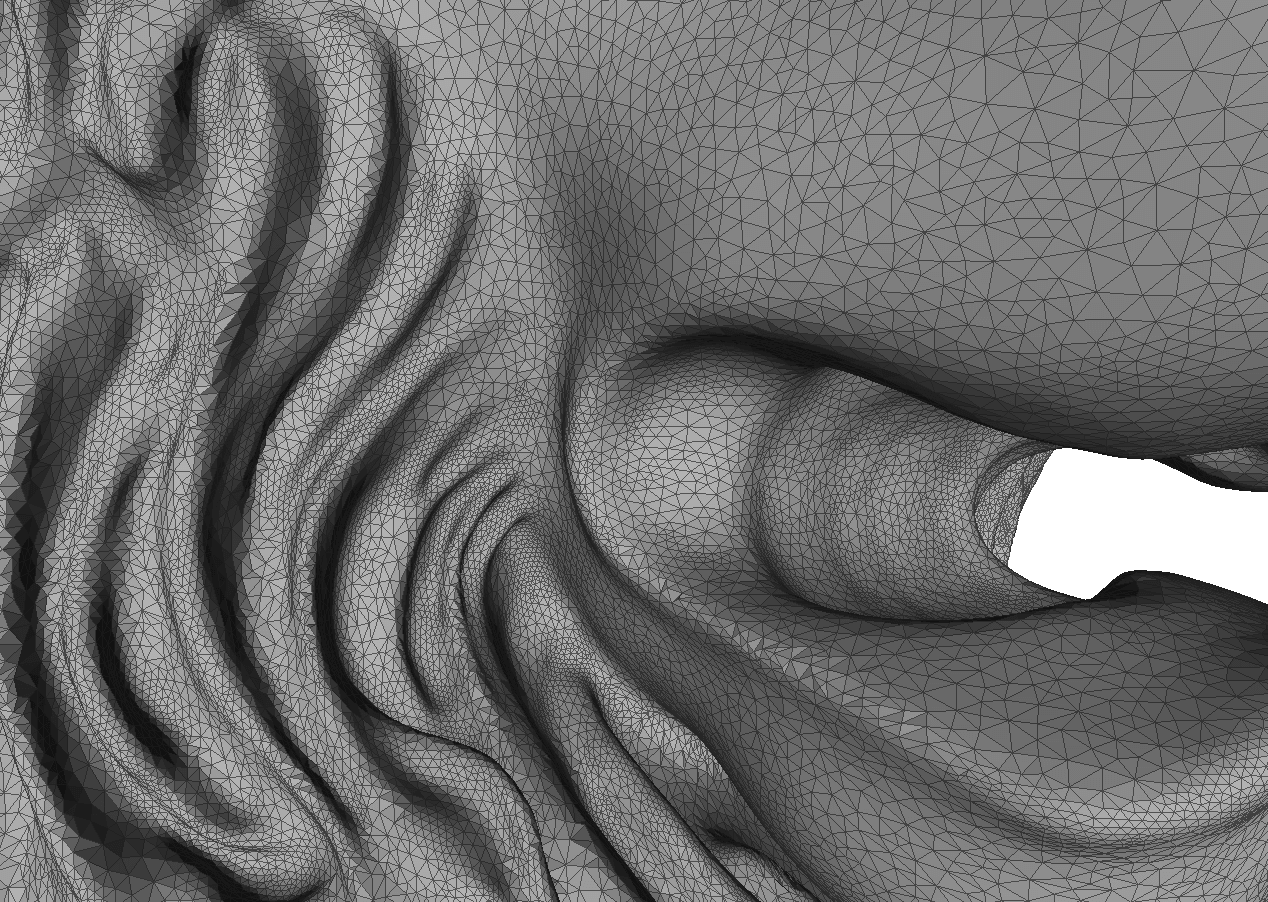}
 \caption{Meshes generated by our proposed method and a zoom-in of them. The regularity of the triangulations is attributed to our spherical conformal parameterization and the spherical Delaunay method.}
 \label{fig:mesh_zoom}
\end{figure}

As mentioned in the last subsection, we generate mesh structures on genus-0 point clouds by building Delaunay triangulations on their spherical conformal parameterizations. Our meshing scheme has two important advantages. First, the regularity of the triangulations generated is guaranteed by the preservation of the angle structures of the Delaunay triangulations computed on the spherical parameterizations. As the angle structures are well retained under the spherical conformal parameterization, a regular triangulation defined on the parameterized point clouds can effectively induce a regular and almost-Delaunay triangulation on the original point clouds. Besides Figure \ref{fig:lion}, Figure \ref{fig:bulldog} and Figure \ref{fig:convoluted}, some more examples of triangulations created by our approach are shown in Figure \ref{fig:mesh_zoom}. It can be observed that our meshing method can handle point clouds with different geometry. High quality triangulations can be created even with the presence of sharp, non-convex and convoluted regions on the input point clouds. Second, unlike most of the existing meshing methods, the meshes produced by our proposed scheme are guaranteed to be genus-0 closed meshes. No holes or unwanted boundaries will be present in our meshing result. Hence, post-processing steps are not required in our meshing scheme.

We compare our meshing method with three existing meshing approaches. As an example of parameterization-based approaches, Zwicker and Gotsman \cite{Zwicker04} generate triangulations for a genus-0 point cloud with the aid of the spherical embedding algorithm and the spherical Delaunay triangulation method. On the other hand, two typical methods for meshing without using parameterizations are the marching cubes algorithm \cite{Lorensen87} and the Tight Cocone algorithm \cite{Dey03}. Figure \ref{fig:meshing_results} provides a comparison between our method and the three mentioned approaches. It can be observed that our meshing scheme and the Tight Cocone algorithm \cite{Dey03} produce high quality triangulations, while the triangulations produced by the approaches in \cite{Zwicker04} and \cite{Lorensen87} consist of certain sharp and irregular triangles. Also, the result by the marching cubes algorithm contains holes while our method is topology preserving. Therefore, unlike the marching cubes algorithm, no further post-processing is needed in our meshing scheme.

\begin{table}
    \begin{center}
    \begin{tabular}{ |l|C{17mm}|C{17mm}|C{17mm}|C{17mm}| }
    \hline
    Point clouds & Our proposed method & Spherical embedding \cite{Zwicker04} & Marching cubes \cite{Lorensen87} & Tight Cocone \cite{Dey03}\\ \hline
      Soda Can & 0.99 & 0.93 & 0.82 & 0.98 \\ \hline
      Hippocampus & 0.99 & 0.91 & 0.82 & 0.99 \\ \hline
      Max Planck & 0.99 & 0.93 & 0.82 & 0.99 \\ \hline
      Cereal Box & 0.99 & 0.88 & 0.81 & 0.99 \\ \hline
      Spiral & 1.00 & 0.85 & 0.82 & 0.99 \\ \hline
      Brain & 0.99 & 0.82 & 0.82 & 0.99 \\ \hline
      Bulldog & 1.00 & 0.86 & 0.82 & 0.99 \\ \hline
      Chinese Lion & 0.99 & 0.84 & 0.82 & 0.99 \\ \hline
      Bimba & 1.00 & 0.88 & 0.81 & 1.00 \\ \hline
      Igea & 0.97 & 0.90 & 0.83 & 0.97 \\ \hline
      Armadillo & 0.98 & 0.80 & 0.82 & 0.98 \\ \hline
      Lion Vase & 0.99 & 0.85 & 0.83 & 0.99 \\ \hline
    \end{tabular}
    \end{center}
    \bigbreak
    \caption{The Delaunay ratios of different meshing approaches. The ratio assesses the proportion of edges in the resulting triangulations that satisfy the opposite angle sum property $\alpha+\beta \leq \pi$ in a triangulation. A Delaunay ratio exactly equals 1 indicates that the triangulation is Delaunay.}
    \label{table:delaunay}
\end{table}

To quantitatively assess the ``almost-Delaunay'' property of our meshing results induced by the spherical Delaunay triangulations, we recall that a Delaunay triangulation satisfies the opposite angle sum property: For every edge in a Delaunay triangulation, the two angles $\alpha$ and $\beta$ opposite to the edge satisfy
\begin{equation}
 \alpha + \beta \leq \pi.
\end{equation}
For the abovementioned meshing approaches, we define the {\em Delaunay ratio} of the resulting triangulation by
\begin{equation}
 \displaystyle \frac{\text{\# of edges with opposite angles } \alpha, \beta \text{ s.t. } \alpha + \beta \leq \pi}{\text{Total \# of edges in the triangulation}}.
\end{equation}
A higher Delaunay ratio indicates that the triangulation is closer to a perfect Delaunay triangulation. The Delaunay ratios by different meshing algorithms are presented in Table \ref{table:delaunay}. Because of the high accuracy of our spherical conformal parameterizations, our proposed scheme achieves significantly better triangulation results when compared with the spherical embedding algorithm \cite{Zwicker04} and the marching cubes algorithm \cite{Lorensen87}. Also, our results are comparable to or even slightly better than those of the Tight Cocone algorithm \cite{Dey03}. The comparisons demonstrate the advantages of our proposed meshing scheme. A further comparison between our method and the Tight Cocone algorithm \cite{Dey03} is given in the following subsection.

\begin{figure}[t]
 \centering
 \includegraphics[width=0.24\textwidth]{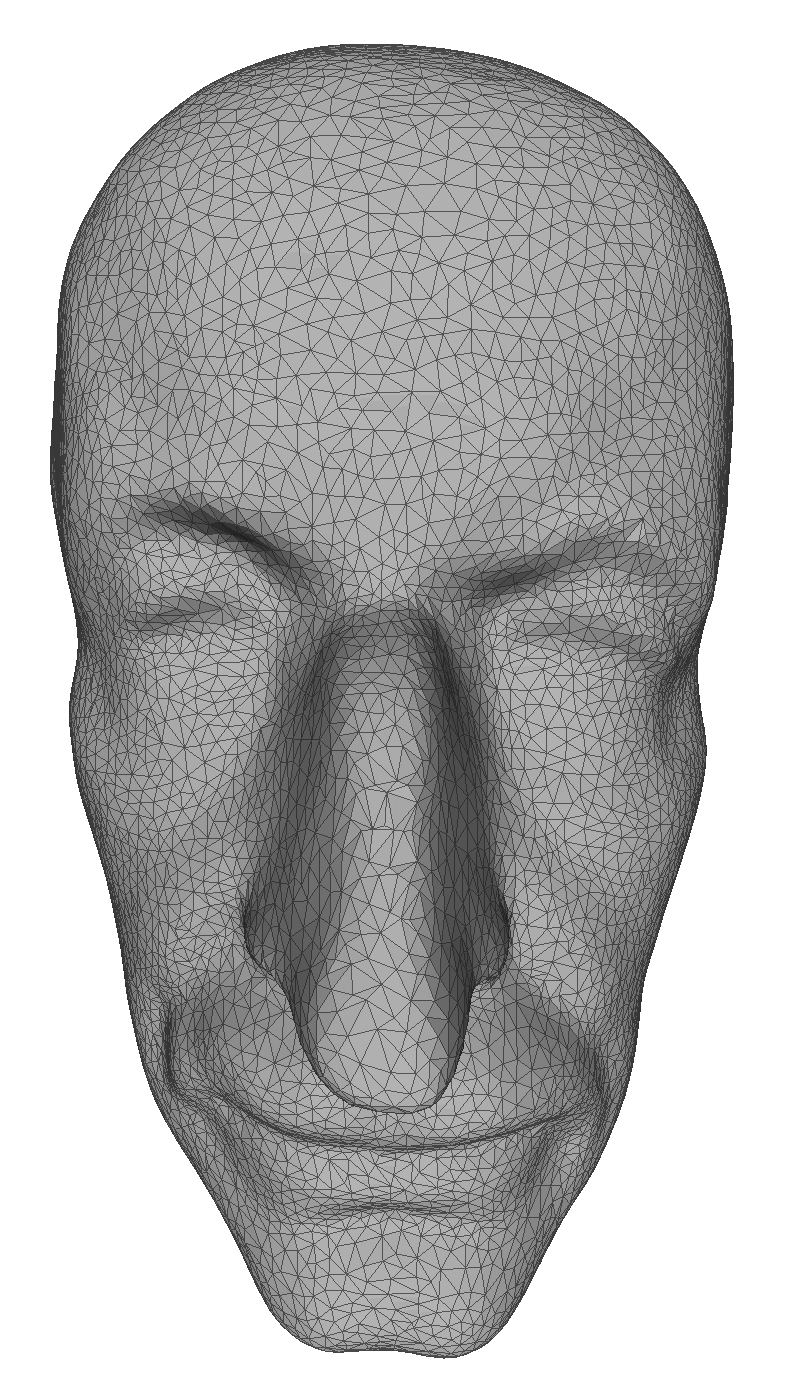}
 \includegraphics[width=0.24\textwidth]{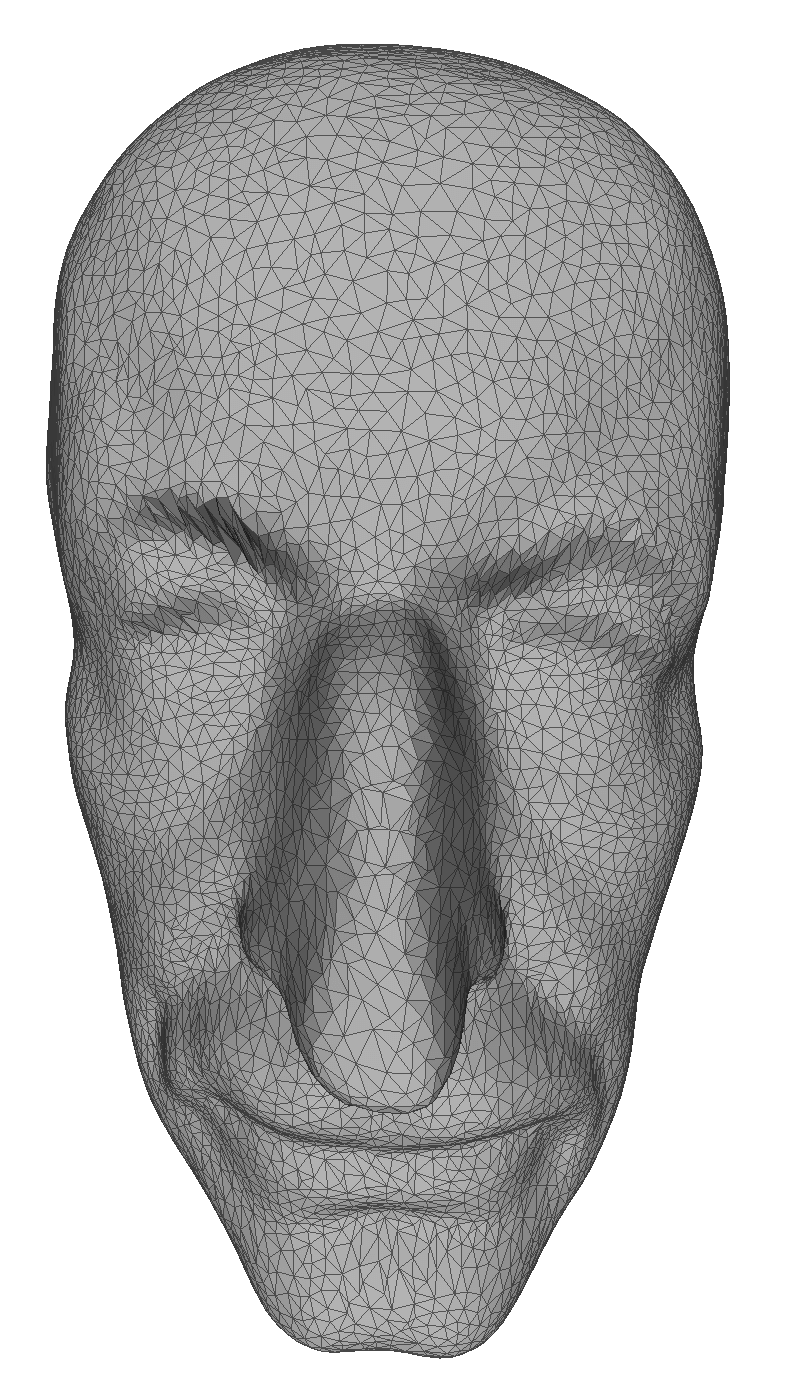}
 \includegraphics[width=0.24\textwidth]{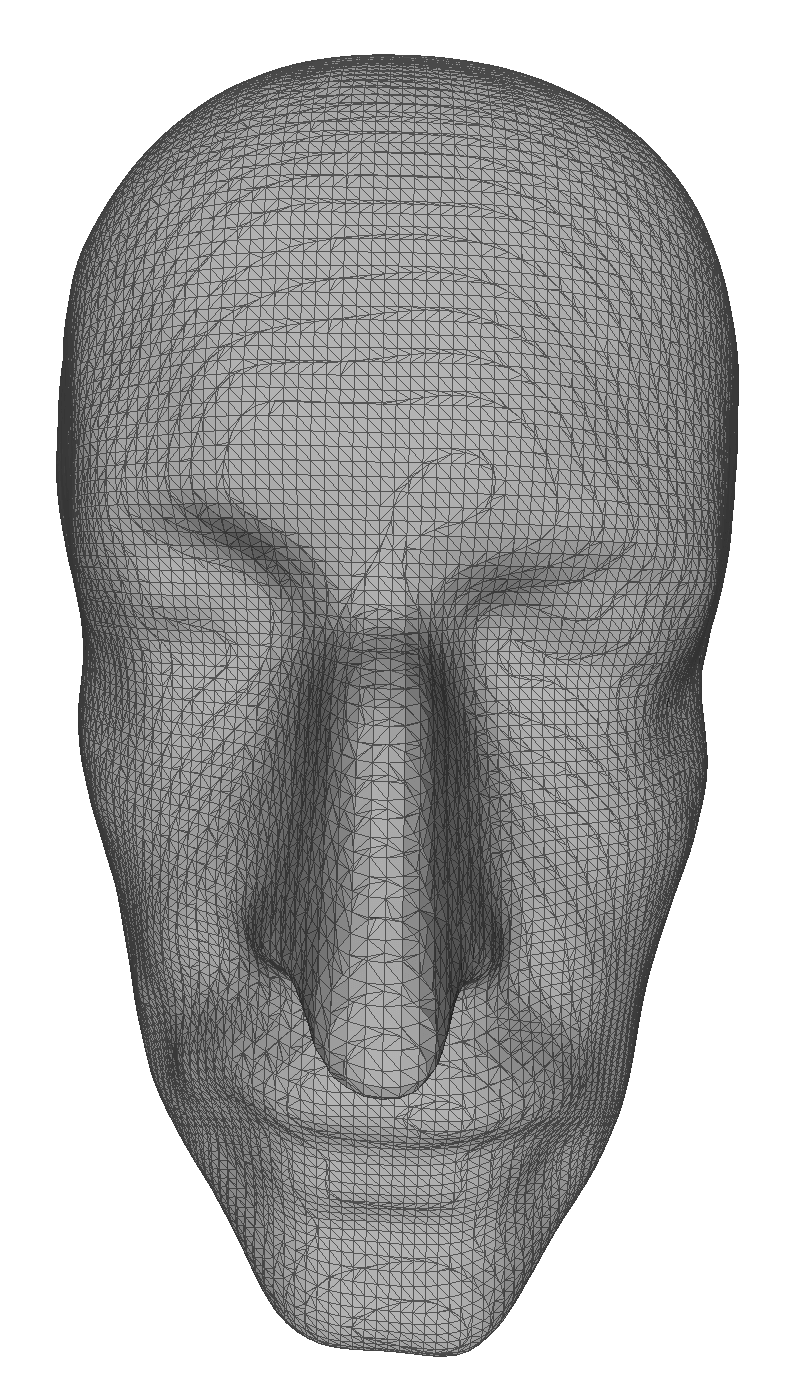}
 \includegraphics[width=0.24\textwidth]{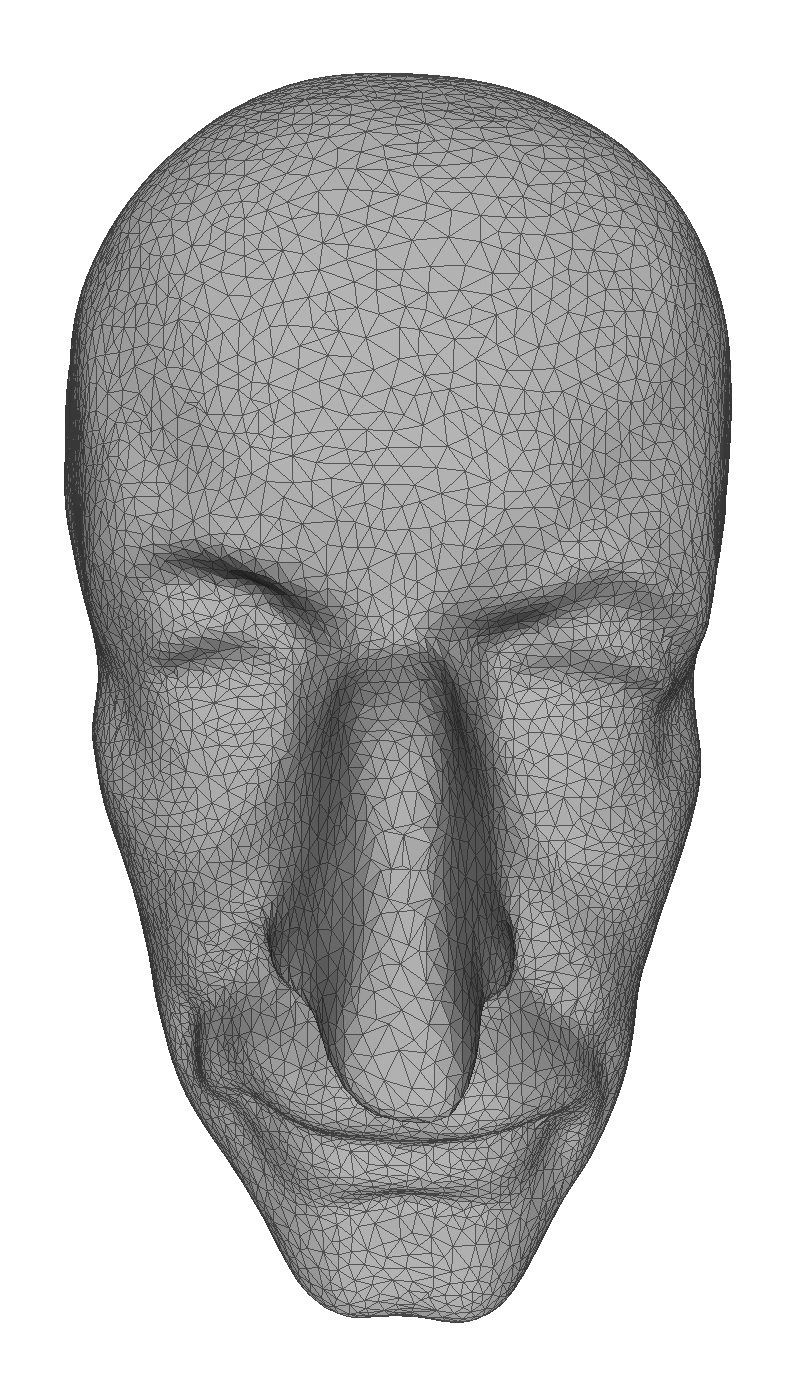}
 \includegraphics[width=0.24\textwidth]{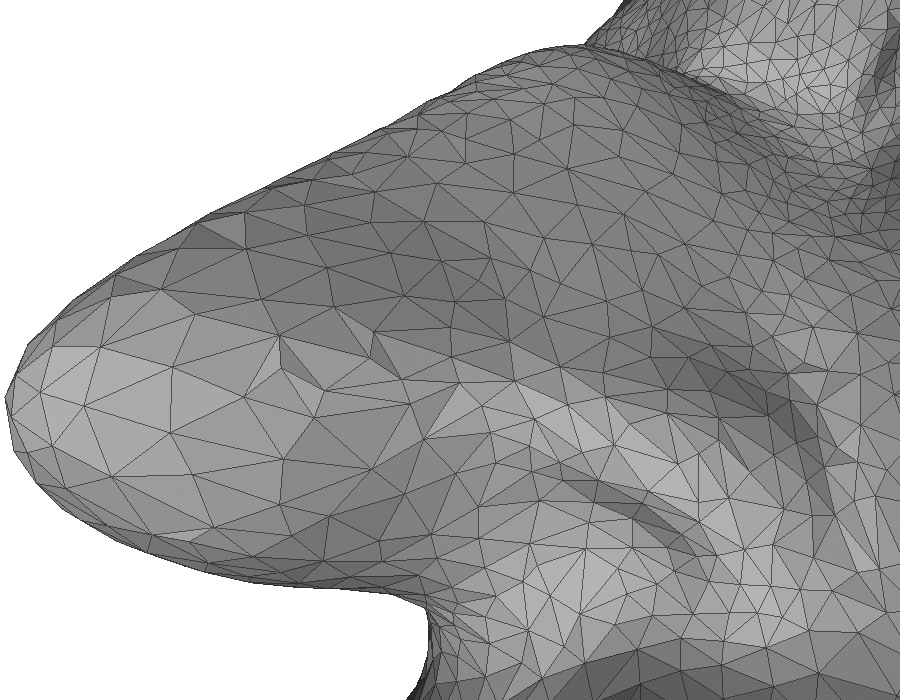}
 \includegraphics[width=0.24\textwidth]{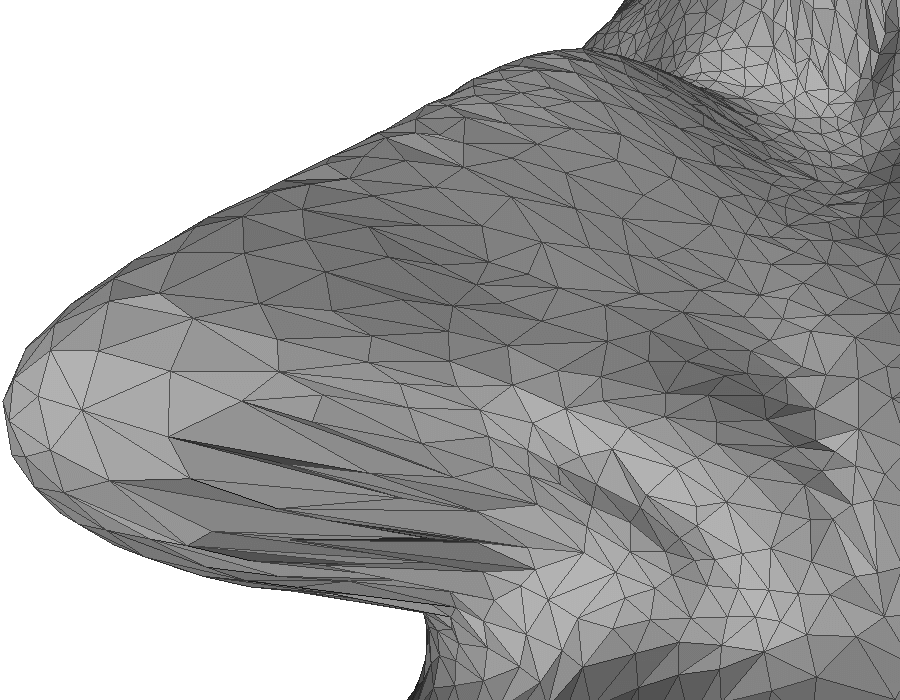}
 \includegraphics[width=0.24\textwidth]{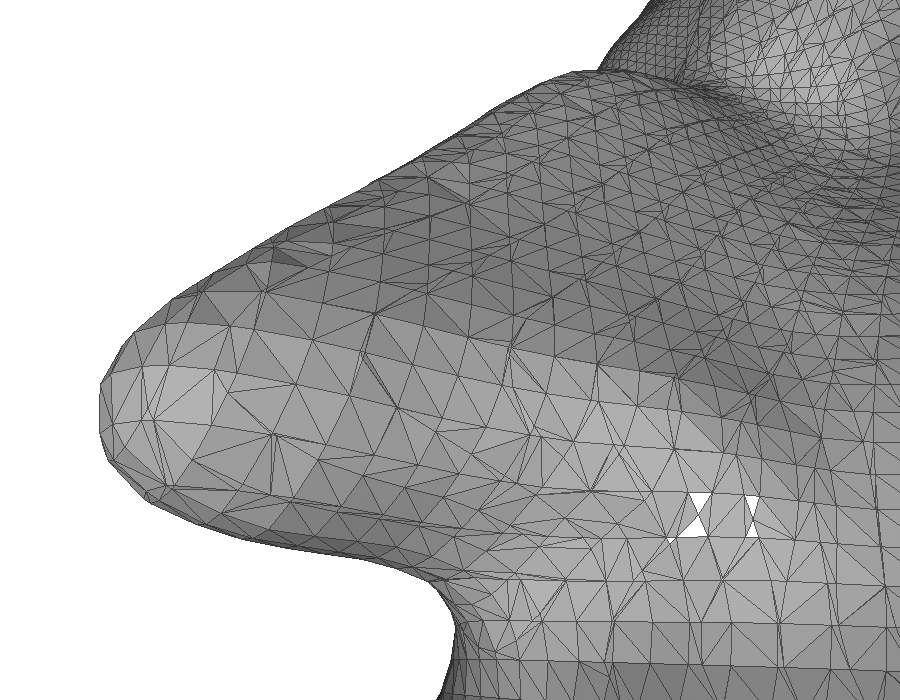}
 \includegraphics[width=0.24\textwidth]{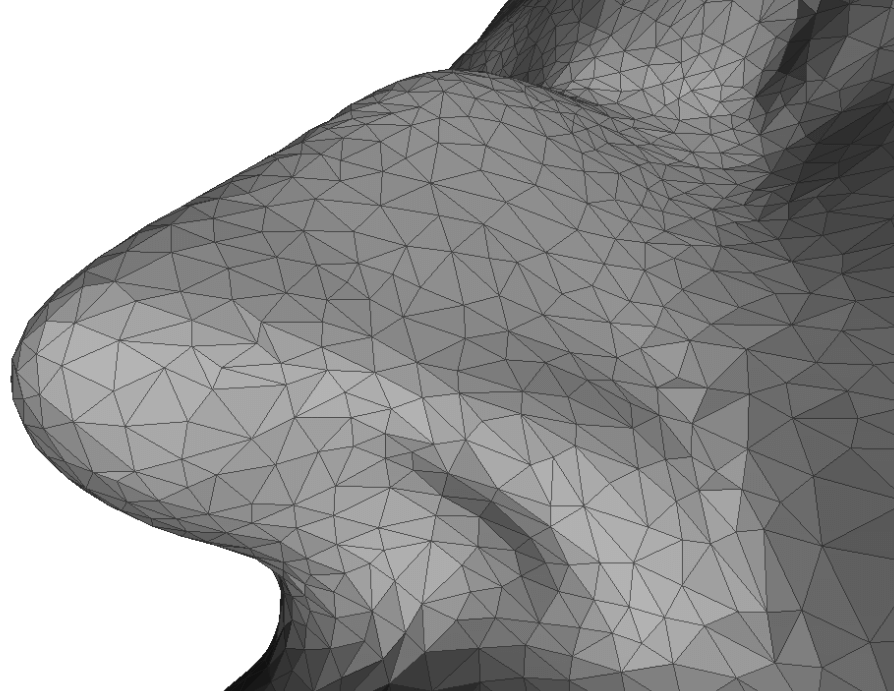}
 \caption{A comparison between our meshing scheme and other approaches. A front view of the triangulated point cloud and a zoom-in of the nose are shown for each method. Left to right: Our meshing result, the method in \cite{Zwicker04}, the marching cubes algorithm \cite{Lorensen87} and the Tight Cocone algorithm \cite{Dey03}.}
 \label{fig:meshing_results}
\end{figure}

\begin{figure}[t]
 \centering
 \includegraphics[width=0.32\textwidth]{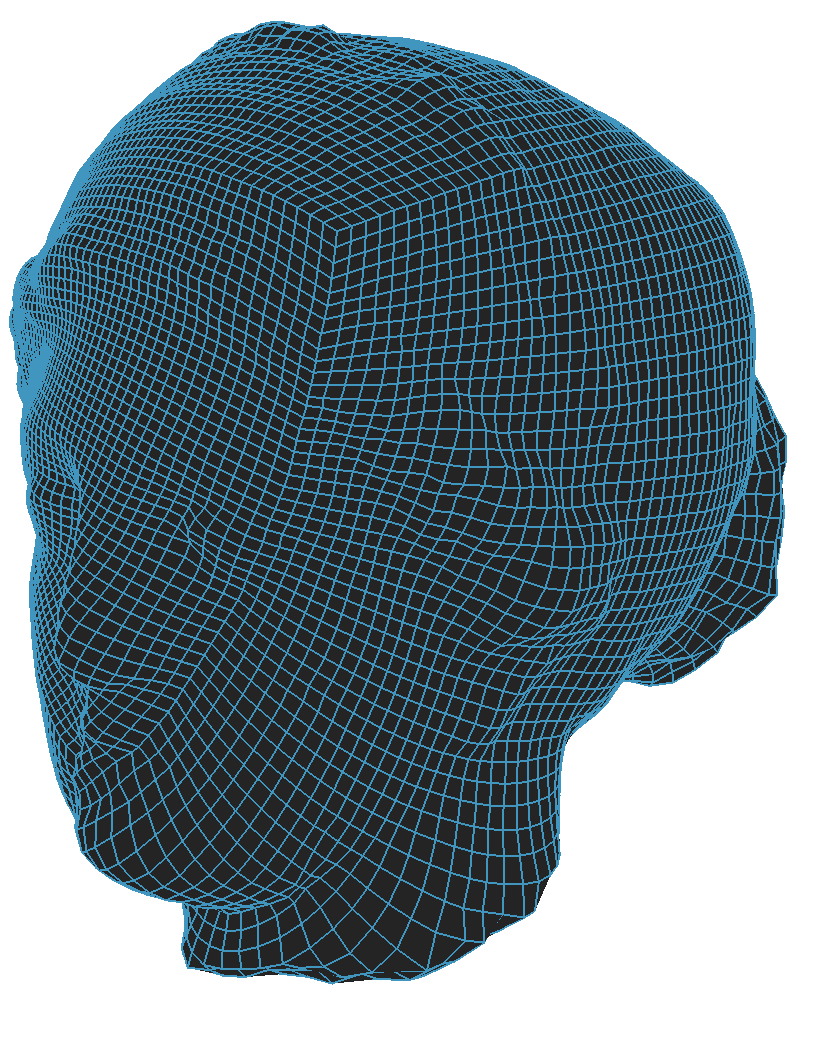}
 \includegraphics[width=0.32\textwidth]{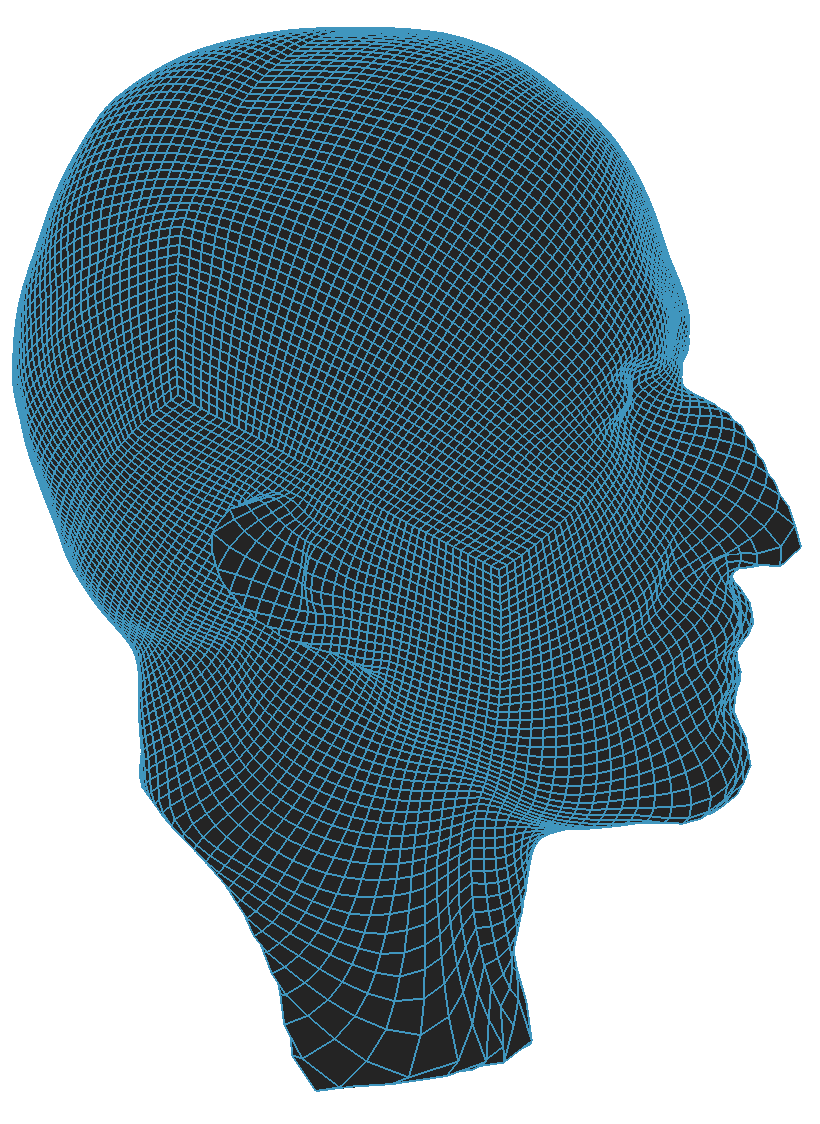}
 \caption{Quad mesh generation on point clouds using our proposed method.}
 \label{fig:quad_result}
\end{figure}
In addition, we can generate quadrangulations of point clouds with the aid of the spherical conformal parameterization. Two examples of the quad meshes generated by our method are given in Figure \ref{fig:quad_result}. To create quad meshes of point clouds, we make use of a standard spherical quad mesh and our spherical conformal parameterization results. With the aid of the spherical conformal parameterizations, we can interpolate the standard quad mesh onto the input point clouds and thus generate quad mesh representations. Because of the conformality of our parameterization scheme, the resulting quad meshes are with high quality. Also, the meshes are guaranteed to be topology preserving.

\begin{figure}[t]
 \centering
 \includegraphics[width=0.4\textwidth]{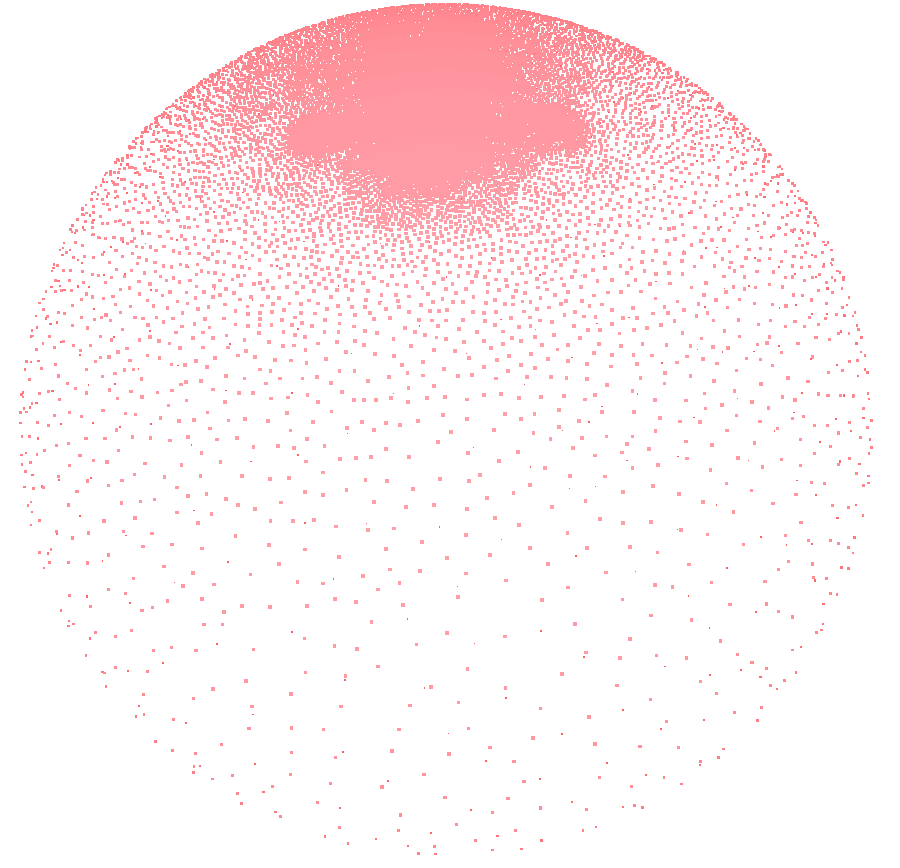}
 \includegraphics[width=0.25\textwidth]{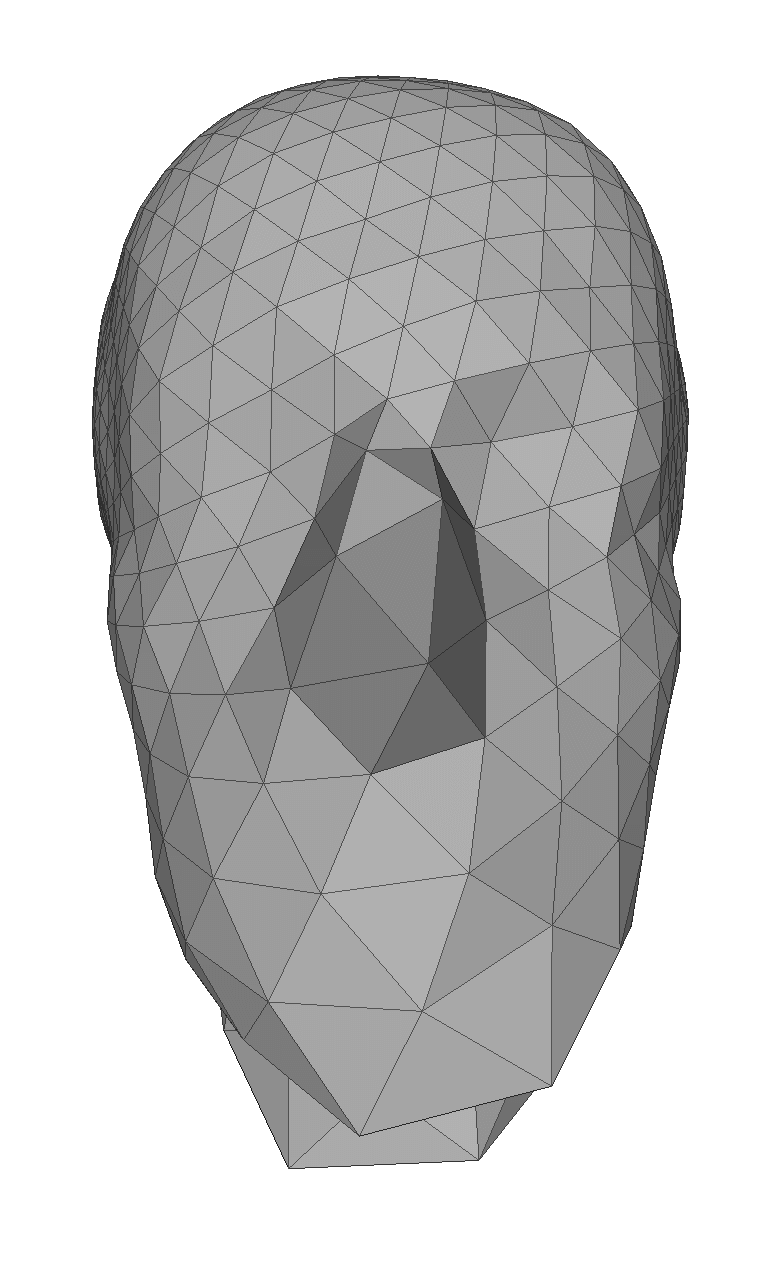}
 \includegraphics[width=0.25\textwidth]{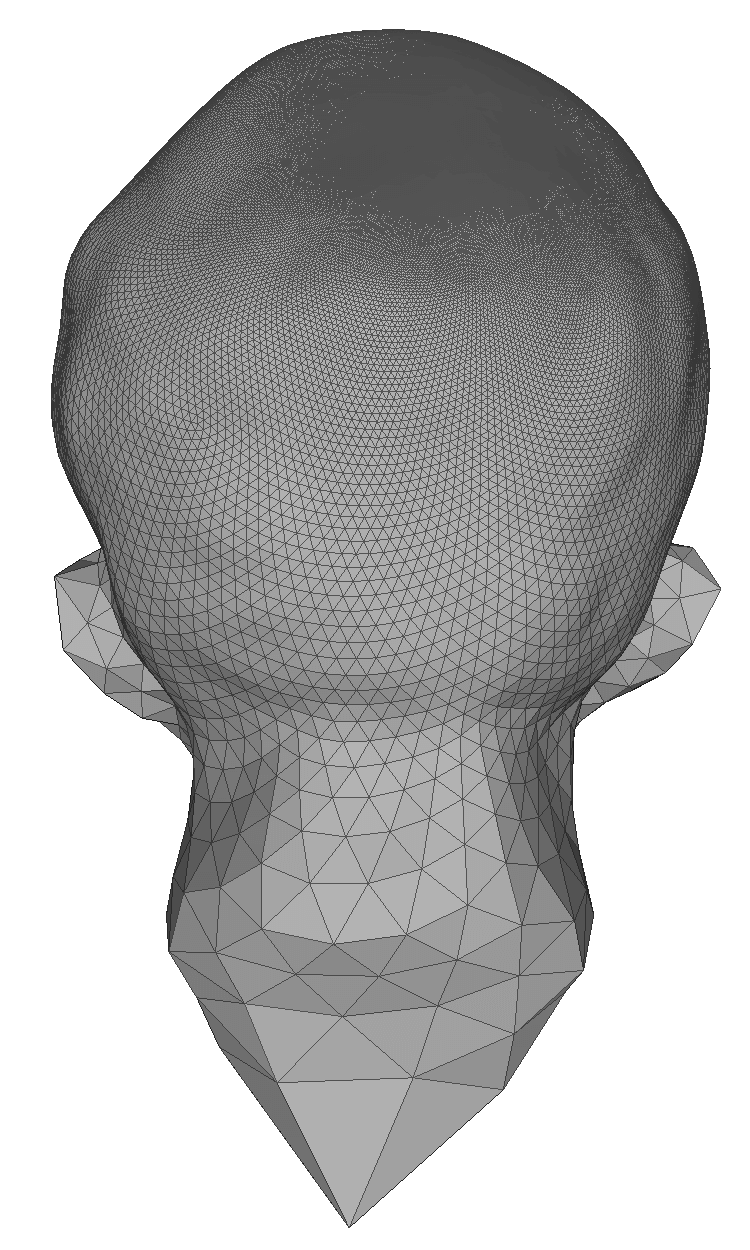}
 \includegraphics[width=0.4\textwidth]{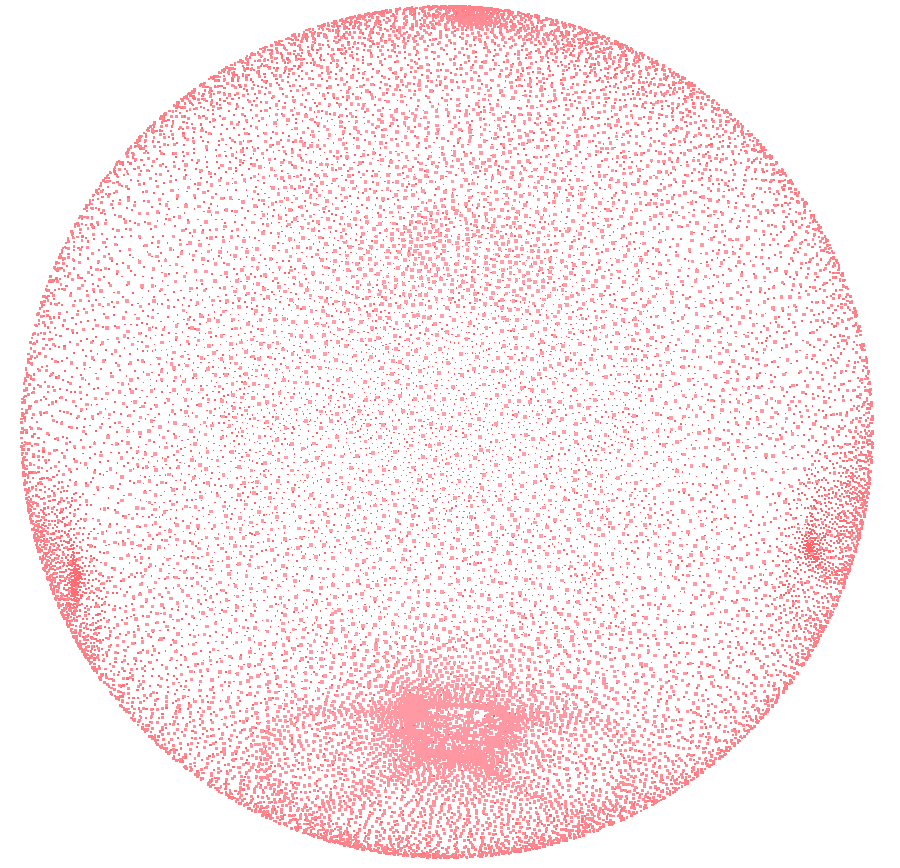}
 \includegraphics[width=0.25\textwidth]{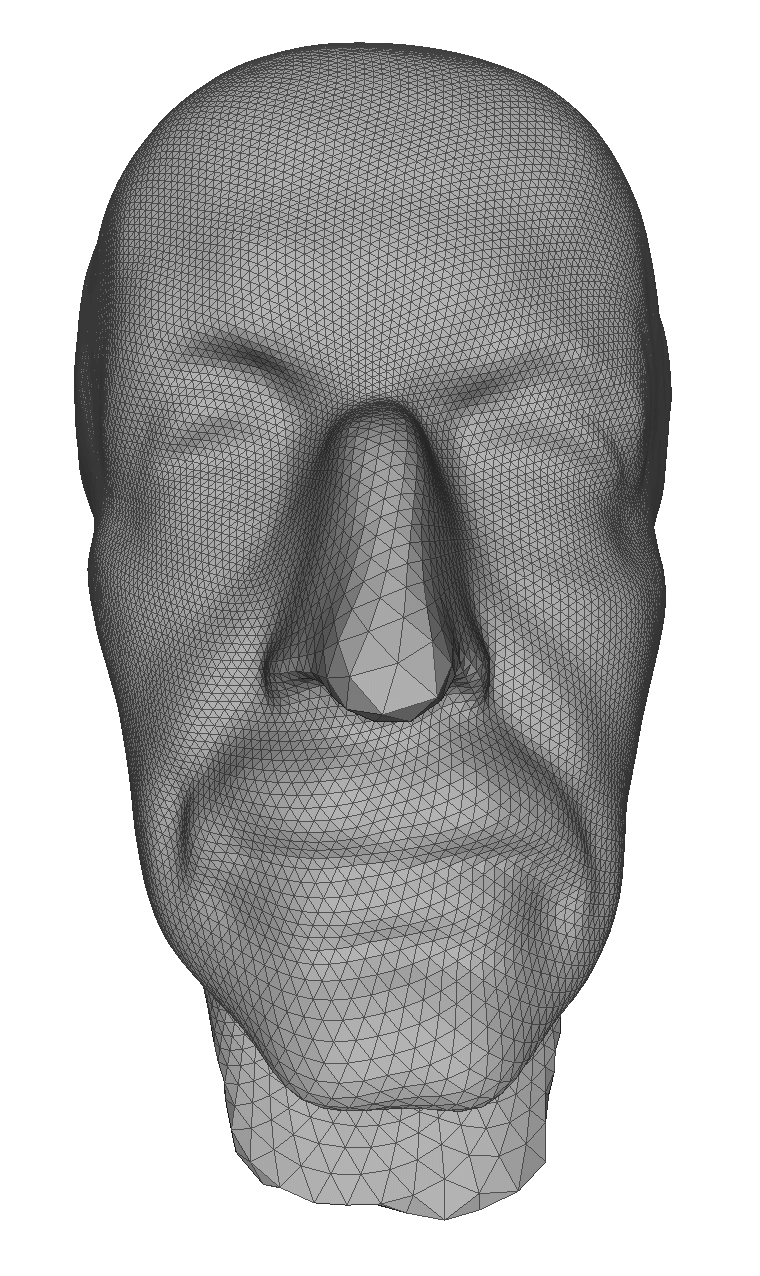}
 \includegraphics[width=0.25\textwidth]{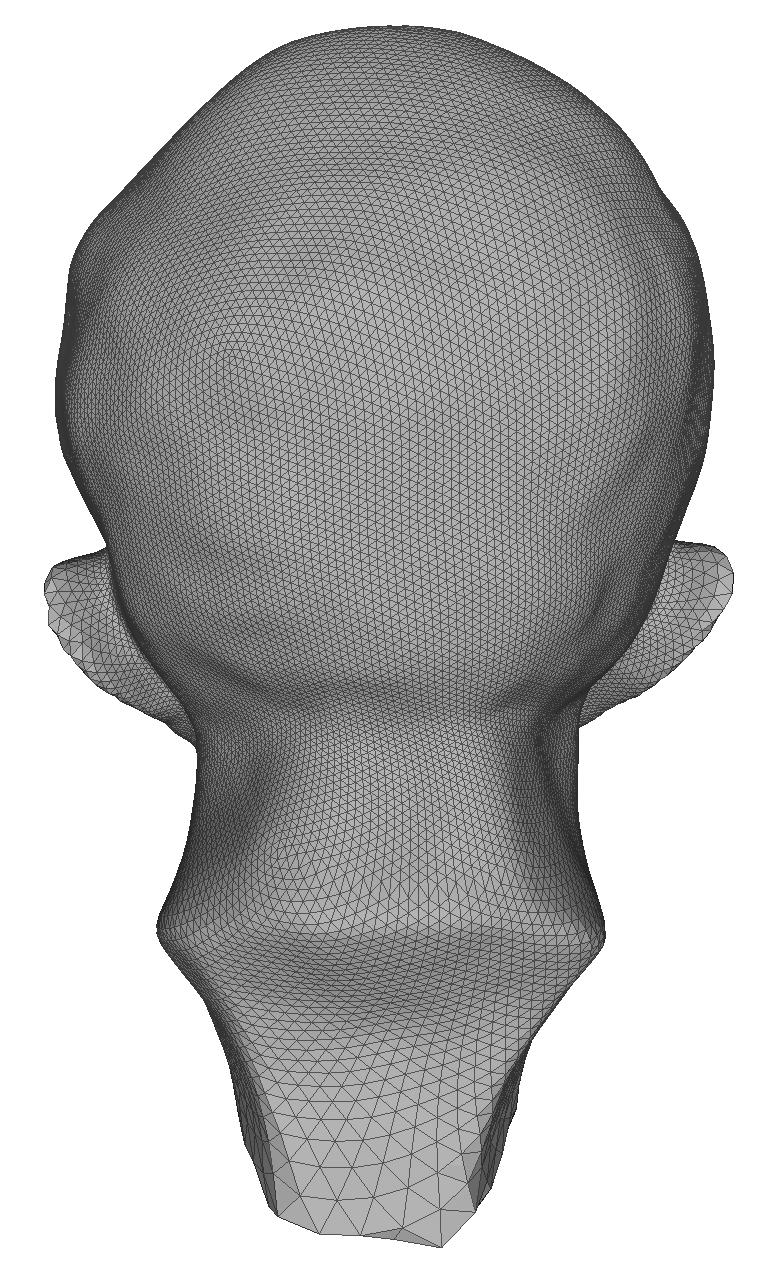}
 \caption{The effect of our balancing scheme on meshing a genus-0 point cloud of Max Planck. Top left: A spherical conformal parameterization without the balancing scheme. Bottom left: A spherical conformal parameterization with the balancing scheme. Middle: The front view of the the meshing results by interpolation with the aid of the parameterizations. Right: The back view.}
 \label{fig:distribution}
\end{figure}

Before ending this subsection, we demonstrate the significance of our proposed balancing scheme in the spherical conformal parameterization. The redistribution is vital for the meshing quality. Figure \ref{fig:distribution} shows the meshing results with and without the redistribution scheme. It can be easily observed that if the spherical parameterization of a genus-0 point cloud is unbalanced, then on the mesh generated by interpolation with the aid of the spherical conformal parameterization, most of the vertices will be concentrated at one small region of the mesh. As a result, most features of the underlying surface are lost. In contrast, with our proposed balancing scheme, a high quality mesh can be effectively generated. The above results reflect the importance of our balancing scheme in the point cloud parameterizations for meshing.

\subsection{Stability under geometrical and topological noises}
\begin{figure}[t]
 \centering
 \includegraphics[width=0.32\textwidth]{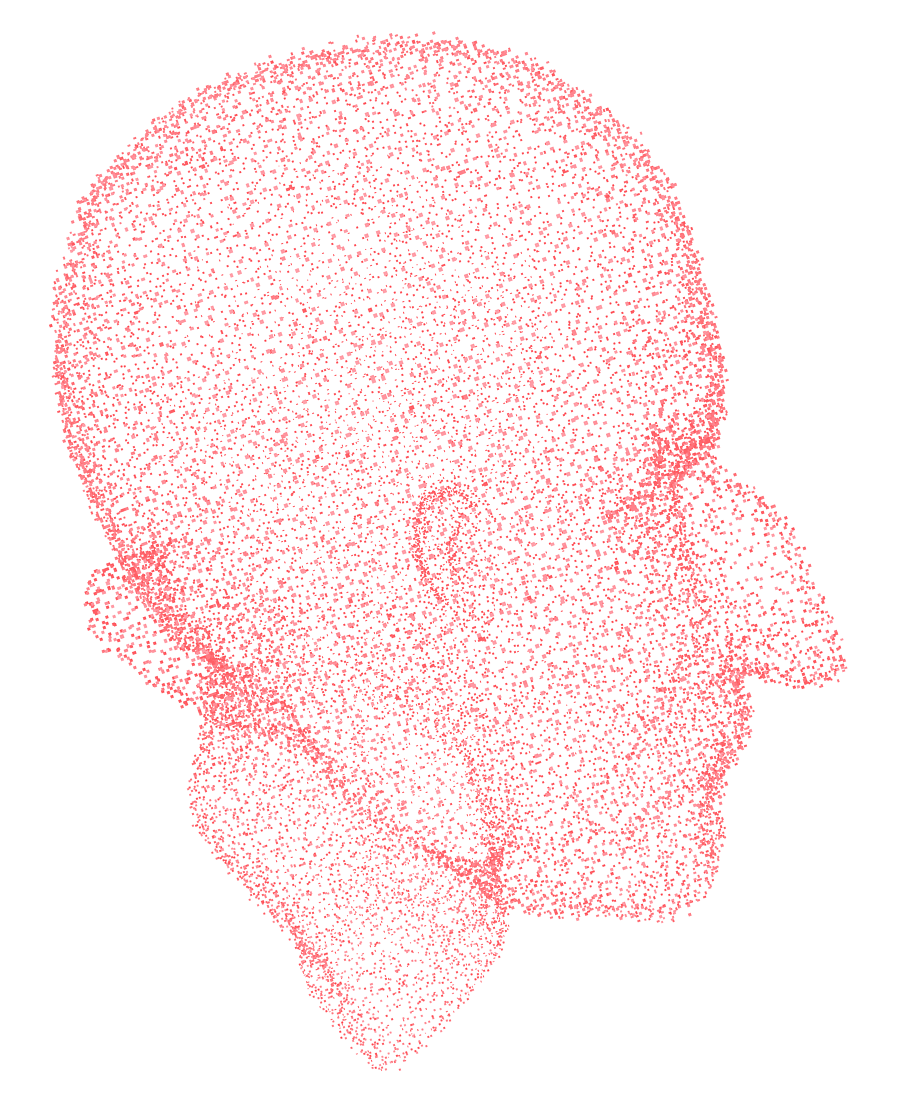}
 \includegraphics[width=0.32\textwidth]{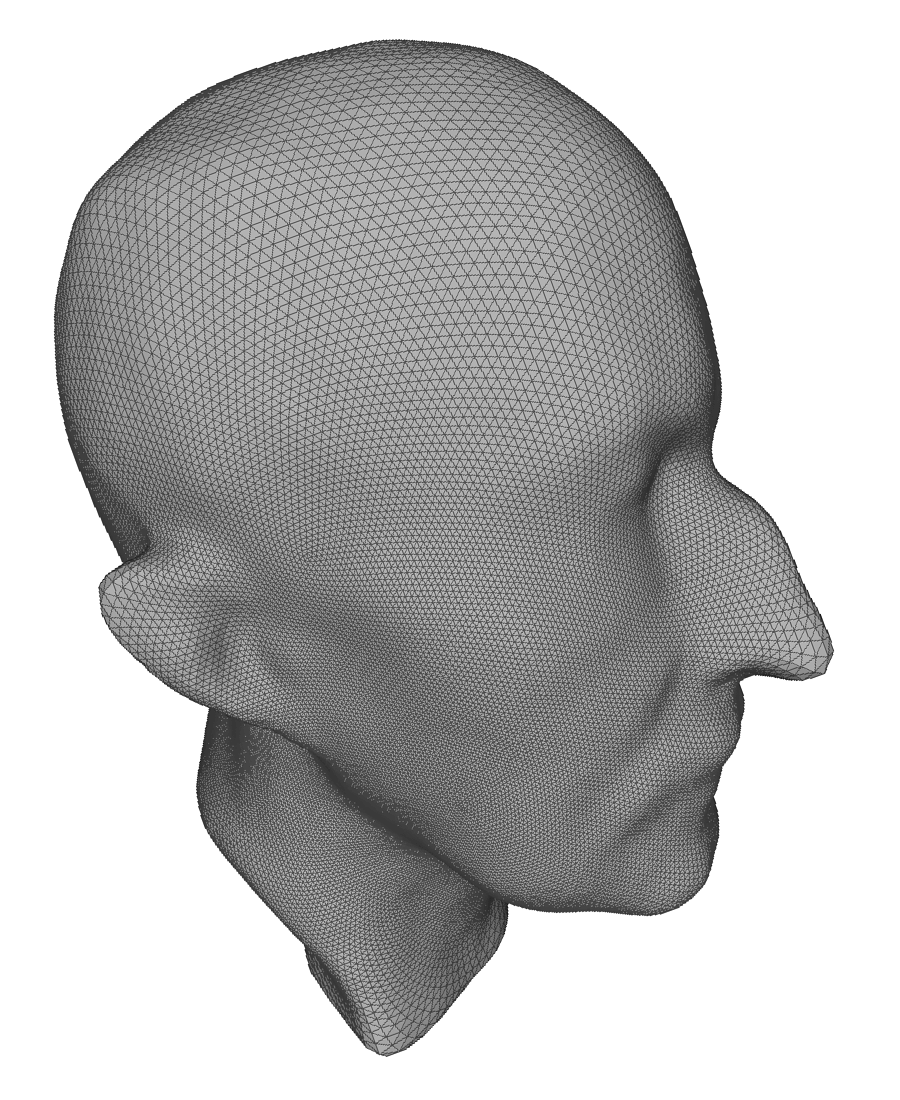}
 \caption{Meshing a geometrically noisy point cloud.}
 \label{fig:geometric_noise}
\end{figure}

\begin{figure}[t]
 \centering
 \includegraphics[width=0.19\textwidth]{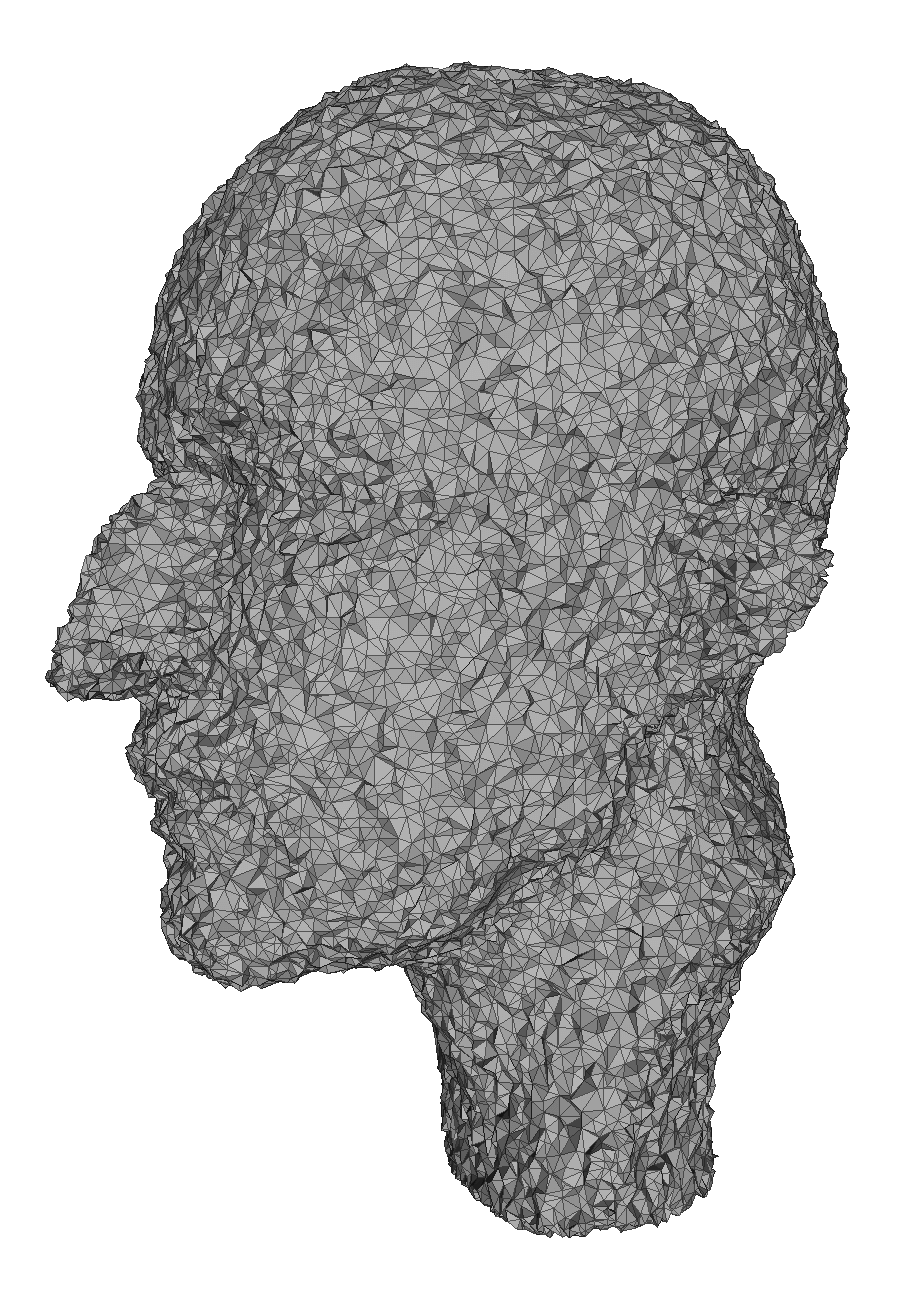}
 \includegraphics[width=0.29\textwidth]{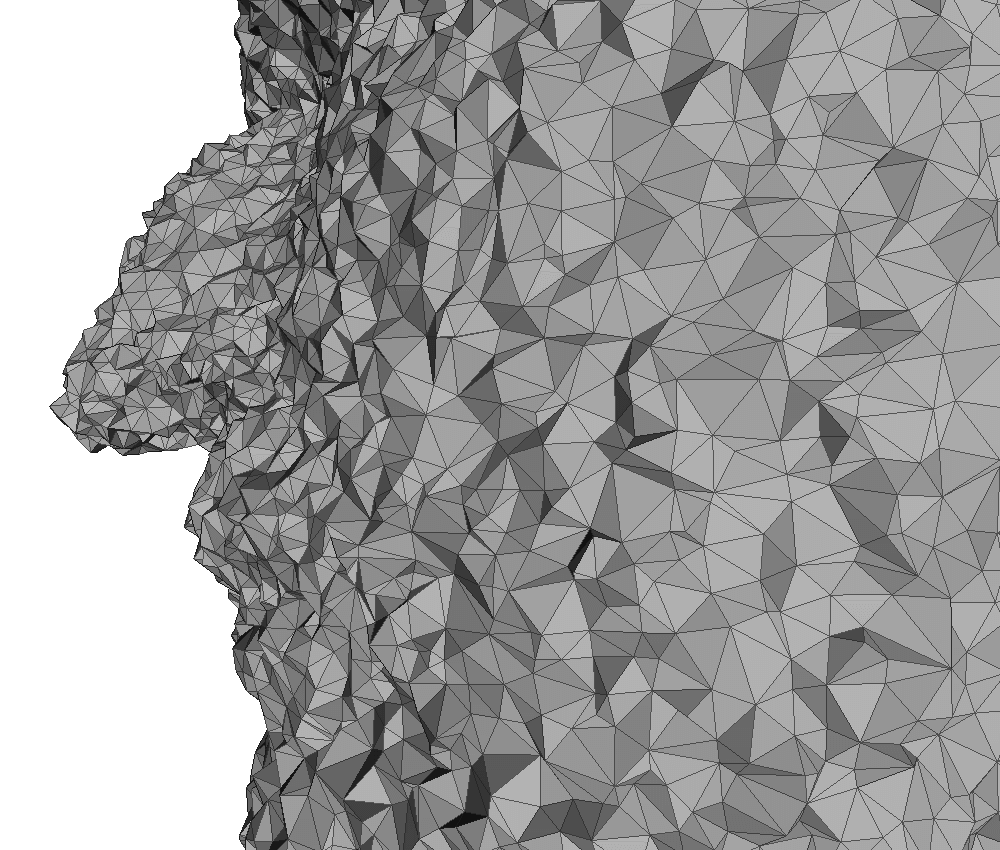}
 \includegraphics[width=0.19\textwidth]{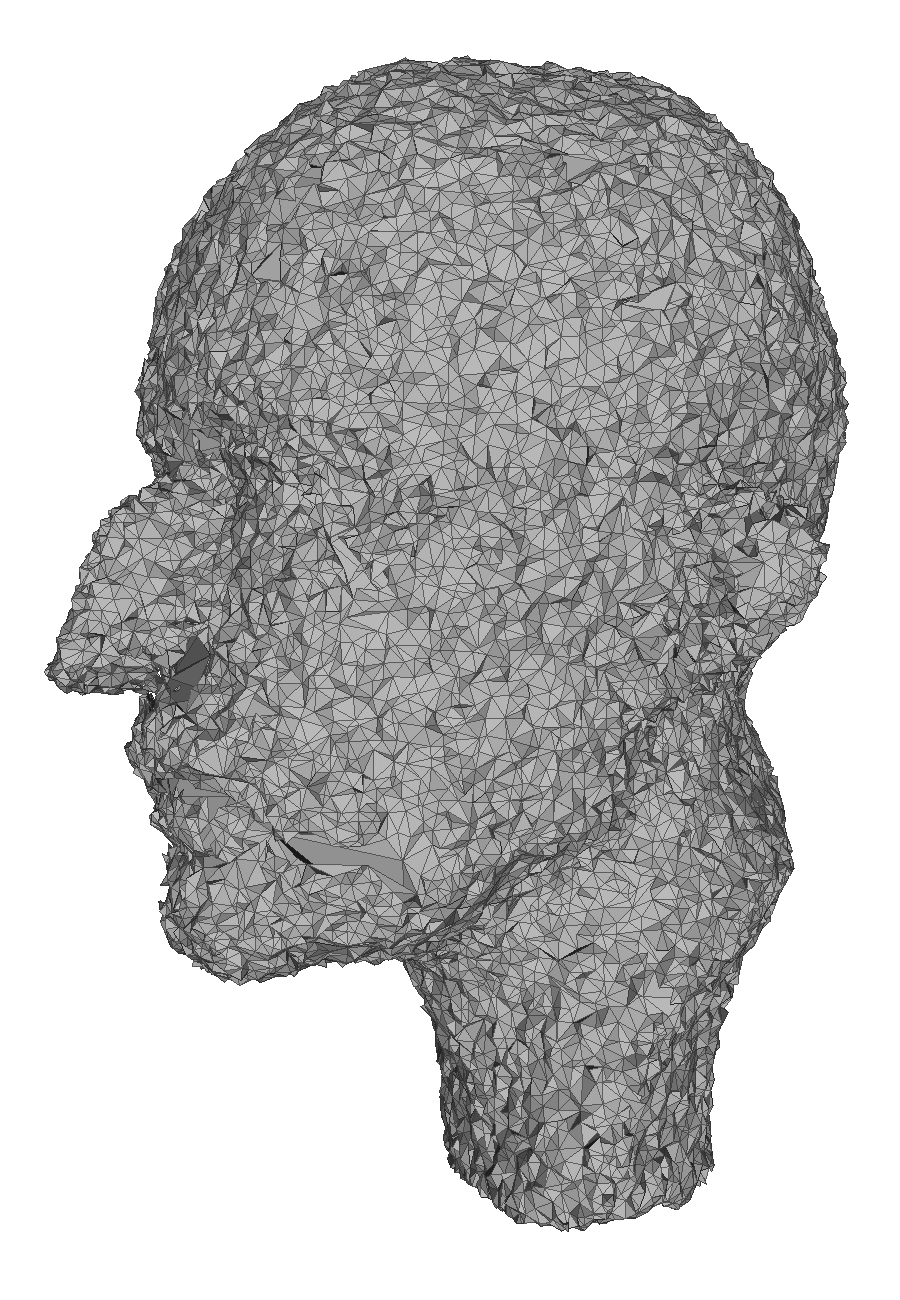}
 \includegraphics[width=0.29\textwidth]{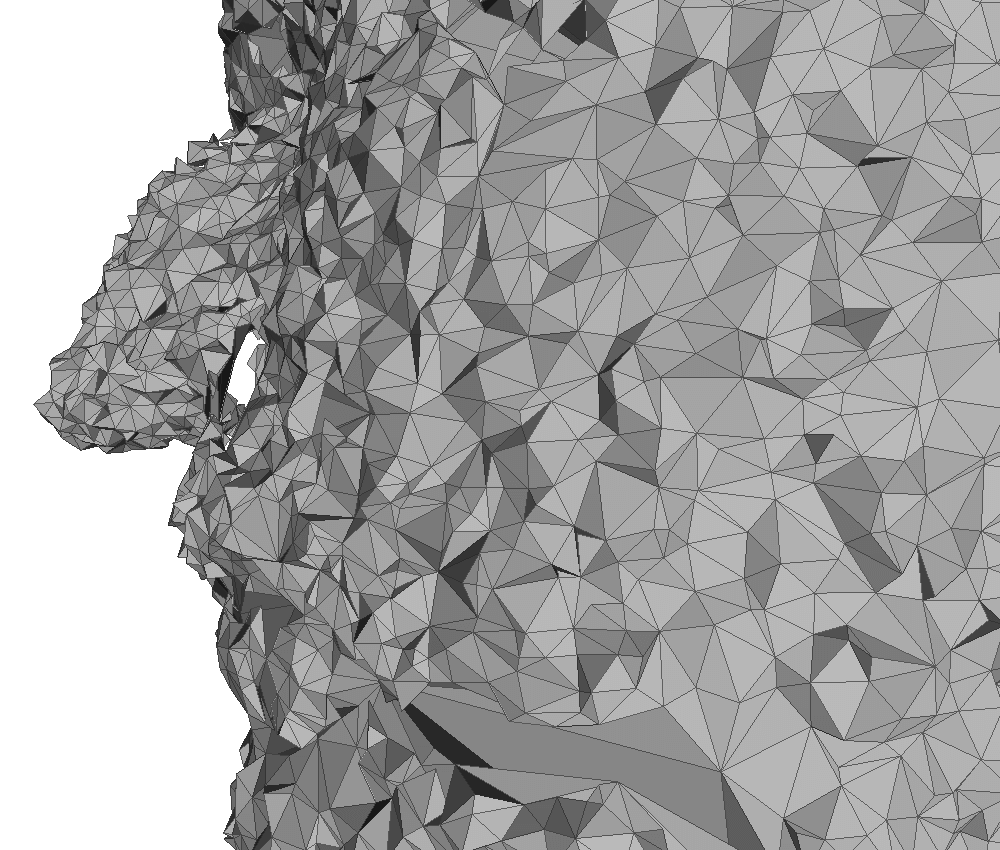}
 \caption{Comparison of our meshing scheme and the Tight Cocone algorithm \cite{Dey03} on a geometrically noisy point cloud. All points are considered in the computations. Left: Our meshing result with a zoom-in of the nose. Right: The result of the Tight Cocone algorithm with a zoom-in of the nose.}
 \label{fig:geometric_noise2}
\end{figure}

\begin{figure}[t]
 \centering
 \includegraphics[width=0.32\textwidth]{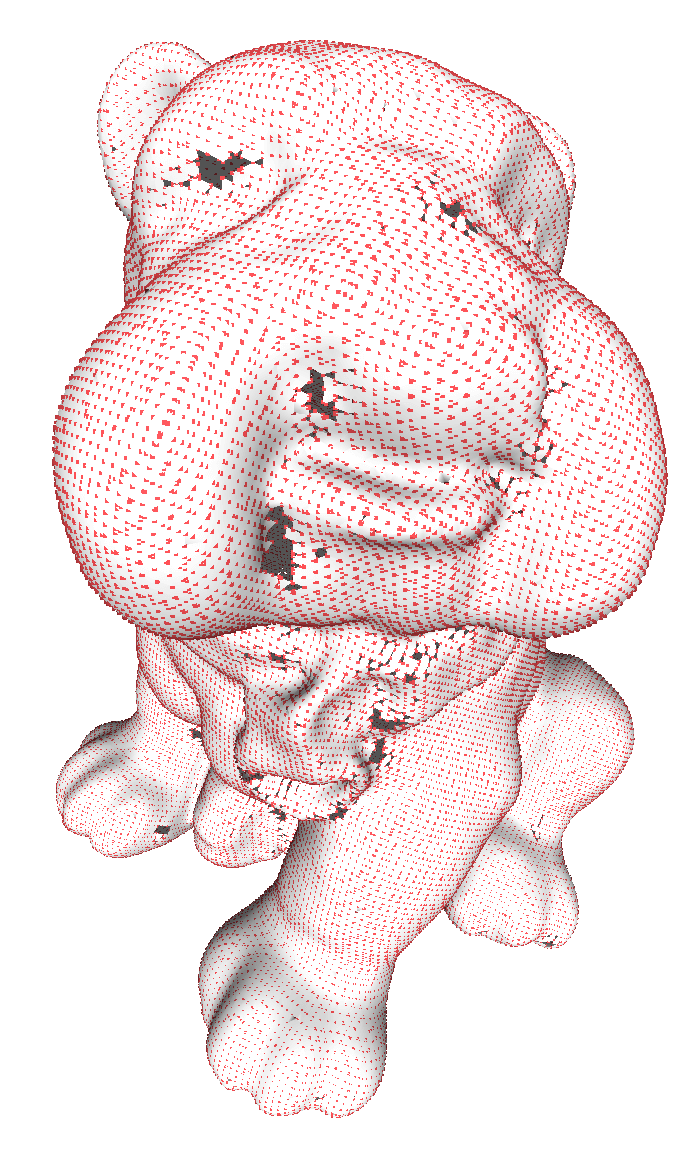}
 \includegraphics[width=0.32\textwidth]{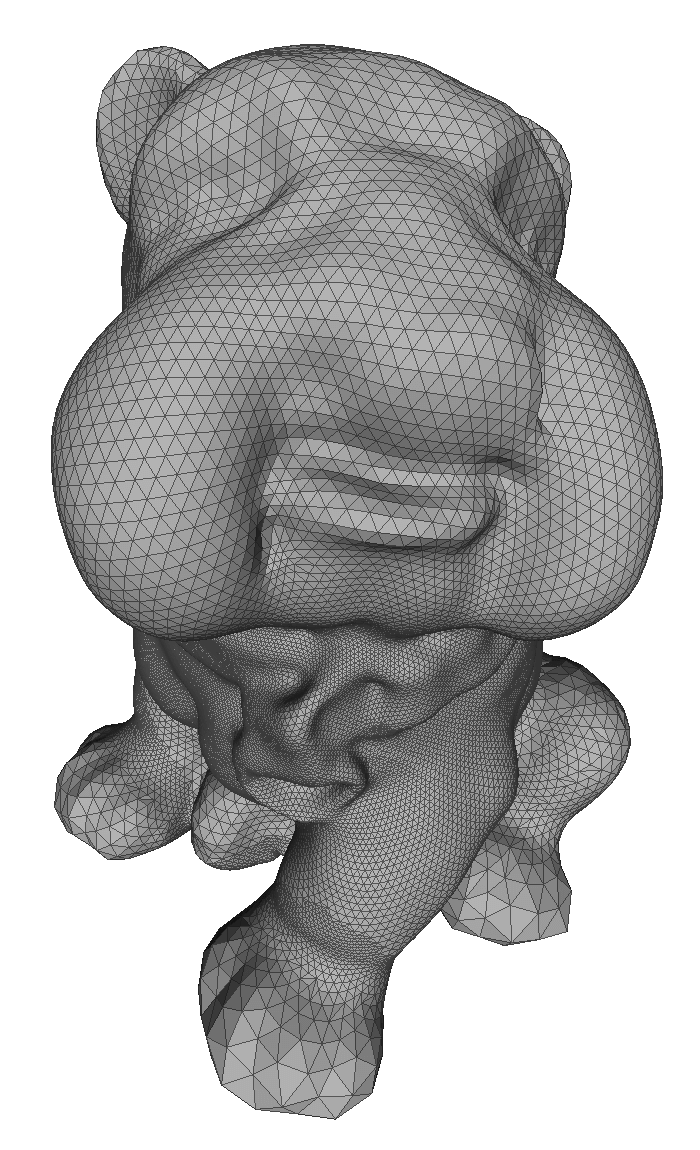}
 \caption{Meshing a topologically noisy point cloud with unwanted holes.}
 \label{fig:topological_noise}
\end{figure}

\begin{figure}[t]
 \centering
 \includegraphics[width=0.27\textwidth]{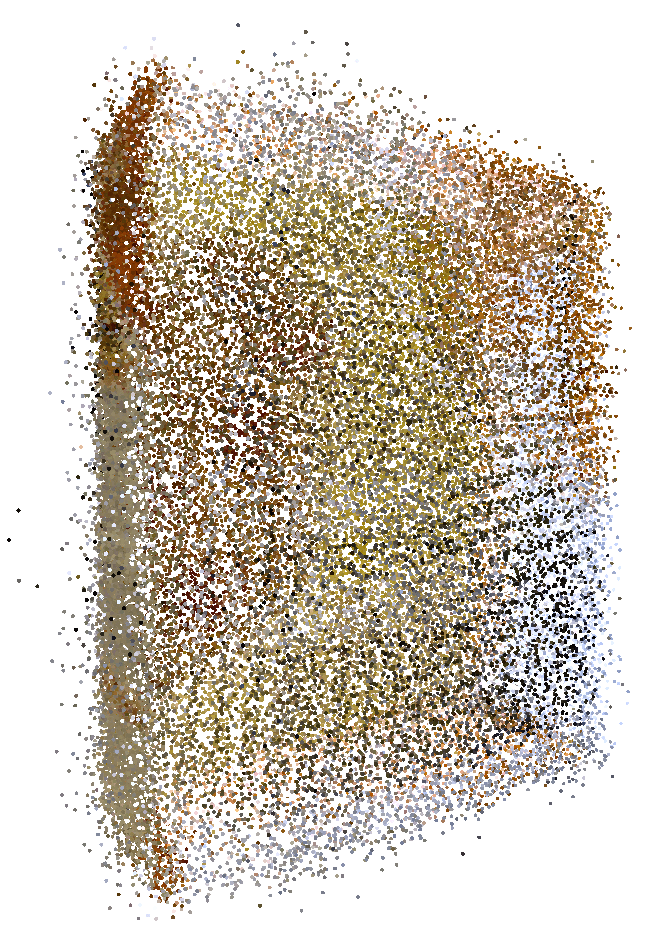}
 \includegraphics[width=0.27\textwidth]{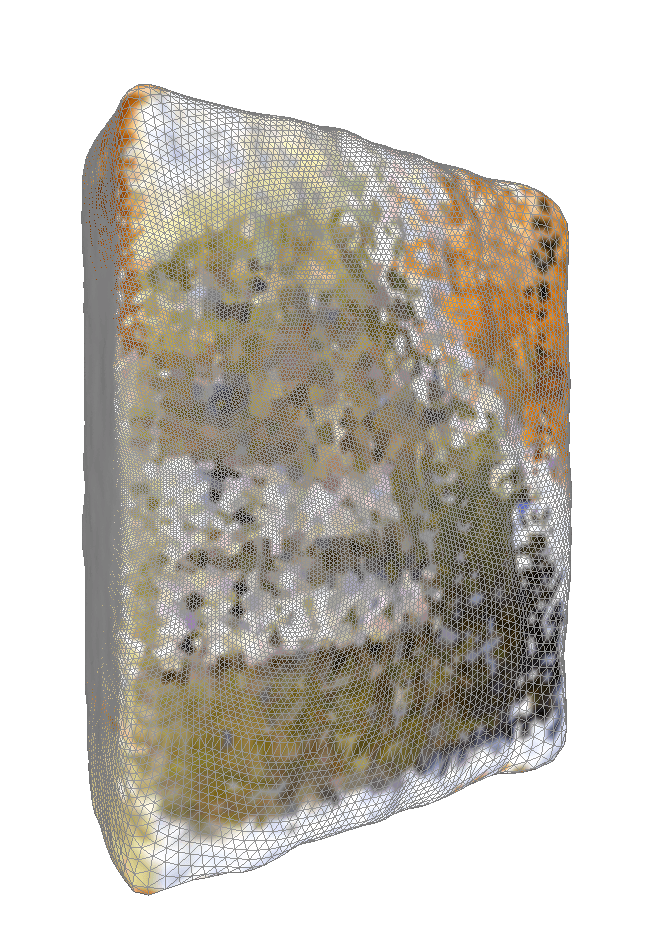}
 \includegraphics[width=0.27\textwidth]{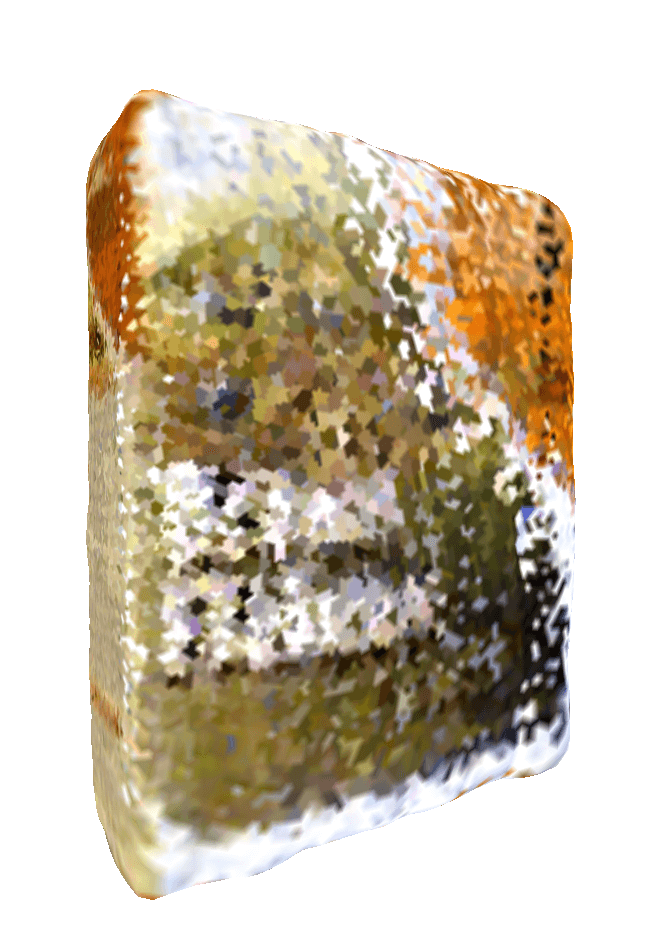}
 \includegraphics[width=0.27\textwidth]{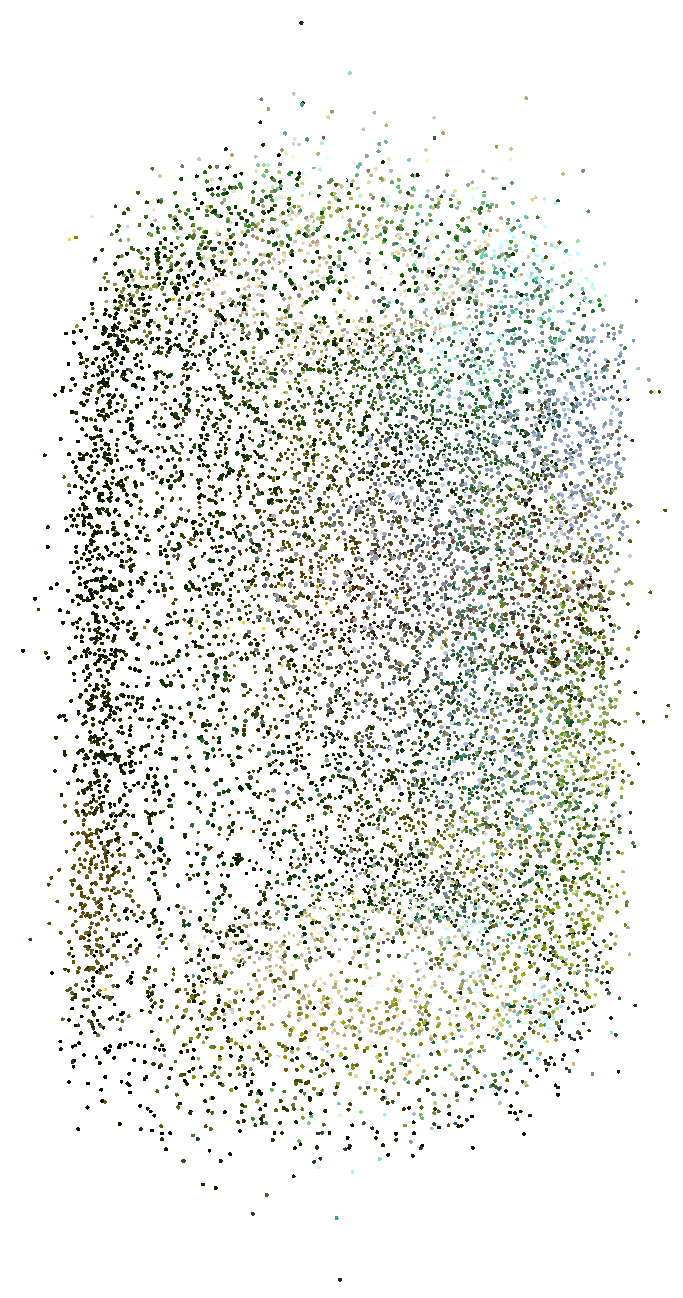}
 \includegraphics[width=0.27\textwidth]{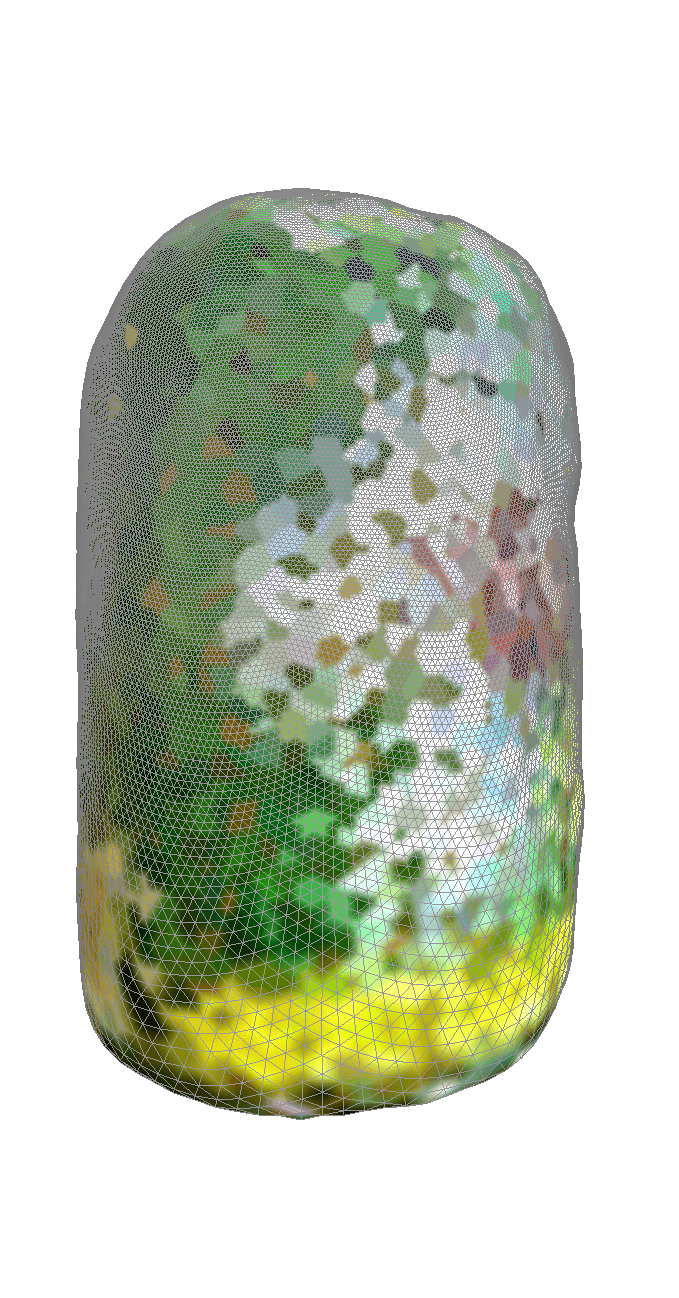}
 \includegraphics[width=0.27\textwidth]{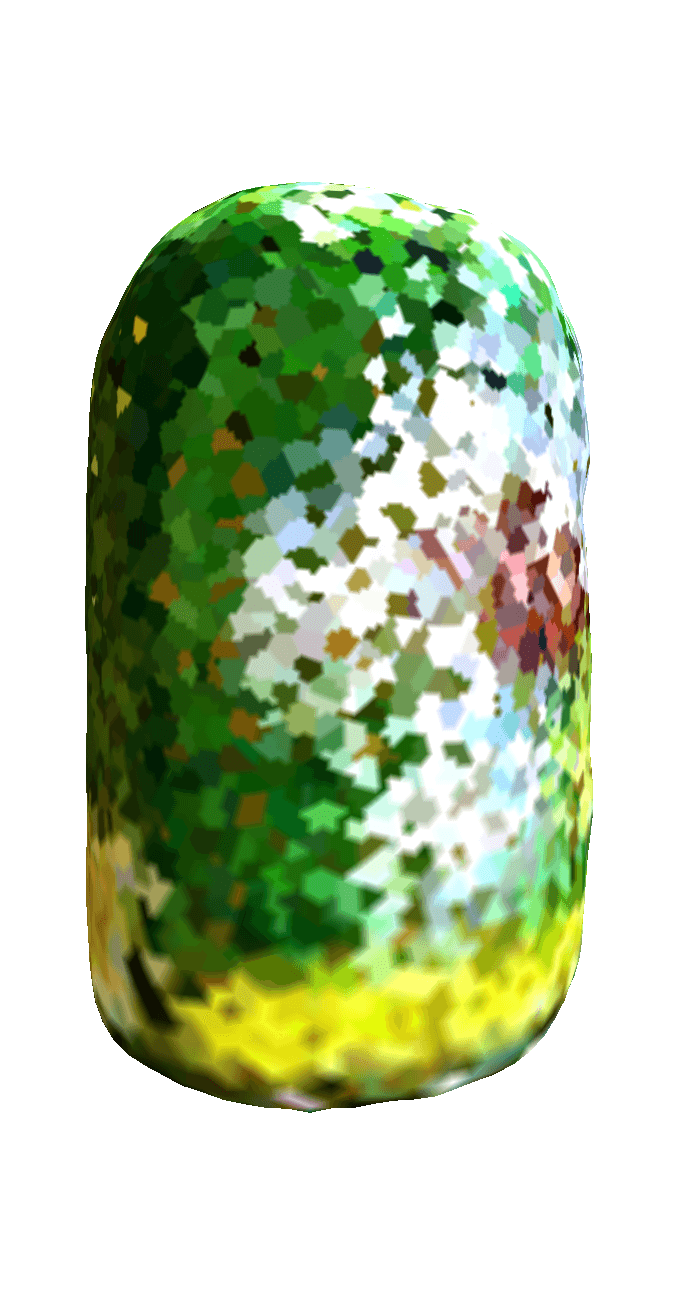}
 \caption{Meshing two real 3D scanned noisy point cloud data of a cereal box and a soda can in the RGB-D Scenes Dataset v.2 \cite{Lai14b}. Left: The raw point cloud data. Middle: The meshing results. Right: The meshing results without showing the triangulations.}
 \label{fig:real_pc}
\end{figure}

Our meshing framework is stable under geometrical and topological noises of the input genus-0 point clouds. In some situations, the point clouds obtained by 3D cameras are geometrically noisy. To compute triangulations which represent the underlying surfaces, we can first apply a Poisson filtering on the noisy point clouds. Then, with the aid of our spherical conformal parameterization, we can obtain high quality triangulations on a uniform spherical point cloud and interpolate them back onto the filtered point clouds to produce meshed surfaces. We first demonstrate the effectiveness of our algorithm by two synthetic examples.

Figure \ref{fig:geometric_noise} shows a synthetic point cloud with 3\% uniformly distributed random noise and our meshing result. We can also construct a faithful triangulated mesh on the geometrically noisy point cloud without any filtering or sampling procedure. Figure \ref{fig:geometric_noise2} shows the triangulation result of our meshing scheme and the Tight Cocone algorithm \cite{Dey03} on the noisy point cloud in Figure \ref{fig:geometric_noise}. All points of the point cloud are considered and fixed in the construction of the triangulation. It can be observed that there are irregular triangulations and topological holes on the result by \cite{Dey03}, while our meshing scheme guarantees a regular and topology preserving triangulation even for noisy input point clouds.

Besides, it is common that the sampling processes result in non-uniformly sampled point clouds. In particular, there may be large holes on certain parts of the point clouds sampled from genus-0 objects, which create topological ambiguities and hinder mesh generations. Our parameterization and meshing scheme produce satisfactory results with these topological noises. Moreover, the meshes generated are guaranteed to be genus-0 closed meshes. Figure \ref{fig:topological_noise} shows a synthetic point cloud with 1021 randomly created topological holes. It can be observed that our algorithm produces a satisfactory meshing result.

Then, we apply our algorithm for real 3D scanned noisy point cloud data. Several raw point cloud data are adapted from the RGB-D Scenes Dataset v.2 \cite{Lai14b}. Figure \ref{fig:real_pc} shows two point clouds of a soda can and a cereal box, and our meshing results. The above experiments demonstrate the stability and robustness of our proposed method for noisy point clouds.

\subsection{Multilevel representations of genus-0 point clouds}

\begin{figure}[t]
 \centering
 \includegraphics[width=0.27\textwidth]{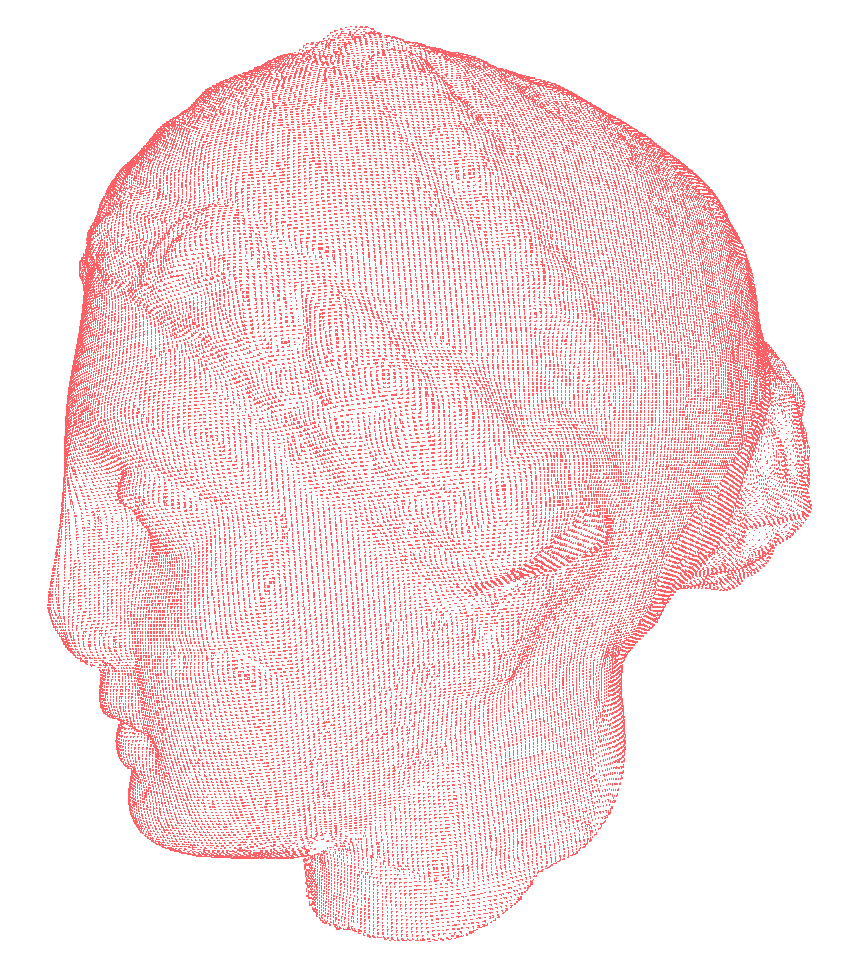}
 \includegraphics[width=0.27\textwidth]{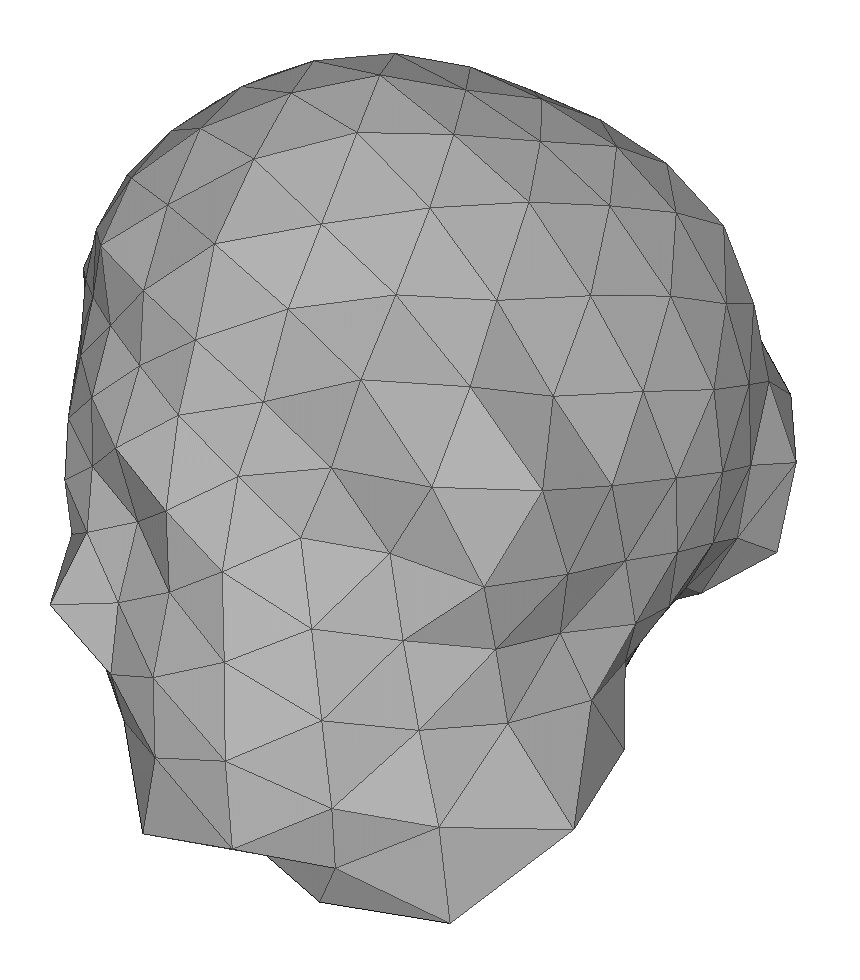}
 \includegraphics[width=0.27\textwidth]{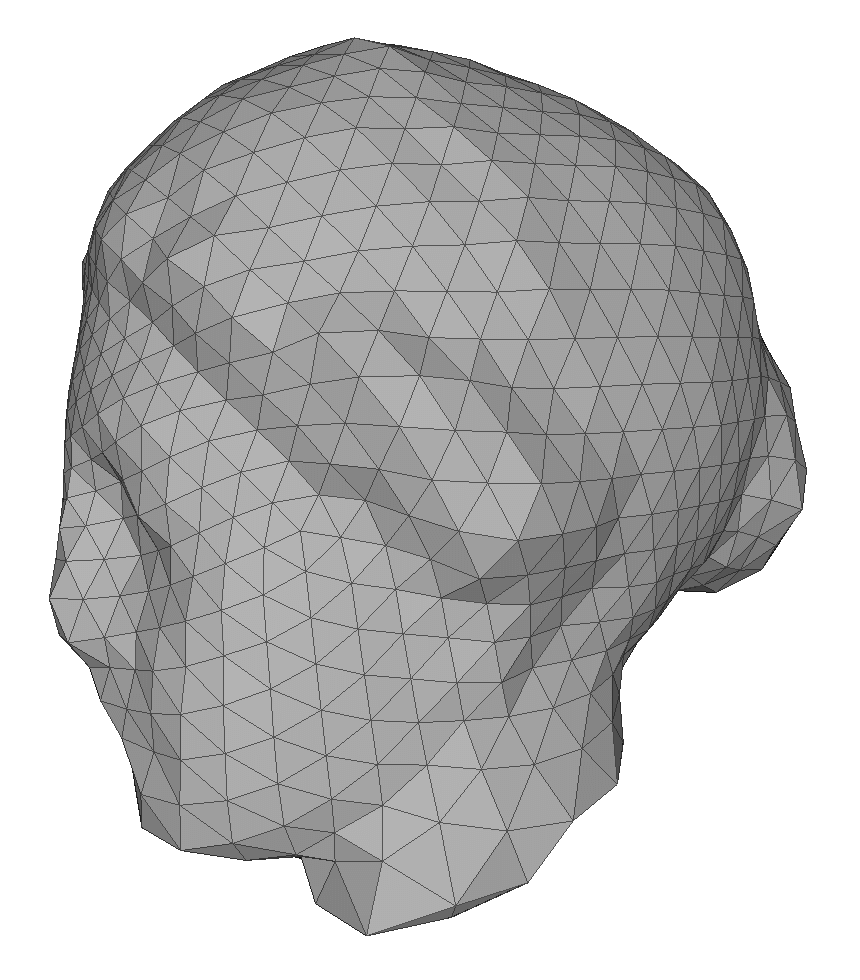}
 \includegraphics[width=0.27\textwidth]{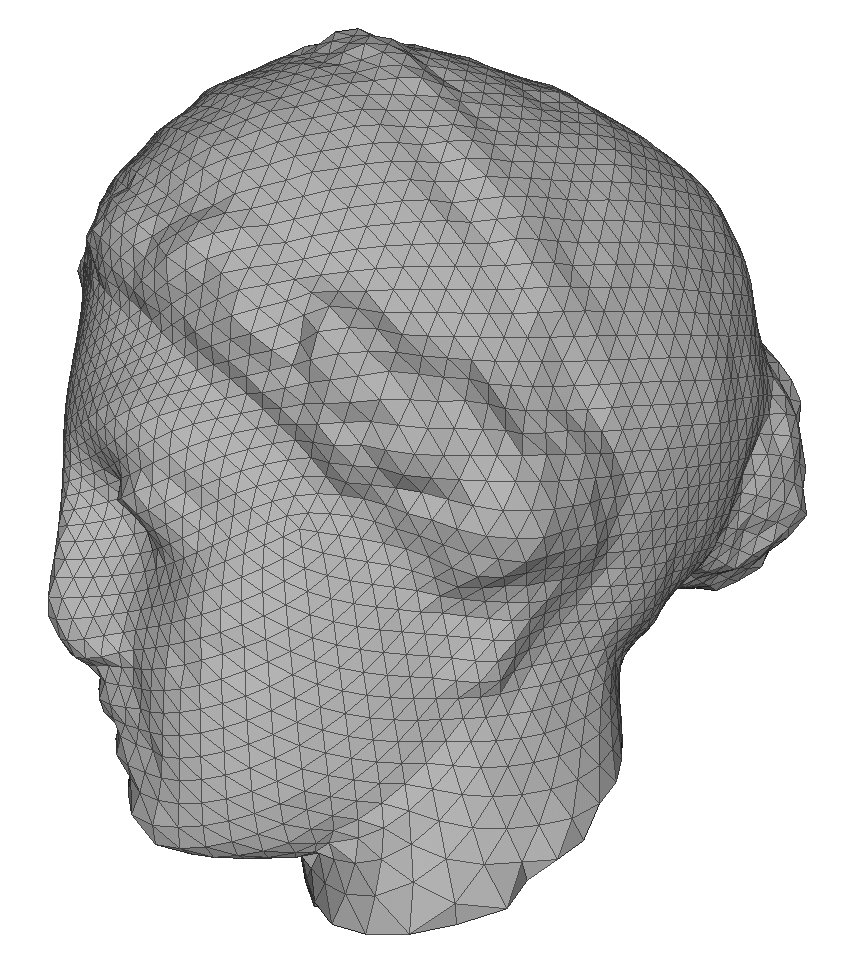}
 \includegraphics[width=0.27\textwidth]{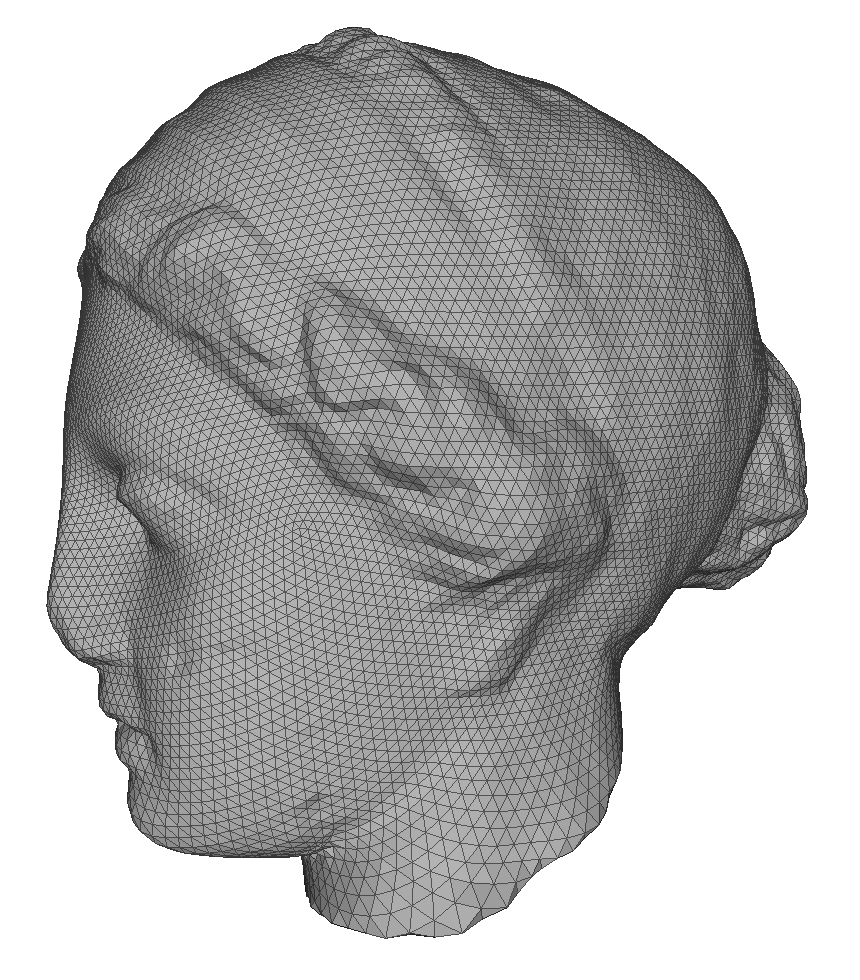}
 \includegraphics[width=0.27\textwidth]{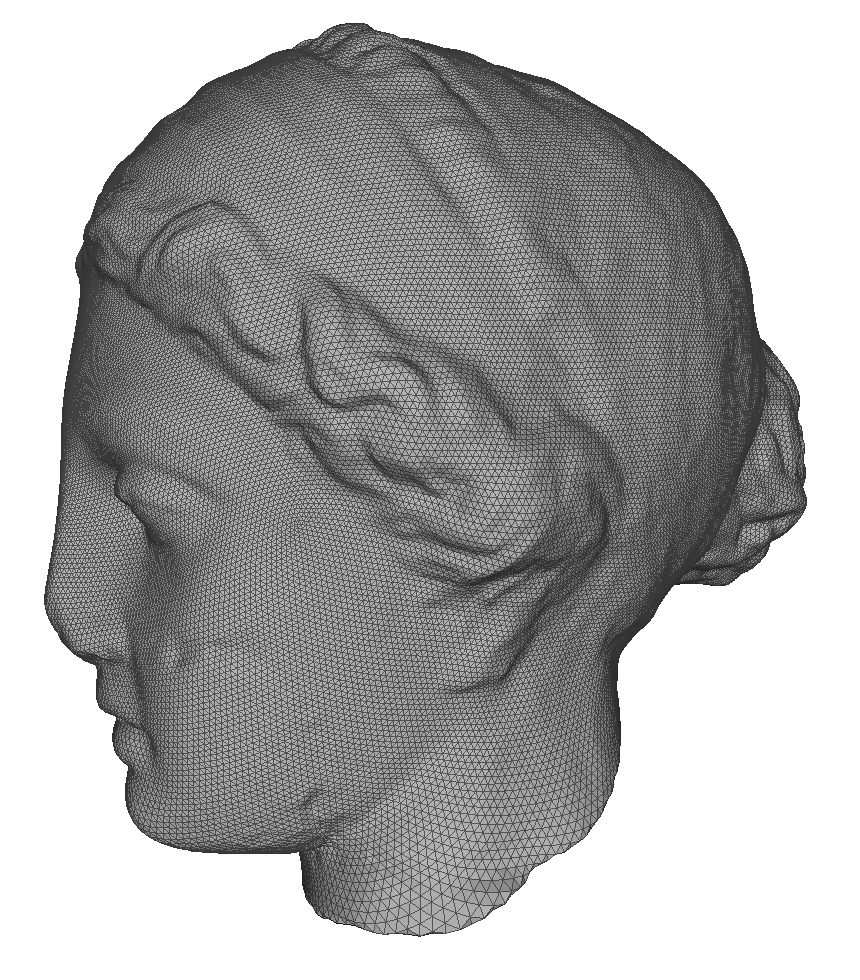}
 \caption{Multilevel representations of a genus-0 point cloud of Igea. For better visualizations, we create mesh structures on the representations. Top left: a point cloud with 134345 points of Igea. Top middle to Bottom right: the multilevel representations with 0, 1, 2, 3 and 4 subdivisions. The representations are with 642, 2562, 10242, 40962 and 163842 points respectively.}
 \label{fig:igea_multilevel}
\end{figure}

\begin{figure}[t]
 \centering
 \includegraphics[width=0.27\textwidth]{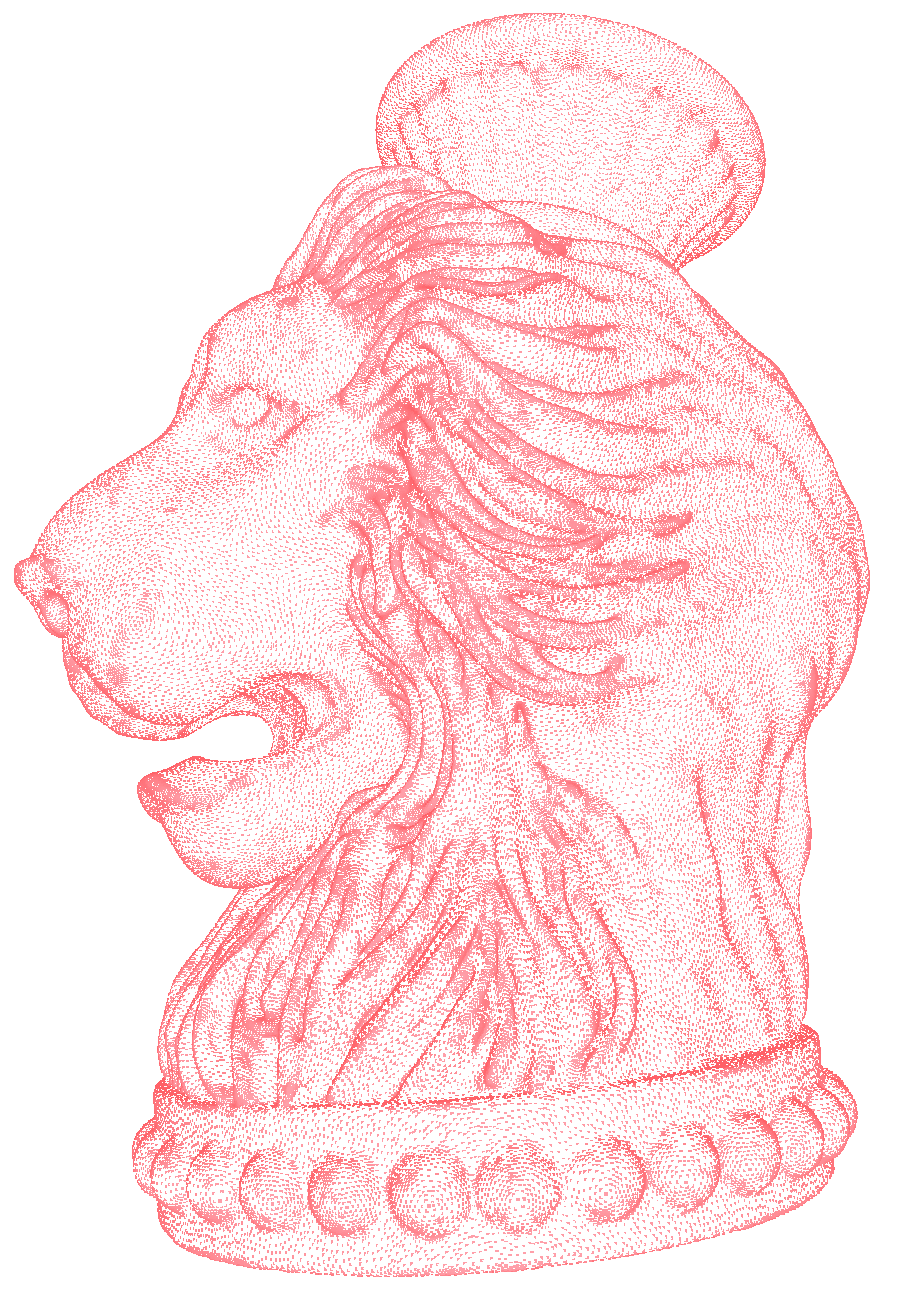}
 \includegraphics[width=0.27\textwidth]{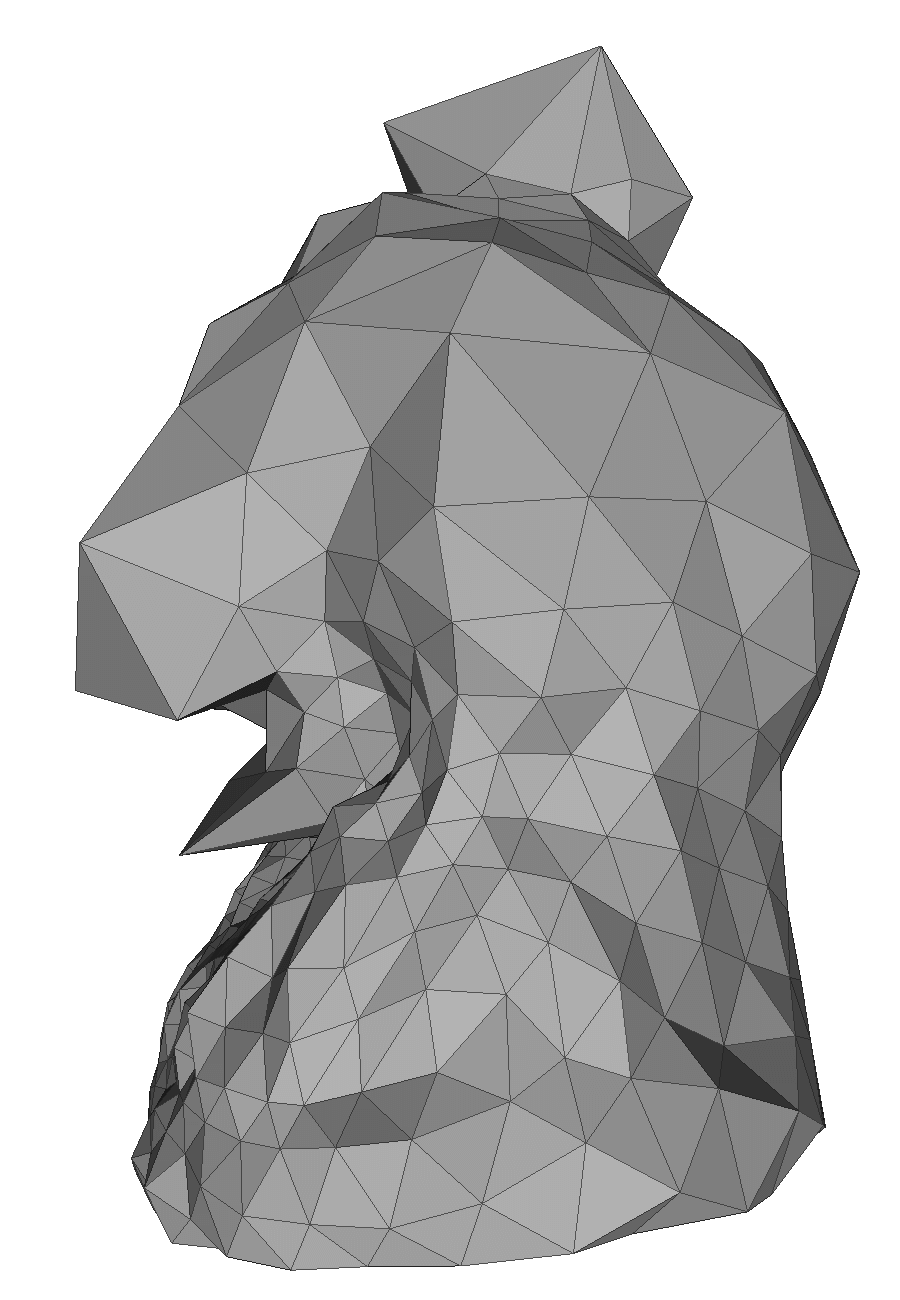}
 \includegraphics[width=0.27\textwidth]{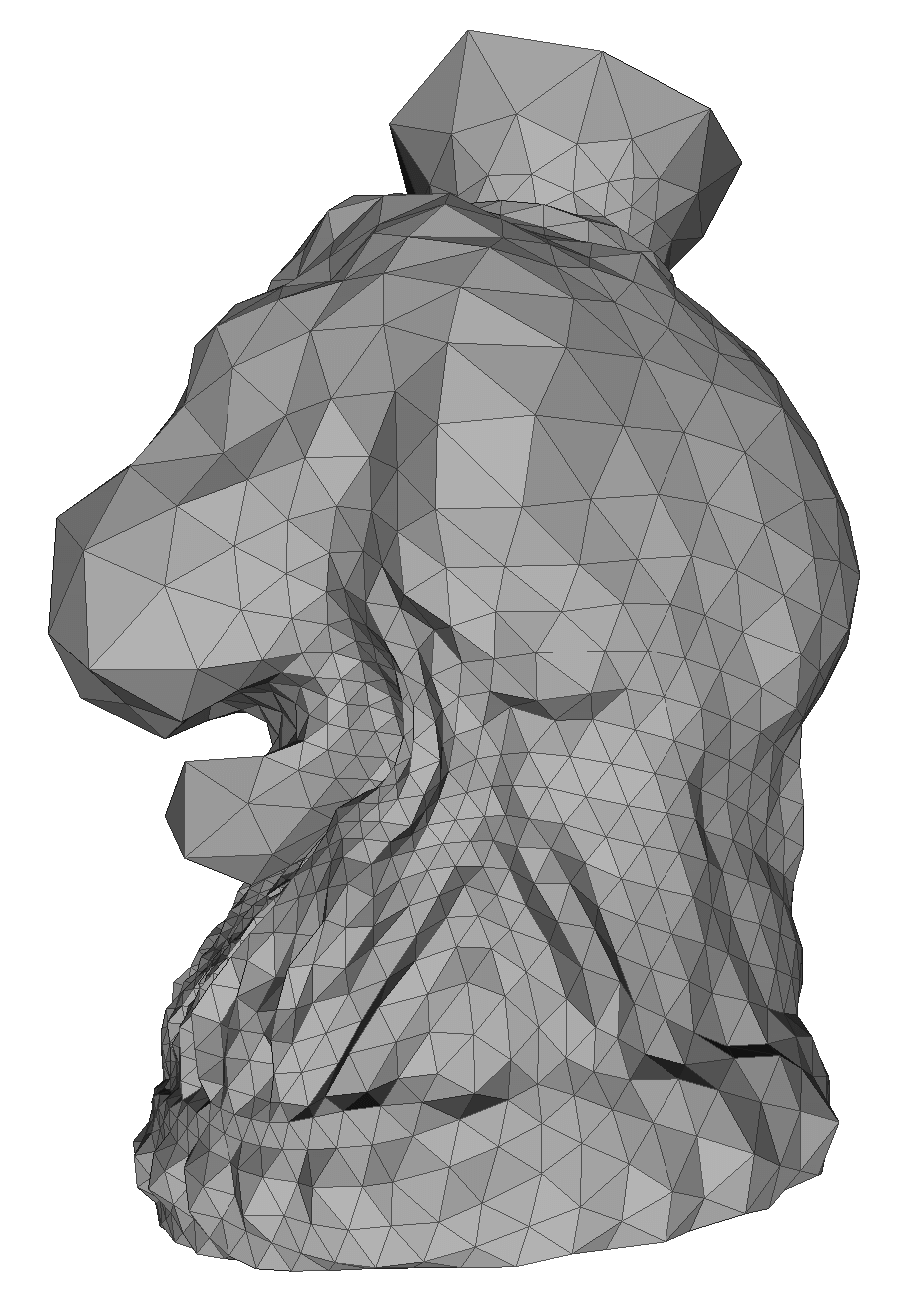}
 \includegraphics[width=0.27\textwidth]{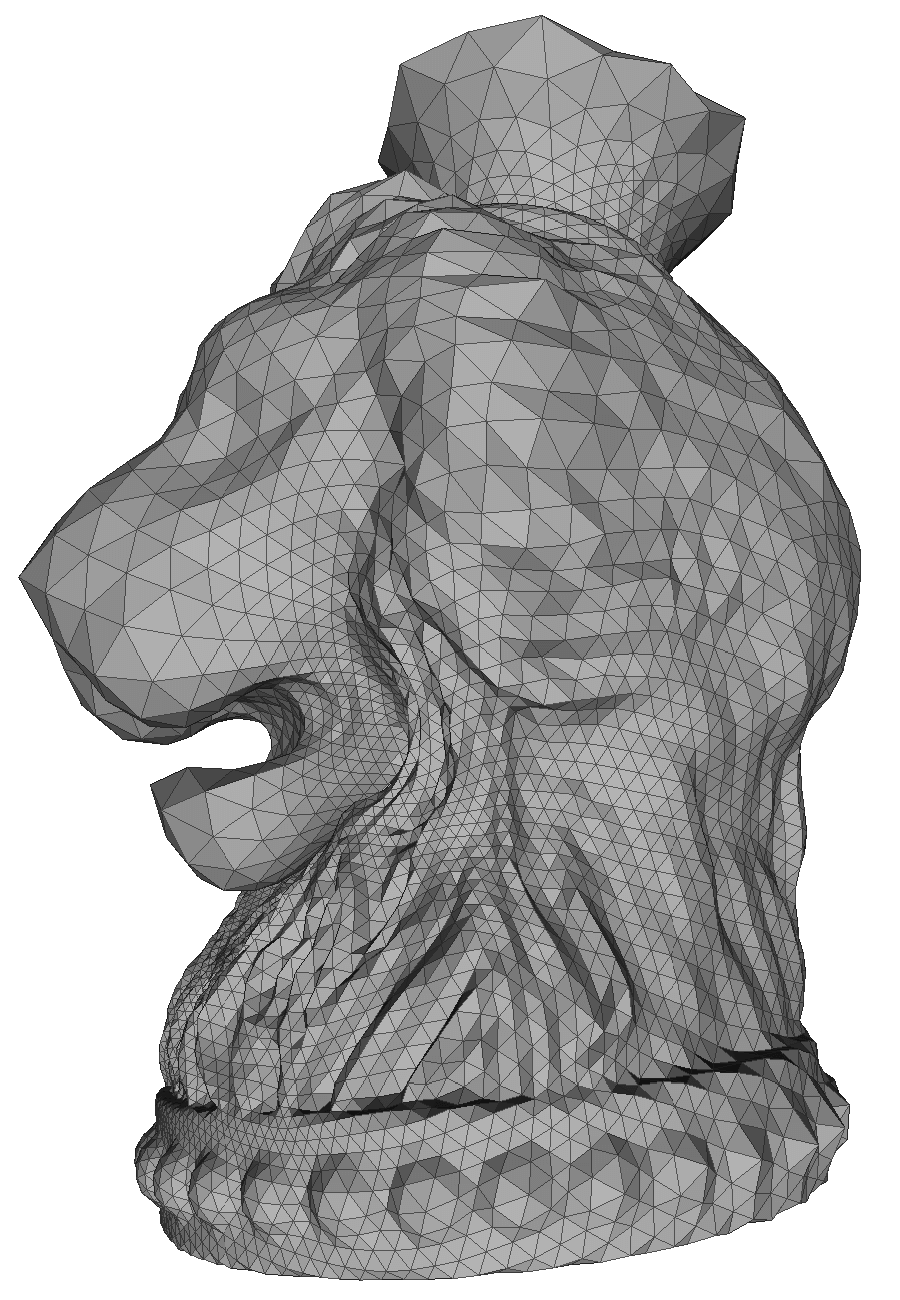}
 \includegraphics[width=0.27\textwidth]{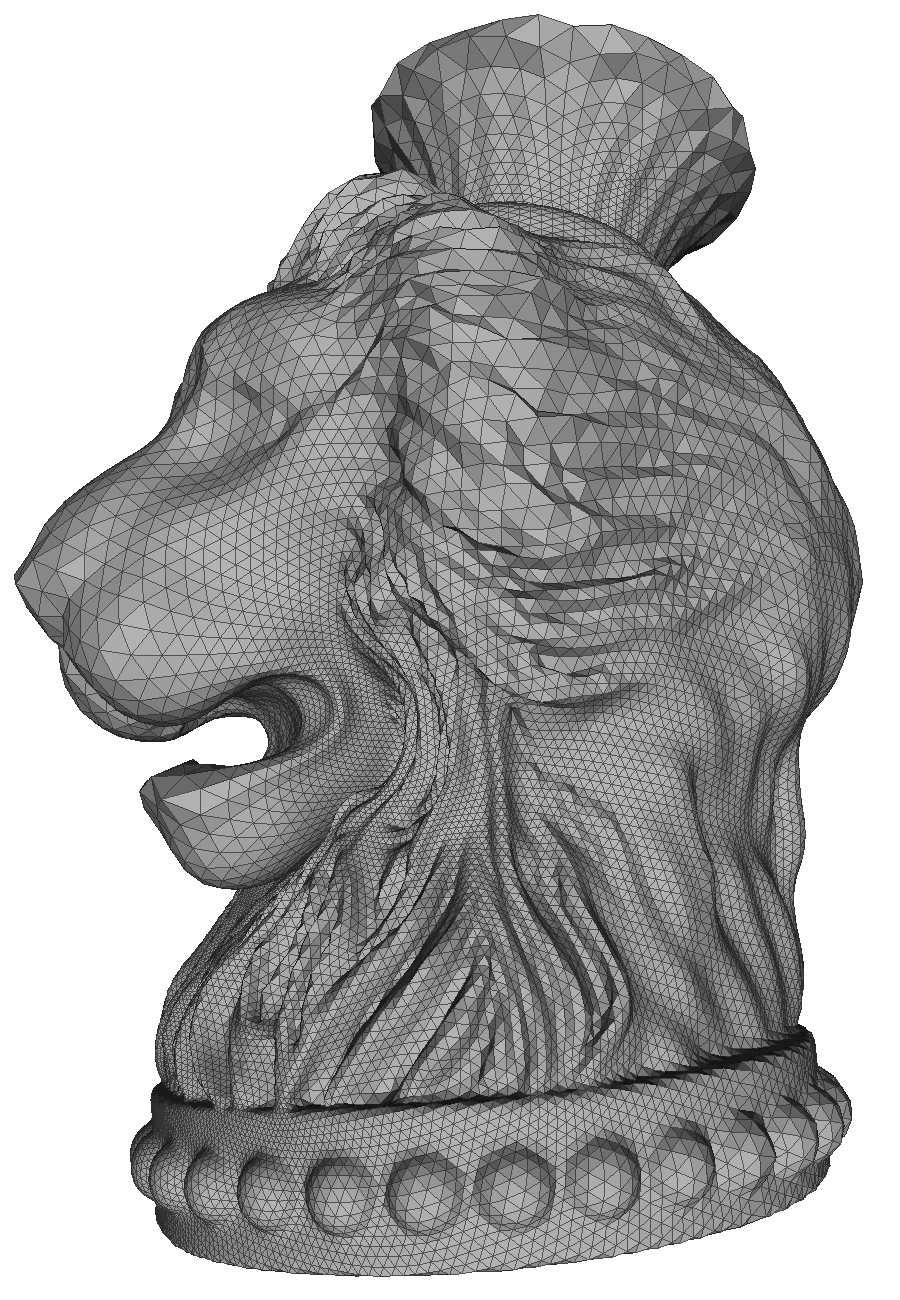}
 \includegraphics[width=0.27\textwidth]{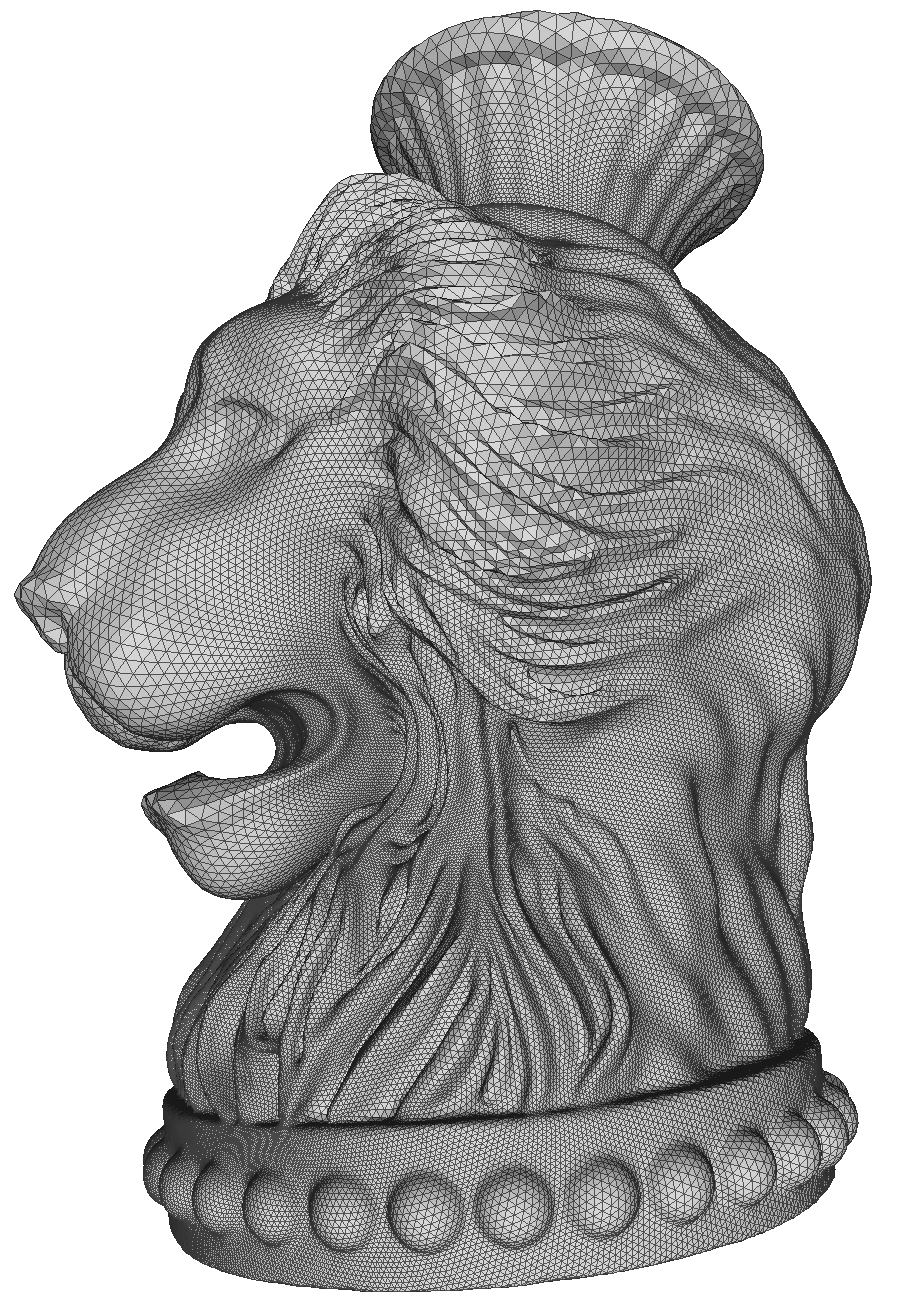}
 \caption{Multilevel representations of a genus-0 lion vase point cloud. For better visualizations, we create mesh structures on the representations. Top left: a lion vase point cloud with 256094 points. Top middle to Bottom right: the multilevel representations with 0, 1, 2, 3 and 4 subdivisions. The representations are with 696, 2778, 11106, 44418 and 177666 points respectively.}
 \label{fig:vaselion_multilevel}
\end{figure}

With our proposed spherical conformal parameterization scheme, multilevel representations of a genus-0 point cloud can be easily achieved. We start with a coarse spherical point cloud. The vertices on the sphere can be interpolated onto the genus-0 point cloud with the aid of its spherical parameterization. Then, we can progressively subdivide the sphere using existing subdivision methods, such as the butterfly subdivision method \cite{Dyn90} and the loop subdivision method \cite{Loop87}. For each subdivided sphere, we can repeat the mentioned interpolation procedure and obtain a coarse representation of the point cloud. This method results in multilevel representations of the point cloud. As the subdivision level increases, more details of the point cloud are represented. Examples of multilevel representations of genus-0 point clouds are given in Figure \ref{fig:igea_multilevel} and Figure \ref{fig:vaselion_multilevel}. In our examples, the subdivisions are generated using the loop subdivision method \cite{Loop87}. The subdivision connectivity of the results can be easily observed. The results indicate that our method can effectively generate the multilevel representations of genus-0 point clouds.

\section{Conclusion and Future Work} \label{conclusion}
In this paper, we presented a novel framework for meshing genus-0 point clouds via global spherical conformal parameterizations. We extended and improved the parameterization algorithm for triangular meshes in \cite{Choi15a}. Firstly, we enhanced the accuracy for approximating the LB operator on point clouds by using a new Gaussian-type weight function. Secondly, we proposed an iterative scheme called the N-S reiteration to replace the step of solving for a quasi-conformal map in \cite{Choi15a} for achieving better conformality. Thirdly, we introduced a balancing scheme for guaranteeing an even distribution of the spherical point cloud parameterization. Experimental results show that our proposed algorithm is highly efficient and accurate. With the aid of the spherical conformal parameterizations, almost-Delaunay triangulations and high quality quadrangulations of genus-0 point clouds can be effectively created. The meshes generated are guaranteed to be of genus-0 and no post-processing is needed. Besides, our meshing method is stable under geometrical and topological noises on point clouds. Moreover, multilevel representations of genus-0 point clouds can be easily computed. As a remark, our proposed spherical conformal parameterization algorithm also works efficiently on triangular meshes. In the future, we plan to establish a rigorous theoretical proof of the convergence of our parameterization scheme, and extend our method to handle disk-type point clouds and point clouds with arbitrary topology.

\end{document}